\newcommand{\Size}{1cm}
\tikzset{Square/.style={
    inner sep=0pt,
    text width=\Size, 
    minimum size=\Size,
    draw=black,
    align=center,
    }
}
\newtheorem{lemma}{Lemma}
\newtheorem{theorem}{Theorem}
\newtheorem{corollary}{Corollary}
\theoremstyle{remark}
\theoremstyle{remark}
\newcommand{\sat}{\mbox{\sc{SAT}}}
\newcommand{\sg}{\mathrm{sg}}
\begin{document}

\title{{Generalized Satisfiability Problems \\
via Operator Assignments}}

\author{Albert Atserias$^1$  \;\;\; Phokion G. Kolaitis$^2$  \;\;\; Simone Severini$^3$ \\ \; \\ $^1$ Universitat Polit\`ecnica de Catalunya \\
$^2$ University of California Santa Cruz and IBM Research--Almaden \\
$^3$ University College London and Shanghai Jiao Tong University}

\maketitle

\begin{abstract}
\noindentSchaefer introduced a framework for generalized satisfiability problems on the Boolean domain and characterized the computational complexity of such problems.
We investigate an algebraization of Schaefer's framework in which the Fourier transform is used to represent constraints by
multilinear polynomials in a unique way. The polynomial representation of constraints gives rise to a relaxation of the notion of satisfiability in which the values to variables are linear operators on some Hilbert space. For the case of constraints given by a system of linear equations over the two-element field, this relaxation has received considerable attention in the foundations of quantum mechanics, where such constructions as the Mermin-Peres magic square show that there are systems that have no solutions in the Boolean domain, but have solutions via operator assignments on some finite-dimensional Hilbert space.
We obtain a complete characterization of the classes of Boolean relations
for which there is a gap between satisfiability in the Boolean domain and the relaxation
of satisfiability via operator assignments.
To establish our main result, we adapt the notion of primitive-positive definability  (pp-definability) to our setting, a notion that has been used extensively in the study of constraint satisfaction problems. Here, we show
that  pp-definability gives rise to  gadget
reductions that preserve satisfiability gaps. We also present several additional applications of this method. In particular and perhaps surprisingly,
we show that the relaxed notion of
pp-definability in which the quantified variables are allowed to range
over operator assignments gives no additional expressive power
in defining Boolean relations.

\end{abstract}

\section{Introduction and Summary of Results} \label{sec:introduction}

In 1978,   Schaefer \cite{Schaefer78} classified the computational complexity of generalized satisfiability problems. Each class $A$ of Boolean relations gives rise to the generalized satisfiability problem $\sat(A)$. An instance of $\sat(A)$ is a conjunction of relations from $A$ such that each conjunct has a tuple of variables as arguments; the question is  whether or not  there is an assignment of Boolean values to the variables, so that, for each conjunct, the resulting tuple of Boolean values belongs to the underlying relation.
 Schaefer's main result is a dichotomy theorem for the computational complexity of $\sat(A)$, namely, depending on $A$, either $\sat(A)$ is NP-complete or $\sat(A)$ is solvable in polynomial time.
   Schaefer's dichotomy theorem
   provided a unifying explanation for the NP-completeness of many well-known variants of Boolean satisfiability, such  as  POSITIVE 1-IN-3 SAT and  MONOTONE 3SAT; moreover, it became the catalyst for numerous subsequent investigations, including the pursuit of a dichotomy theorem for constraints satisfaction problems, a pursuit that became known as the Feder-Vardi Conjecture \cite{FV98}.

Every Boolean relation can be identified with its characteristic function, which, via the Fourier transform, can be represented  as a multilinear polynomial (i.e., a polynomial in which each variable has degree at most one) in a unique way. Moreover, in carrying out this transformation, the truth values  \emph{false} and \emph{true} are typically represented by $+1$ and $-1$,  instead of $0$ and $1$.  For example, it is easy to see that the multilinear polynomial representing the conjunction $x \land y$ of two variables $x$ and $y$ is   $\frac{1}{2}(1+x+y-xy)$.
The multilinear polynomial representation of Boolean relations makes it possible to consider relaxations of satisfiability in which the variables take values in some suitable space, instead of the two-element Boolean algebra.  Such relaxations have been considered in the foundations of physics
several decades ago, where they  have played a role in 
singling out the differences between classical theory and quantum theory.
In particular, it has been shown  that there is a system of linear equations over the two-element field that has no solutions over $\{+1,-1 \}$, but the system of the associated multilinear polynomials has a solution in which the variables
are assigned  linear operators on a Hilbert space of dimension four. The Mermin-Peres magic square \cite{Mermin1990,Mermin1993,Peres1990} is  the most well known example of such a system. These constructions give \emph{small proofs} of the celebrated Kochen-Specker Theorem [8] on the impossibility to explain quantum mechanics via  hidden-variables \cite{Bell1966}.
More recently, systems of linear equations  with this relaxed notion of solvability  have been studied under the name of \emph{binary constraint systems}, and tight connections have been established between solvability and the existence of perfect strategies in non-local games that 
make use of entanglement \cite{CleveLS16,CleveM14}.

A Boolean relation is  \emph{affine} if it is the set of solutions of a system of linear equations over the two-element field.
The collection LIN of all affine relations is prominent in Schaefer's dichotomy theorem, as it is  one of the main classes $A$ of Boolean relations for which $\sat(A)$ is solvable in polynomial time.
The discussion in the preceding paragraph shows that  $\sat(\mbox{LIN})$ has instances that are unsatisfiable in the Boolean domain, but are satisfiable when linear operators on a Hilbert space are assigned to variables (for simplicity, from now on we will use the term ``operator assignments'' for such assignments).
Which other classes of Boolean relations exhibit such a gap between satisfiability in the Boolean domain and the relaxation of satisfiability via operator assignments? As a matter of fact, this question bifurcates into two separate questions, depending on whether the relaxation allows linear operators on  Hilbert spaces of arbitrary (finite or infinite) dimension or only on Hilbert spaces of finite dimension. In a recent breakthrough paper, Slofstra \cite{Slofstra2016} showed that these two questions are different for LIN by establishing the existence of systems of linear equations that are satisfiable by operator assignments on some infinite-dimensional Hilbert space, but are not satisfiable by  operator assignments on any finite-dimensional Hilbert space.
In a related vein, 
 Ji \cite{Ji2013} showed that a 2CNF-formula is satisfiable in the Boolean domain if and only if it is satisfiable by an operator assignment in some finite-dimensional Hilbert space. Moreover, Ji showed that the same holds true for Horn formulas.  
Note that 2SAT, HORN SAT, and  DUAL~HORN~SAT also feature prominently in Schaefer's dichotomy theorem as, together with $\sat(\mbox{LIN})$,
which from now on we will denote by LIN SAT,
they constitute the main tractable cases of generalized satisfiability problems (the other tractable cases are the trivial cases of $\sat(A)$, where $A$ is  a class of $0$-valid relations or a class of  $1$-valid relations, i.e., Boolean relations that contain the tuple consisting entirely of $0$'s or, respectively, the tuple consisting entirely of $1$'s).

In this paper, we completely characterize the classes $A$ of Boolean relations for which $\sat(A)$ exhibits a gap between satisfiability in the Boolean domain and satisfiability via operator assignments.
Clearly, if every relation in $A$ is $0$-valid or every relation in $A$ is $1$-valid, then there is no gap, as every constraint is satisfied by assigning to every variable the 
identity operator or its negation, respectively.
Beyond this, we first generalize and extend Ji's results \cite{Ji2013} by showing that if $\Gamma$ is a class of Boolean relations such that every relation in $A$ is bijunctive\footnote{A Boolean relation is \emph{bijunctive} if it is the set of satisfying assignments of a 2CNF-formula.}, or every relation in $A$ is Horn, or every relation in $A$ is dual Horn\footnote{A Boolean relation is \emph{Horn} (\emph{dual Horn}) if it is the set of satisfying assignments of a Horn (dual Horn) formula.}, then there is no gap whatsoever; this   means that an instance of $\sat(A)$ is satisfiable in the Boolean domain if and only if it is satisfiable by an operator assignment on some finite-dimensional Hilbert space if and only if is satisfiable by an operator assignment on some arbitrary  Hilbert space. In contrast, we show that for all other classes $A$ of Boolean relations, $\sat(A)$ exhibits a two-level gap: there are instances of $\sat(A)$ that are not satisfiable in the Boolean domain, but are satisfiable by an operator assignment on some finite-dimensional Hilbert space; moreover,  there are instances of $\sat(A)$ that are not satisfiable  by an operator assignment on any finite-dimensional Hilbert space, but are satisfiable by an operator assignment on some (infinite-dimensional) Hilbert space.

The proof of this result uses several different ingredients. First, we use the substitution method \cite{CleveM14} to show that there is no satisfiability gap for classes of relations that are bijunctive, Horn, and dual Horn. This gives a different proof of Ji's results \cite{Ji2013}, which were for finite-dimensional Hilbert spaces, but also shows that, for such classes of relations, there is no difference between satisfiability by linear operators on finite-dimensional Hilbert spaces and satisfiability by linear operators on arbitrary Hilbert spaces.
The main tool for proving the existence of a two-level gap for the remaining classes of Boolean relations
 is the notion of \emph{pp-definability}, that is, definability via primitive-positive  formulas, which are existential first-order formulas having a conjunction of (positive) atoms as their quantifier-free part. In the past, primitive-positive formulas have been used to design polynomial-time reductions between decision problems; in fact, this is one of the main techniques in the proof of Schaefer's dichotomy theorem.
 Here, we show that primitive-positive formulas can also be used to design \emph{gap-preserving} reductions, that is, reductions that preserve the gap between satisfiability on the Boolean domain and satisfiability by operator assignments.  To prove the existence of a two-level gap for  classes  of Boolean relations  we combine gap-preserving reductions with the two-level gap for LIN discussed earlier (i.e., the results of Mermin \cite{Mermin1990,Mermin1993}, Peres \cite{Mermin1990}, and Slofstra \cite{Slofstra2016}) and with results about Post's lattice of clones on the Boolean domain~\cite{Post1941}.

We also give two additional applications of pp-definability.
First,  we consider an extension of pp-definability in which the existential quantifiers may range over linear operators on some finite-dimensional Hilbert space. At first sight, it appears that new Boolean relations may be  pp-definable in the extended sense from a given set of Boolean relations. We show, however, that this is not the case. Specifically,
by analyzing closure operations on sets of linear operators, we show that
if a Boolean relation is pp-definable in the extended sense from other Boolean relations, then it is also pp-definable from the same relations.
In other words, for Boolean relations, this extension of pp-definability is not more powerful than standard pp-definability.
Second, we apply pp-definability to the problem of quantum realizability of contextuality scenarios.  Recently,  Fritz \cite{Fritz2016} used Slofstra's results \cite{Slofstra2016} to   resolve two problems raised by Acin et al.\ in \cite{Acin2015}. Using pp-definability and Slofstra's results, we obtain new proofs of Fritz's results that have the additional feature that the parameters involved are optimal. 

\section{Definitions and Technical Background} \label{sec:definitions}

\subsection{Notation}

For an integer $n$, we write $[n]$ for the set $\{1,\ldots,n\}$.  We
use mainly the $+1,-1$ representation of the Boolean domain ($+1$ for
``false'' and $-1$ for ``true''). We write $\{ \pm 1 \}$ for the set
$\{+1,-1\}$. If $a$ denotes a tuple of length $r$ we write
$a_1,\ldots,a_r$ to denote its $r$ components. If $a$ is such a tuple
and $f$ is a function that has $a_1,\ldots,a_r$ in its domain, we
write $f(a)$ to denote the tuple $(f(a_1),\ldots,f(a_r))$. 
We write $\mathrm{T}$ and $\mathrm{F}$
for the full and empty Boolean relations, respectively. The letters
stand for \emph{true} and \emph{false}. Their arity is unspecified by
the notation and will be made clear by the context.

\subsection{Linear Operators and Polynomials Thereof}

Let $V$ be a complex vector space. A linear operator on $V$ is a
linear map from $V$ to $V$. The linear operator that is the identity
on $V$ is denoted by $I$, and the linear operator that is identically
$0$ is denoted by $0$. The pointwise addition of two linear operators
$A$ and $B$ is denoted by $A+B$, the composition of two linear
operators $A$ and $B$ is denoted by $AB$, and the pointwise scaling of
a linear operator $A$ by a scalar $c \in \mathbb{C}$ is denoted by
$cA$. All these are linear operators. As a result, if
$\mathbb{C}\langle X_1,\ldots,X_n \rangle$ denotes the ring of
polynomials with complex coefficients and non-commuting variables in
$X_1,\ldots,X_n$, then for a polynomial $P(X_1,\ldots,X_n)$ in
$\mathbb{C}\langle X_1,\ldots,X_n \rangle$ and linear operators
$A_1,\ldots,A_n$ on $V$, the notation $P(A_1,\ldots,A_n)$ is
explained. If $A_1,\ldots,A_n$ pairwise commute, i.e., $A_iA_j =
A_jA_i$ for all $i,j \in \{1,\ldots,n\}$, then the notation is
explained even for a polynomial in $\mathbb{C}[X_1,\ldots,X_n]$, the
ring of polynomials with commuting variables in $X_1,\ldots,X_n$.

Let $V$ and $W$ be complex vector spaces. Let $A$ be a linear operator
on $V$ and let $B$ be a linear operator on $W$. We say that $A$ and
$B$ are similar if there exists an invertible linear map $C : V
\rightarrow W$ such that $A = C B C^{-1}$. Let $A_1,\ldots,A_n$ and
$B_1,\ldots,B_n$ be linear operators on $V$ and $W$, respectively. We
say that $A_1,\ldots,A_n$ and $B_1,\ldots,B_n$ are simultaneously
similar if there exists an invertible linear map $C : V \rightarrow W$
such that $A_i = C B_i C^{-1}$ holds for all $i \in [n]$. The
following simple fact with an equally simple proof will be used
multiple times.

\begin{lemma} \label{lem:simultaneouslysimilar}
  Let $V$ and $W$ be complex vector spaces, and let
  $P(X_1,\ldots,X_n)$ be a polynomial in $\mathbb{C}\langle
  X_1,\ldots,X_n \rangle$.  If $A_1,\ldots,A_n$ and $B_1,\ldots,B_n$
  are simultaneously similar linear operators on $V$ and $W$,
  respectively, then so are $P(A_1,\ldots,A_n)$ and $P(B_1,\ldots,B_n)$.
\end{lemma}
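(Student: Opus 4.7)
The plan is to observe that conjugation by $C$ defines an algebra isomorphism $\Phi : \mathcal{L}(W) \to \mathcal{L}(V)$ from linear operators on $W$ to linear operators on $V$, given by $\Phi(B) = C B C^{-1}$, and then to push a polynomial through $\Phi$ by structural induction. Since $C$ is invertible, $\Phi$ is a bijection with inverse $B \mapsto C^{-1} B C$, so the only substantive content is that $\Phi$ respects the algebraic operations used to build polynomials.

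The first step is to verify the three compatibility identities that follow immediately from $C^{-1}C = I_V$ and $C C^{-1} = I_W$: for all $B, B' \in \mathcal{L}(W)$ and $c \in \mathbb{C}$,
\[
\Phi(B + B') = \Phi(B) + \Phi(B'), \qquad \Phi(cB) = c\,\Phi(B), \qquad \Phi(BB') = \Phi(B)\Phi(B'),
\]
where the product identity uses the telescoping $C B B' C^{-1} = (C B C^{-1})(C B' C^{-1})$. One also checks that $\Phi$ maps the identity on $W$ to the identity on $V$.

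The second step is a routine induction on the construction of $P \in \mathbb{C}\langle X_1,\ldots,X_n \rangle$ from monomials via scaling, addition, and (non-commutative) multiplication. The base case is $\Phi(B_i) = C B_i C^{-1} = A_i$, which holds by the hypothesis that $A_1,\ldots,A_n$ and $B_1,\ldots,B_n$ are simultaneously similar through $C$; the constant case follows from $\Phi(c I_W) = c I_V$. The inductive step applies the three identities above termwise. The conclusion is
\[
C\, P(B_1,\ldots,B_n)\, C^{-1} \;=\; \Phi\big(P(B_1,\ldots,B_n)\big) \;=\; P\big(\Phi(B_1),\ldots,\Phi(B_n)\big) \;=\; P(A_1,\ldots,A_n),
\]
which is exactly the assertion that $P(A_1,\ldots,A_n)$ and $P(B_1,\ldots,B_n)$ are simultaneously similar (via the same intertwiner $C$).

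There is no real obstacle: the lemma is essentially the statement that the algebra $\mathcal{L}(W)$ and $\mathcal{L}(V)$ are isomorphic through conjugation by $C$. The only minor care is to treat the scalar constants in $\mathbb{C}\langle X_1,\ldots,X_n \rangle$ as constant multiples of the identity on the appropriate space, so that $\Phi$ sends the constant in $\mathcal{L}(W)$ to the corresponding constant in $\mathcal{L}(V)$; after that, structural induction closes the argument in a few lines.
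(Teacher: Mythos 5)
Your proof is correct and is essentially the same as the paper's: both arguments rest on the telescoping identity $\prod_i (CB_{\alpha_i}C^{-1}) = C\bigl(\prod_i B_{\alpha_i}\bigr)C^{-1}$ for monomials together with linearity for sums and scalars, the only difference being that you package these facts as the statement that conjugation by $C$ is an algebra isomorphism and proceed by structural induction, whereas the paper expands $P$ explicitly as a finite sum of monomials and computes directly.
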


\begin{proof}
  We write $[n]^*$ for the set of finite sequences with components in
  $[n]$, and $|\alpha|$ for the length of the sequence $\alpha$.  Let
  $P(X_1,\ldots,X_n) = \sum_{\alpha \in [n]^*} c_{\alpha}
  \prod_{i=1}^{|\alpha|} X_{\alpha_i}$, where only finitely many of the
  coefficients $c_\alpha$ are non-zero.  Let $C : V \rightarrow W$ be
  an invertible linear map witnessing that $A_1,\ldots,A_n$ and
  $B_1,\ldots,B_n$ are simultaneously similar; thus $A_j = C B_j
  C^{-1}$ holds for every $j \in [n]$. Note that for every $\alpha \in
  [n]^*$ of length $\ell$ we have $\prod_{i=1}^\ell (CB_{\alpha_i}C^{-1}) =
  C\big({\prod_{i=1}^\ell B_{\alpha_i}}\big)C^{-1}$.  It follows that
  $P(A_1,\ldots,A_n) = \sum_{\alpha \in [n]^*} c_\alpha
  C\big({\prod_{i=1}^{|\alpha|} B_{\alpha_i}}\big)C^{-1}$, and linearity gives
  $P(A_1,\ldots,A_n) = C P(B_1,\ldots,B_n) C^{-1}$.
%
\end{proof}

\subsection{Unique Multilinear Polynomial Representations}

A polynomial $P(X_1,\ldots,X_n)$ is called multilinear if it has
individual degree at most one on each variable.  Each function $f : \{
\pm 1 \}^n \rightarrow \mathbb{C}$ has a unique representation as a
multilinear polynomial in $\mathbb{C}[X_1,\ldots,X_n]$ given by the
Fourier or Walsh-Hadamard transform \cite{ODonnellBook}. Explicitly:
\begin{equation}
P_f(X_1,\ldots,X_n) = \sum_{S \subseteq [n]} \hat{f}(S) \prod_{i \in S} X_i,
\label{eqn:fourier}
\end{equation}
where
\begin{equation}
\hat{f}(S) = \frac{1}{2^n} \sum_{a \in \{\pm 1\}^n} f(a) \prod_{i \in S} a_i.
\end{equation}
The polynomial represents $f$ in the sense that $P_f(a) = f(a)$ holds
for every $a \in \{ \pm 1 \}^n$.  If the range of $f$ is a subset of
$\mathbb{R}$, then each $\hat{f}(S)$ is indeed a real number.  The
Convolution Formula describes the Fourier coefficients of pointwise
products $fg$ of functions $f,g : \{ \pm 1 \}^n \rightarrow
\mathbb{C}$.  It states that
\begin{equation}
\widehat{fg}(S) = \sum_{T \subseteq [n]} \hat{f}(S) \hat{g}(S \Delta T)
\label{eqn:convolution}
\end{equation}
for every $S \subseteq [n]$, where $S \Delta T$ denotes symmetric
difference; i.e.~$S \Delta T = (S \setminus T) \cup (T \setminus S)$.

We give an example of use of the uniqueness of the Fourier transform
that will be useful later on. We begin by recalling some notation and
terminology. A \emph{literal} is a Boolean variable $x$ or its
negation $\neg x$.  The literals $x$ and $\neg x$ are said to be
\emph{complementary} of each other, and $x$ is their \emph{underlying
  variable}. If $\ell$ is a literal, then $\overline{\ell}$ denotes
its complementary literal. The \emph{sign} $\sg(\ell)$ of $\ell$ is
defined as follows: $\sg(\ell) = 1$ if $\ell = x$, and $\sg(\ell)= -1$
if $\ell = \neg x$, where $x$ is its underlying variable. Clearly,
$\sg(\overline{\ell}) = -\sg(\ell)$.

A \emph{clause} is a disjunction of literals. Let $C= (\ell_1 \lor
\cdots \lor \ell_r)$ be a clause.  In the $\pm 1$ representation of
Boolean values, the clause $C$ represents the relation $\{ \pm 1 \}^r
\setminus \{(\sg(\ell_1),\ldots,\sg(\ell_r))\}$, which will be denoted
by $R_C$.  The \emph{indicator} function of the clause $C= (\ell_1
\lor \cdots \lor \ell_r)$ is the Boolean function from $\{ \pm 1 \}^r
\to \{ \pm 1 \}$ that maps the tuple
$(\sg(\ell_1),\ldots,\sg(\ell_r))$ to $+1$ and every other tuple to
$-1$. We write $P_C(X_1, \ldots, X_r)$ to denote the unique
multilinear polynomial representation of the indicator function of
the clause $C$.

\begin{lemma} \label{lem:fourierofaclause}
Let $C = (\ell_1 \lor \cdots \lor \ell_r)$ be a clause on $r$
different variables.  Then, over the ring of polynomials
$\mathbb{C}[X_1,\ldots,X_r]$, the following identity holds.
\begin{equation}
P_C(X_1,\ldots,X_r) = 2^{1-r} \prod_{i=1}^r \Big({1+\sg(\ell_i)X_i}\Big) - 1.
\label{eqn:targeteqia}
\end{equation}
\end{lemma}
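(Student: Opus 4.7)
The plan is to verify that the right-hand side of \eqref{eqn:targeteqia} is (i) a multilinear polynomial and (ii) agrees with the indicator function of $C$ on every point of $\{\pm 1\}^r$; the uniqueness of the multilinear representation noted just after \eqref{eqn:fourier} then forces it to equal $P_C$.

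For (i), observe that each factor $1 + \sg(\ell_i) X_i$ is linear in $X_i$ and, crucially, the hypothesis that $C$ is on $r$ \emph{different} variables guarantees that distinct factors involve distinct indeterminates. Hence the product $\prod_{i=1}^{r}(1+\sg(\ell_i)X_i)$ has individual degree at most one in each $X_i$, and subtracting the constant $1$ and scaling by $2^{1-r}$ preserves multilinearity.

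For (ii), I would simply evaluate at an arbitrary $a \in \{\pm 1\}^r$. At the unique falsifying assignment $a_i = \sg(\ell_i)$ for all $i$, each factor becomes $1+\sg(\ell_i)^2 = 2$, so the product equals $2^r$ and the whole expression equals $2^{1-r}\cdot 2^r - 1 = +1$, matching the definition of the indicator function at that point. At any other $a$, there is some $i$ with $a_i = -\sg(\ell_i)$, which makes the corresponding factor $1 + \sg(\ell_i)a_i = 0$; the product vanishes and the expression equals $-1$, again matching the indicator function.

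Since both the right-hand side of \eqref{eqn:targeteqia} and $P_C$ are multilinear polynomials in $\mathbb{C}[X_1,\ldots,X_r]$ that take the same value at every point of $\{\pm 1\}^r$, uniqueness of the multilinear representation from the Fourier/Walsh--Hadamard transform forces them to be identical. No step looks like a real obstacle; the only point to be careful about is to invoke the hypothesis of distinct underlying variables so as to justify multilinearity of the product.
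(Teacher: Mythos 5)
Your proof is correct and follows essentially the same route as the paper's: both verify that the right-hand side is a multilinear polynomial agreeing with the indicator function of $C$ at every point of $\{\pm 1\}^r$ (evaluating to $+1$ at the unique falsifying assignment and $-1$ elsewhere) and then invoke uniqueness of the multilinear representation. Your explicit remark that the hypothesis of $r$ distinct variables is what guarantees multilinearity of the product is a small but welcome extra precision.
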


\begin{proof}
Let $R_C = \{ \pm 1 \}^r \setminus
\{(\sg(\ell_1),\ldots,\sg(\ell_r))\}$ be the Boolean relation
represented by $C$.  Since the right-hand side of
equation~\eqref{eqn:targeteqia} is a multilinear polynomial and its
left-hand side is the unique multilinear polynomial that agrees with
the indicator function of $R_C$ on $\{ \pm 1 \}$, it suffices to check
that the right-hand side also agrees with the indicator function of
$R_C$ on $\{ \pm 1 \}^r$. In other words, we claim that for every
$(a_1,\ldots,a_r) \in \{ \pm 1 \}^r$, the right-hand side evaluates to
$-1$ if the truth-assignment $(a_1,\ldots,a_r)$ satisfies the clause
$C$, and it evaluates to $1$, otherwise.

Assume that $(a_1,\ldots,a_r)$ satisfies the clause $C$. Then there is
some $j\in \{1,\ldots,r\}$ such that $a_j= -\sg(\ell_j)$. It follows
that $1+\sg(\ell_j)a_j= 0$ and so $\prod_{i=1}^r
\left({1+\sg(\ell_i)a_i}\right)= 0$, which, in turn, implies that
$P_C(a_1,\ldots,a_r) = -1$. Assume that $(a_1,\ldots,a_r)$ does not
satisfy the clause $C$. Then, for every $i \in \{1,\ldots,r\}$, we
have that $a_i= \sg(\ell_i)$. Consequently, for every
$i\in\{1,\ldots,r\}$, we have that $1+\sg(\ell_i)a_i= 2$ and so
$\prod_{i=1}^r \left({1+\sg(\ell_i)a_i}\right)= 2^r$, which, in turn,
implies that $P_C(a_1,\ldots,a_r) = 1$.  This completes the proof of
the claim.
\end{proof}

\subsection{Hilbert Space}

A Hilbert space is a complex vector space with an inner product whose
norm induces a complete metric. All Hilbert spaces of finite dimension
$d$ are isomorphic to $\mathbb{C}^d$ with the standard complex inner
product. In particular, this means that after the choice of a basis,
we can identify the linear operators on a $d$-dimensional Hilbert
space with the matrices in~$\mathbb{C}^{d \times d}$. Composition of
operators becomes matrix multiplication. A matrix $A$ is Hermitian if
it is equal to its conjugate transpose $A^*$. A diagonal matrix is one
all whose off-diagonal entries are $0$. A matrix $A$ in unitary if
$A^* A = A A^* = I$, where $I$ is the identity matrix. Two matrices
$A$ and $B$ commute if $AB = BA$, and a collection of matrices
$A_1,\ldots,A_r$ pairwise commute if $A_iA_j = A_jA_i$ for all $i,j
\in [r]$.

For the basics of general Hilbert spaces and their linear operators we
refer the reader to Halmos' monograph \cite{HalmosHilbertSpace}. We
need from it the concepts of bounded linear operator and of adjoint
$A^*$ of a densely defined linear operator~$A$. Two operators $A$ and
$B$ commute if $AB = BA$. A sequence of operators $A_1,\ldots,A_r$
pairwise commute if $A_iA_j = A_jA_i$ for all $i,j \in [r]$. A linear
operator $A$ is called normal if it commutes with its adjoint $A^*$;
i.e., $AA^* = A^*A$. A linear operator is called self-adjoint if $A^* =
A$. A linear map from a Hilbert space $\mathcal{H}_1$ to another
Hilbert space $\mathcal{H}_2$ is called unitary if it preserves norms.

We also make elementary use of general $L^2$- and
$L^\infty$-spaces. Let $(\Omega,\mathcal{M},\mu)$ be a measure
space. Then $L^2(\Omega,\mu)$ denotes the collection of square
integrable measurable functions, up to almost everywhere
equality.  Also $L^\infty(\Omega,\mu)$ denotes the collection of
essentially bounded measurable functions, up to almost
everywhere equality.  All measure-theoretic terms in these definitions
refer to $\mu$.  See \cite{FollandRealAnalysis} for definitions.

\subsection{Constraint Languages, Instances, Value and Satisfiability}

A \emph{Boolean constraint language} $A$ is a collection of relations
over the Boolean domain $\{\pm 1\}$. Let $V = \{X_1,\ldots,X_n\}$ be a
set of variables. An \emph{instance} $\mathcal{I}$ on the variables $V$
over the constraint language $A$ is a finite collection of pairs
\begin{equation}
\mathcal{I} = ((Z_1,R_1), \ldots, (Z_m,R_m))
\label{eqn:instance}
\end{equation}
where each $R_i$ is a relation from $A$ and $Z_i =
(Z_{i,1},\ldots,Z_{i,r_i})$ is a tuple of variables from $V$ or
constants from $\{ \pm 1 \}$, where $r_i$ is the arity of $R_i$. Each
pair $(Z_i,R_i)$ is called a \emph{constraint}, and each $Z_i$ is
called its \emph{constraint-scope}. A \emph{Boolean assignment} is a
mapping $f$ assigning a Boolean value $a_i \in \{ \pm 1 \}$ to each
variable~$X_i$, and assigning $-1$ and $+1$ to the constants $-1$ and
$+1$, respectively.  We say that the assignment satisfies the $i$-th
constraint if the tuple $f(Z_i) = (f(Z_{i,1}),\ldots,f(Z_{i,r_i}))$
belongs to $R_i$.  The \emph{value of $f$ on $\mathcal{I}$} is the
fraction of constraints that are satisfied by $f$. The \emph{value} of
$\mathcal{I}$, denoted by $\nu(\mathcal{I})$, is the maximum value
over all Boolean assignments.  We say that $\mathcal{I}$ is
\emph{satisfiable} in the Boolean domain if there is a Boolean
assignment that satisfies all constraints; equivalently, if
$\nu(\mathcal{I}) = 1$.

\subsection{Operator Assignments and Satisfiability via Operators}

Let $X_1,\ldots,X_n$ be $n$ variables, and let $\mathcal{H}$ be a
Hilbert space. An \emph{operator assignment} for $X_1,\ldots,X_n$ over
$\mathcal{H}$ is an assignment of a bounded linear operator on
$\mathcal{H}$ to each variable, $f : X_1,\ldots,X_n \mapsto
A_1,\ldots,A_n$, such that the following conditions hold:
\begin{enumerate} \itemsep=0pt
\item $A_j$ is self-adjoint for every $j \in [n]$,
\item $A_j^2 = I$ for every $j \in [n]$.
\end{enumerate}
If $S$ is a subset of $\{X_1,\ldots,X_n\}$, we say that the operator
assignment $A_1,\ldots,A_n$ \emph{pairwise commutes} on $S$ if in
addition it satisfies $A_j A_k = A_k A_j$ for every $X_j$ and $X_k$ in
the set $S$. If it pairwise commutes on the whole set
$\{X_1,\ldots,X_n\}$, we say that the assignment \emph{fully
  commutes}.

Let $A$ be a Boolean constraint language, let $\mathcal{I}$ be an
instance over $A$, with $n$ variables $X_1,\ldots,X_n$ as
in~\eqref{eqn:instance}, and let $\mathcal{H}$ be a Hilbert space. An
\emph{operator assignment for $\mathcal{I}$ over $\mathcal{H}$} is an
operator assignment $f : X_1,\ldots,X_n \mapsto A_1,\ldots,A_n$ for the
variables $X_1,\ldots,X_n$ that pairwise commutes on the set of
variables of each constraint scope $Z_i$ in $\mathcal{I}$; explicitly
\begin{equation}
A_j A_k = A_k A_j \;\;\;\; \text{ for every } X_j \text{ and }
X_k \text{ in } Z_i, \text{ for every } i \in \{1,\ldots,m\}.
\end{equation}
We also require that $f$ maps the constant $-1$ and $+1$ to $-I$ and
$I$, respectively, where $I$ is the identity operator on
$\mathcal{H}$.  We say that the assignment $f$ \emph{satisfies} the
$i$-th constraint if
\begin{equation}
P_{R_i}(f(Z_i)) = P_{R_i}(f(Z_{i,1}),\ldots,f(Z_{i,r_i})) = -I,
\end{equation}
where $P_{R_i}$ denotes the unique multilinear polynomial
representation of indicator function of the relation $R_i$, i.e., the
function that maps each tuple in $R_i$ to $-1$, and each tuple in its
complement $\{\pm 1\}^{r_i} \setminus R_i$ to $+1$.  Note that since
$f(Z_{i,1}),\ldots,f(Z_{i,r_i})$ are required to commute by
definition, this notation is unambiguous despite the fact that
$P_{R_i}$ is defined as a polynomial in commuting variables.  The
\emph{value} of $f$ on $\mathcal{I}$ is the fraction of constraints
that are satisfied by $f$; note that this quantity takes one of a
finite set of values in the set $\{0,1/m,2/m,\ldots,(m-1)/m,1\}$. The
\emph{value of $\mathcal{I}$ over $\mathcal{H}$} is the maximum value
over all operator assignments for $\mathcal{I}$ over $\mathcal{H}$. We
say that $f$ \emph{satisfies} $\mathcal{I}$ if it satisfies all
constraints.  In that case we also say that $f$ is a \emph{satisfying
  operator assignment} for $\mathcal{I}$ over $\mathcal{H}$.

The \emph{finite-dimensional value of $\mathcal{I}$}, denoted by
$\nu^*(\mathcal{I})$, is the maximum of its value over all
finite-dimensional Hilbert spaces. The \emph{value of $\mathcal{I}$},
denoted by $\nu^{**}(\mathcal{I})$, is the maximum of its value over
all Hilbert spaces. We say that an instance $\mathcal{I}$ is
\emph{satisfiable via finite-dimensional operator assignments}, or
satisfiable via fd-operators for short, if $\nu^*(\mathcal{I}) =
1$. We say that $\mathcal{I}$ is \emph{satisfiable via operator
  assignments}, or satisfiable via operators for short, if
$\nu^{**}(\mathcal{I}) = 1$.

\section{The Strong Spectral Theorem}

The Spectral Theorem plays an important role in linear algebra and
functional analysis. It has also been used in the foundations of
quantum mechanics (for some recent uses see \cite{CleveM14,Ji2013}).
We will make a similar use of it, but we will also need the version of
this theorem for infinite-dimensional Hilbert spaces.  In this section
we discuss the statement, both for finite- and infinite-dimensional
Hilbert spaces, as well as one of its important applications that we
encapsulate in a lemma for later reuse.

\subsection{Statement}

In its most basic form, the Spectral Theorem for complex matrices
states that every Hermitian matrix is unitarily equivalent to a
diagonal matrix. Explicitly: if $A$ is a $d \times d$ Hermitian
matrix, then there exist a unitary matrix $U$ and a diagonal matrix
$E$ such that $A = U^{-1} E U$. In its strong form, the Strong
Spectral Theorem (SST) applies to sets of pairwise commuting Hermitian
matrices and is stated as follows.

\begin{theorem}[Strong Spectral Theorem; finite-dimensional case]
\label{thm:sstfd}
Let $A_1,\ldots,A_r$ be $d \times d$ Hermitian matrices, for some
positive integer $d$. If $A_1,\ldots,A_r$ pairwise commute, then there
exists a unitary matrix $U$ and diagonal matrices $E_1,\ldots,E_r$
such that $A_i = U^{-1} E_i U$ for every $i \in [r]$.
\end{theorem}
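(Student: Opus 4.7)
The plan is to proceed by induction on the number $r$ of matrices. The base case $r = 1$ is the ordinary Spectral Theorem for a single Hermitian matrix, which is recalled immediately before the statement and which I take as given.

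For the inductive step, I assume the claim for any collection of $r-1$ pairwise commuting Hermitian matrices on any finite-dimensional Hilbert space. First I would apply the ordinary Spectral Theorem to $A_1$ alone to obtain a unitary $U_1$ and a diagonal $D_1$ with $A_1 = U_1^{-1} D_1 U_1$, and then pass to $B_i = U_1 A_i U_1^{-1}$ for every $i \in [r]$. Unitary conjugation preserves both Hermicity and commutation relations, so $B_1,\ldots,B_r$ are still pairwise commuting Hermitian matrices, and now $B_1 = D_1$ is diagonal.

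Next I would exploit the identity $B_1 B_i = B_i B_1$ to argue that every $B_i$ leaves each eigenspace of $B_1$ invariant: if $B_1 v = \lambda v$, then $B_1 (B_i v) = B_i (B_1 v) = \lambda (B_i v)$. Decomposing $\mathbb{C}^d = V_{\lambda_1} \oplus \cdots \oplus V_{\lambda_k}$ as the orthogonal direct sum of the distinct eigenspaces of the Hermitian operator $B_1$, the restrictions $B_2|_{V_{\lambda_j}}, \ldots, B_r|_{V_{\lambda_j}}$ form $r-1$ pairwise commuting Hermitian operators on the finite-dimensional Hilbert space $V_{\lambda_j}$ (with the inherited inner product). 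The inductive hypothesis then produces an orthonormal basis of $V_{\lambda_j}$ that simultaneously diagonalizes all of them. Concatenating these bases across $j \in [k]$ yields an orthonormal basis of $\mathbb{C}^d$ that simultaneously diagonalizes $B_2, \ldots, B_r$; it also diagonalizes $B_1$, since each basis vector lies in some $V_{\lambda_j}$ and is hence an eigenvector of $B_1$. Arranging this basis as the columns of a unitary $U_2$ and setting $U = U_2 U_1$ produces the required simultaneous unitary diagonalization of $A_1,\ldots,A_r$.

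The main obstacle will be the small amount of bookkeeping needed to justify the inductive step, namely: (i) that the restriction of a Hermitian operator to an invariant subspace equipped with the inherited inner product is again Hermitian, and (ii) that restricting pairwise commuting operators to a common invariant subspace preserves their commutation relations. Both verifications are routine, but they must be stated explicitly so the induction closes properly on the smaller Hilbert spaces $V_{\lambda_j}$.
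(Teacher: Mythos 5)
Your argument is correct, but note that the paper does not actually prove Theorem~\ref{thm:sstfd}: it is stated as a known standard fact (with only the infinite-dimensional generalization, Theorem~\ref{thm:sstgeneral}, attributed to a reference), so there is no proof in the paper to compare against. What you give is the classical simultaneous-diagonalization argument -- induction on $r$ with the dimension left arbitrary, reduction to the case where $A_1$ is already diagonal, invariance of the eigenspaces $V_{\lambda_j}$ of $B_1$ under the remaining commuting operators, and concatenation of the orthonormal bases produced by the inductive hypothesis on each $V_{\lambda_j}$ -- and the two routine verifications you flag at the end are exactly the ones needed to close the induction. The only slip is a conjugation convention at the very end: if $U_2$ has the simultaneous eigenbasis as its \emph{columns}, then $U_2^{-1} B_i U_2 = E_i$ is diagonal, so the matrix satisfying $U A_i U^{-1} = E_i$ is $U = U_2^{-1} U_1 = U_2^* U_1$ rather than $U_2 U_1$; this is a bookkeeping detail, not a gap.
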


This form of the SST will be enough to discuss satisfiability via
fd-operators. For operator assignments over arbitrary Hilbert spaces,
we need to appeal to the most general form of the SST in which the
role of diagonal matrices is played by \emph{multiplication operators}
on an $L^2(\Omega,\mu)$-space. These are defined as follows.

Let $V$ be a complex function space; a complex vector space of
functions mapping indices from an index set $X$ to $\mathbb{C}$. A
multiplication operator of $V$ is a linear operator whose value at a
function $f : X \rightarrow \mathbb{C}$ in $V$ is given by pointwise
multiplication by a fixed function $a : X \rightarrow \mathbb{C}$. In
symbols, the multiplication operator given by $a$ is
\begin{equation}
(T_a(f))(x) = a(x)f(x) \;\;\;\;\; \text{ for each } x \in X.
\end{equation}
In its weak form, the general Spectral Theorem states that any normal
bounded linear operator on a Hilbert space is unitarily equivalent to
a multiplication operator on an $L^2$-space. We need the following
strong version of the Spectral Theorem that states that the same is
true for a collection of such operators, simultaneously through the
same unitary transformation, provided they commute.  The statement we
use is a direct consequence of Theorem~1.47 in Folland's monograph
\cite{FollandHarmonicAnalysis}.

\begin{theorem}[Strong Spectral Theorem; general case]
  \label{thm:sstgeneral}
  Let $A_1,\ldots,A_n$ be normal bounded linear operators on a Hilbert
  space $\mathcal{H}$.  If $A_1,\ldots,A_r$ pairwise commute, then
  there exist a measure space $(\Omega,\mathcal{M},\mu)$, a unitary
  map $U : \mathcal{H} \rightarrow L^2(\Omega,\mu)$, and functions
  $a_1,\ldots,a_r \in L^\infty(\Omega,\mu)$ such that $A_i = U^{-1}
  T_{a_i} U$ for every $i \in [r]$.
\end{theorem}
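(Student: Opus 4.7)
The plan is to derive the statement directly from the spectral theorem for commutative unital C*-algebras of bounded operators, in its multiplication-operator form, which is precisely the content of Folland's Theorem~1.47. The first step is a preliminary algebraic observation about the family $A_1,\ldots,A_r$: since each $A_i$ is normal and the $A_i$ pairwise commute, the Fuglede--Putnam theorem implies that $A_i$ also commutes with $A_j^*$ for every $i,j \in [r]$. Consequently, the set $\{I,A_1,\ldots,A_r,A_1^*,\ldots,A_r^*\}$ generates a commutative unital $*$-subalgebra of $\mathcal{B}(\mathcal{H})$, whose norm closure is a commutative unital C*-algebra $\mathcal{A} \subseteq \mathcal{B}(\mathcal{H})$ containing every $A_i$.

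Next, I would apply Folland's Theorem~1.47 to $\mathcal{A}$. This produces a measure space $(\Omega,\mathcal{M},\mu)$, a unitary map $U : \mathcal{H} \rightarrow L^2(\Omega,\mu)$, and an isometric $*$-homomorphism $\Phi : \mathcal{A} \rightarrow L^\infty(\Omega,\mu)$ with the property that $U A U^{-1} = T_{\Phi(A)}$ for every $A \in \mathcal{A}$. Setting $a_i := \Phi(A_i) \in L^\infty(\Omega,\mu)$ for each $i \in [r]$ and rearranging yields $A_i = U^{-1} T_{a_i} U$, which is exactly the conclusion of Theorem~\ref{thm:sstgeneral}.

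The step I expect to require the most care is the bridge between the hypothesis of the theorem (a pairwise commuting family of normal operators) and the hypothesis of Folland's theorem (a commutative C*-algebra). Applying the single-operator spectral theorem to each $A_i$ separately would not suffice, because it would in general produce different unitaries for different $A_i$; the point of passing to the ambient algebra $\mathcal{A}$ is precisely to obtain a single unitary $U$ that simultaneously realises every element of $\mathcal{A}$ as a multiplication operator. The Fuglede--Putnam step is essential here: without it, one would only know that $A_1,\ldots,A_r$ commute among themselves, and the $*$-algebra they generate together with their adjoints might fail to be commutative.

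A further minor point to verify is that, depending on the precise formulation of Folland's result, one may a priori obtain a direct sum of $L^2$-spaces rather than a single $L^2(\Omega,\mu)$; this is harmless, since the disjoint union of the underlying measure spaces (endowed with the obvious combined measure) realises such a direct sum as a single $L^2(\Omega,\mu)$, and multiplication operators on the pieces assemble into a multiplication operator on the whole space. Once this identification is made, no additional work is needed.
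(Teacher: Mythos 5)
Your proposal is correct and follows exactly the route the paper intends: the paper offers no proof of this statement, saying only that it is ``a direct consequence of Theorem~1.47 in Folland's monograph,'' and your derivation supplies precisely that deduction. The one substantive point you add --- invoking Fuglede--Putnam so that the pairwise commuting normal operators together with their adjoints generate a \emph{commutative} unital C*-algebra to which Folland's theorem applies --- is the right (and genuinely necessary) bridge for the theorem as stated, although in the paper's actual applications the $A_i$ are self-adjoint and this step becomes trivial.
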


The special case in which $\mathcal{H}$ has finite dimension $d$, the
measure space is actually a finite set of cardinality $d$ with the
counting measure, and thus $L^2(\Omega,\mu)$ is isomorphic to
$\mathbb{C}^d$ with the usual complex inner product.

\subsection{An Oft-Used Application}

The following lemma encapsulates a frequently used application of the
Strong Spectral Theorem. It states that whenever a set of polynomial
equations entail another polynomial equation over the Boolean domain,
then the entailment holds as well for fully commuting operator
assignments.

\begin{lemma} \label{lem:entail} Let $X_1,\ldots,X_r$ be variables,
  let $Q_1,\ldots,Q_m,Q$ be polynomials in
  $\mathbb{C}[X_1,\ldots,X_r]$, and let $\mathcal{H}$ be a Hilbert
  space. If every Boolean assignment that satisfies the equations $Q_1
  = \cdots = Q_m = 0$ also satisfies the equation $Q = 0$, then every
  fully commuting operator assignment over $\mathcal{H}$ that
  satisfies the equations $Q_1 = \cdots = Q_m = 0$ also satisfies the
  equation~$Q = 0$.
\end{lemma}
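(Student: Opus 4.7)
The plan is to apply the general Strong Spectral Theorem (Theorem~\ref{thm:sstgeneral}) to simultaneously represent the operators as multiplication operators on an $L^2$-space, reducing the operator-level polynomial identity to a pointwise scalar identity where the Boolean entailment hypothesis applies directly. Given a fully commuting operator assignment $X_i \mapsto A_i$ on $\mathcal{H}$ satisfying $Q_j(A_1, \ldots, A_r) = 0$ for every $j \in [m]$, each $A_i$ is self-adjoint (hence normal) and the $A_i$ pairwise commute, so Theorem~\ref{thm:sstgeneral} furnishes a measure space $(\Omega, \mathcal{M}, \mu)$, a unitary $U : \mathcal{H} \to L^2(\Omega, \mu)$, and functions $a_1, \ldots, a_r \in L^\infty(\Omega, \mu)$ with $A_i = U^{-1} T_{a_i} U$.

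The key bridge is the observation that for any polynomial $P \in \mathbb{C}[X_1, \ldots, X_r]$ one has $P(T_{a_1}, \ldots, T_{a_r}) = T_{P(a_1, \ldots, a_r)}$, which expresses that the map $a \mapsto T_a$ is a unital algebra homomorphism from $L^\infty(\Omega, \mu)$ into the bounded operators on $L^2(\Omega, \mu)$. Combining this with Lemma~\ref{lem:simultaneouslysimilar} then yields $P(A_1, \ldots, A_r) = U^{-1} T_{P(a_1, \ldots, a_r)} U$ for every such $P$. Specializing to $P = X_i^2 - 1$ and using $A_i^2 = I$ forces $a_i(\omega)^2 = 1$ almost everywhere; similarly, each hypothesis $Q_j(A_1, \ldots, A_r) = 0$ translates to $Q_j(a_1(\omega), \ldots, a_r(\omega)) = 0$ almost everywhere. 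Taking the union of these finitely many null sets produces a measurable set $N$ with $\mu(N) = 0$ such that, for every $\omega \notin N$, the tuple $(a_1(\omega), \ldots, a_r(\omega))$ is a Boolean assignment in $\{\pm 1\}^r$ that satisfies $Q_1 = \cdots = Q_m = 0$.

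The Boolean entailment hypothesis now yields $Q(a_1(\omega), \ldots, a_r(\omega)) = 0$ for every $\omega \notin N$, so $Q(a_1, \ldots, a_r) = 0$ almost everywhere, hence $T_{Q(a_1, \ldots, a_r)} = 0$ and finally $Q(A_1, \ldots, A_r) = U^{-1} T_{Q(a_1, \ldots, a_r)} U = 0$, as desired. The main obstacle in this plan is justifying the identification $P(T_{a_1}, \ldots, T_{a_r}) = T_{P(a_1, \ldots, a_r)}$: this rests on the pairwise commutativity of the multiplication operators and on the elementary identities $T_a T_b = T_{ab}$, $T_a + T_b = T_{a+b}$, and $c T_a = T_{ca}$ for the associated operators on $L^2(\Omega, \mu)$. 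Once this homomorphism is established, the rest of the argument reduces to taking a union of finitely many null sets and invoking the Boolean hypothesis pointwise.
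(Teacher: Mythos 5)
Your proposal is correct and follows essentially the same route as the paper's proof of the general case: apply the general Strong Spectral Theorem to reduce to multiplication operators, observe that the polynomial identities become pointwise scalar identities holding almost everywhere, apply the Boolean entailment hypothesis off a null set, and transfer back. The only cosmetic difference is that you make the homomorphism $P(T_{a_1},\ldots,T_{a_r}) = T_{P(a_1,\ldots,a_r)}$ explicit, whereas the paper routes the same fact through its Lemma~\ref{lem:simultaneouslysimilar} and the observation that the only operator similar to $0$ is $0$ itself.
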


Although the same proof applies to all Hilbert spaces,
the proof of the finite-dimensional case can be made
more elementary. Since for certain applications only the
finite-dimensional case of the lemma is relevant, we split the proof
accordingly into cases.

\begin{proof}[Proof of Lemma~\ref{lem:entail}; finite-dimensional case.]
Assume $\mathcal{H}$ has finite dimension $d$. Since all Hilbert
spaces of dimension $d$ are isometrically isomorphic to
$\mathbb{C}^d$, let us assume without loss of generality that
$\mathcal{H} = \mathbb{C}^d$. In such a case, a self-adjoint bounded
linear operator is just a Hermitian $d \times d$ matrix, and the
composition of linear operators is matrix multiplication.

Assume the hypotheses of the lemma and let $A_1,\ldots,A_r$ be
Hermitian $d \times d$ matrices. Assume that $A_1,\ldots,A_r$ make a
fully commuting operator assignment for $X_1,\ldots,X_r$ such that the
equations $Q_1 = \cdots = Q_m = 0$ are satisfied. The matrices
$A_1,\ldots,A_r$ pairwise commute, so the Strong Spectral Theorem
(i.e.\ Theorem~\ref{thm:sstfd}) applies to them. Thus, there exist a
unitary matrix $U$ and diagonal $d \times d$ matrices $E_1,\ldots,E_m$
such that $A_i = U^{-1} E_i U$ for every $i \in
[r]$. Equivalently, $UA_iU^{-1} = E_i$. From $A_i^2 = I$ we
conclude $E_i^2 = I$. Hence, if $a_i(j)$ denotes the $j$-th diagonal
entry of $E_i$, then $a_i(j)^2 = 1$ for all $j \in
[d]$. Thus $a_i(j) \in \{ \pm 1 \}$ for all $j \in
[d]$. The conditions of
Lemma~\ref{lem:simultaneouslysimilar} apply, so $Q_k(A_1,\ldots,A_r)$
and $Q_k(E_1,\ldots,E_r)$ are similar matrices for each $k \in
[m]$. Since $Q_k(A_1,\ldots,A_r) = 0$ and the unique matrix
that is similar to the null matrix is the null matrix itself, we
conclude that $Q_k(E_1,\ldots,E_r) = 0$. Now, $E_i$ is the diagonal
matrix that has the vector $(a_i(1),\ldots,a_i(d))$ in the diagonal,
so $Q_k(a_1(j),\ldots,a_r(j)) = 0$ for all $j \in
[d]$. Since $a_i(j)$ is in $\{ \pm 1 \}$ for each $i \in
[r]$ and $j \in [d]$, the hypothesis of the
lemma says that also $Q(a_1(j),\ldots,a_r(j)) = 0$ for all $j \in
[d]$. Thus $Q(E_1,\ldots,E_r) = 0$, and another application
of Lemma~\ref{lem:simultaneouslysimilar} shows that $Q(A_1,\ldots,A_r)
= 0$, as was to be proved. \end{proof}

The proof for the general case follows the same structure as the proof
of the finite-dimensional case, using Theorem~\ref{thm:sstgeneral} in
place of Theorem~\ref{thm:sstfd}. Other than taking care of \emph{null
  sets of exceptions}, there are no further differences in the two
proofs. At a later stage we will find an application of the SST whose
proof for the infinite-dimensional case does require some new
ingredients. For now, let us fill in the details of the
null-set-of-exceptions argument as a warm-up.

\begin{proof}[Proof of Lemma~\ref{lem:entail}; general case.]
Assume the hypotheses of the lemma and let $A_1,\ldots,A_r$ be bounded
self-adjoint linear operators on $\mathcal{H}$. Suppose that
$A_1,\ldots,A_r$ make a fully commuting operator assignment for
$X_1,\ldots,X_r$ such that the equations $Q_1 = \cdots = Q_m = 0$ are
satisfied. The operators $A_1,\ldots,A_r$ pairwise commute, and since
they are self-adjoint they are also normal, so the Strong Spectral
Theorem (i.e.\ Theorem~\ref{thm:sstgeneral}) applies to them. Thus,
there exist a measure space $(\Omega,\mathcal{M},\mu)$, a unitary map
$U : \mathcal{H} \rightarrow L^2(\Omega,\mu)$ and functions
$a_1,\ldots,a_r \in L^\infty(\Omega,\mu)$ such that, for the
multiplication operators $E_i = T_{a_i}$ of $L^2(\Omega,\mu)$, the
relations $A_i = U^{-1} E_i U$ hold for every $i \in
[r]$. Equivalently, $UA_iU^{-1} = E_i$. From $A_i^2 = I$ we conclude
$E_i^2 = I$. Hence, $a_i(\omega)^2 = 1$ for almost all $\omega \in
\Omega$; i.e.\ formally, $\mu(\{ \omega \in \Omega : a_i(\omega)^2
\not= 1 \}) = 0$. Thus $a_i(\omega) \in \{ \pm 1 \}$ for almost all
$\omega \in \Omega$. The conditions of
Lemma~\ref{lem:simultaneouslysimilar} apply, thus
$Q_k(A_1,\ldots,A_r)$ and $Q_k(E_1,\ldots,E_r)$ are similar linear
operators for each $k \in [m]$. Since $Q_k(A_1,\ldots,A_r)
= 0$ and the unique linear operator that is similar to the null
operator is the null operator itself, we conclude that
$Q_k(E_1,\ldots,E_r) = 0$. Now, $E_i$ is the multiplication operator
given by the function $a_i$, so $Q_k(a_1(\omega),\ldots,a_r(\omega)) =
0$ for almost all $\omega \in \Omega$.  Since for almost all $\omega
\in \Omega$ the component $a_i(\omega)$ is in $\{ \pm 1 \}$ for each
$i \in [r]$, the hypothesis of the lemma says that also
$Q(a_1(\omega),\ldots,a_r(\omega)) = 0$ for almost all $\omega \in
\Omega$. Thus $Q(E_1,\ldots,E_r) = 0$, and another application of
Lemma~\ref{lem:simultaneouslysimilar} shows that $Q(A_1,\ldots,A_r) =
0$, as was to be proved. \end{proof}

\section{Reductions through Primitive Positive Formulas} \label{sec:pp}

Let $A$ be a Boolean constraint language, let $r$ be a positive
integer, and let $x_1,\ldots,x_r$ be variables ranging over the
Boolean domain $\{ \pm 1 \}$. A primitive positive formula, or
pp-formula for short, is a formula of the form
\begin{equation}
\phi(x_1,\ldots,x_r) = \exists y_1 \cdots \exists y_s
\left({R_1(z_1) \wedge \cdots \wedge R_m(z_m)} \right)
\end{equation}
where each $R_i$ is a relation in $A$ and each $z_i$ is an $r_i$-tuple
of variables or constants from $\{x_1,\ldots,x_r\} \cup
\{y_1,\ldots,y_s\} \cup \{ \pm 1 \}$, where $r_i$ is the arity of
$R_i$.  A relation $R \subseteq \{ \pm 1 \}^r$ is pp-definable from
$A$ if there exists a pp-formula $\phi(x_1,\ldots,x_r)$ such that
\begin{equation}
R = \{ (a_1,\ldots,a_r) \in \{ \pm 1 \}^r :
\phi(x_1/a_1,\ldots,x_r/a_r) \text{ is true in } A \}.
\end{equation}
A Boolean constraint language $A$ is pp-definable from another Boolean
constraint language $B$ if every relation in $A$ is pp-definable
from~$B$. Whenever the constants $+1$ and $-1$ do not appear in the
pp-formulas, we speak of pp-formulas and pp-definability \emph{without
  constants} or, also, \emph{without parameters}.

In the following we show that if $A$ is pp-definable from $B$, then
every instance $\mathcal{I}$ over $A$ can be translated into an
instance $\mathcal{J}$ over $B$ in such a way that the satisfying
operator assignments for $\mathcal{I}$ lift to satisfying operator
assignments for $\mathcal{J}$. We make this precise.

\subsection{The Basic Construction} \label{sec:basicconstruction}

Let $A$ and $B$ be two Boolean constraint languages and assume that
every relation in $A$ is pp-definable from $B$.  For $R$ in $A$, let
\begin{equation}
\phi_R(x_1,\ldots,x_r) = \exists y_1 \cdots \exists y_t (S_1(w_1)
\wedge \cdots \wedge S_m(w_m)) \label{eqn:ppformula}
\end{equation}
be the pp-formula that defines $R$ from $B$, where $S_1,\ldots,S_m$
are relations from $B$, and $w_1,\ldots,w_m$ are tuples of variables
or constants in $\{x_1,\ldots,x_r\} \cup \{y_1,\ldots,y_t\} \cup \{
\pm 1 \}$ of appropriate lengths. For every instance $\mathcal{I}$ of
$A$ we construct an instance $\mathcal{J}$ of $B$ as follows.

Consider a constraint $(Z,R)$ in $\mathcal{I}$, where $Z =
(Z_1,\ldots,Z_r)$ is a tuple of variables of $\mathcal{I}$ or
constants in $\{ \pm 1 \}$.  In addition to the variables in $Z$, in
$\mathcal{J}$ we add new fresh variables $Y_1,\ldots,Y_t$ for the
quantified variables $y_1,\ldots,y_t$ in $\phi_R$. We also add one
constraint $(W_j,S_j)$ for each $j \in [m]$, where $W_j$ is the tuple
of variables and constants obtained from $w_j$ by replacing the
variables in $x_1,\ldots,x_r$ by the corresponding components
$Z_1,\ldots,Z_r$ of $Z$, replacing any $y_i$-variable by the
corresponding $Y_i$, and leaving all constants untouched. We do this
for each constraint in $\mathcal{I}$ one by one. The collection of
variables $Z_1,\ldots,Z_r,Y_1,\ldots,Y_t$ that are introduced by the
constraint $(Z,R)$ of $\mathcal{I}$ is referred to as the \emph{block}
of $(Z,R)$ in $\mathcal{J}$. Note that two blocks of different
constraints may intersect, but only on the variables of $\mathcal{I}$.

This construction is referred to as a \emph{gadget reduction} in the
literature. Its main property for satisfiability in the Boolean domain
is the following straightforward fact:

\begin{lemma} \label{lem:ppclassical}
$\mathcal{I}$ is satisfiable in the Boolean domain if and only if
  $\mathcal{J}$ is.
\end{lemma}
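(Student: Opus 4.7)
The plan is to unfold the definition of pp-definability and track how Boolean assignments lift through the gadget construction, in both directions, with the fresh variables for the existentially quantified $y_i$'s playing the role of witnesses.

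For the forward direction, I would start with a satisfying Boolean assignment $f$ for $\mathcal{I}$ and build a satisfying Boolean assignment $g$ for $\mathcal{J}$ as follows. On the variables of $\mathcal{I}$ that appear in $\mathcal{J}$, I set $g$ to agree with $f$. For each constraint $(Z,R)$ of $\mathcal{I}$ with $Z = (Z_1,\ldots,Z_r)$, the assumption $f(Z) \in R$ combined with the pp-definition~\eqref{eqn:ppformula} of $R$ from $B$ yields Boolean witnesses $b_1,\ldots,b_t \in \{\pm 1\}$ for the quantified variables $y_1,\ldots,y_t$ such that every atom $S_j(w_j)$ becomes true under the substitution $x_i \mapsto f(Z_i)$, $y_i \mapsto b_i$. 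I then set $g(Y_i) := b_i$ on the fresh variables $Y_1,\ldots,Y_t$ of the block of $(Z,R)$. By construction of $\mathcal{J}$, the tuple $g(W_j)$ coincides with the tuple obtained from $w_j$ under that very substitution, so $g(W_j) \in S_j$ for every $j \in [m]$. Because the fresh variables of different blocks are pairwise disjoint, these choices do not conflict and $g$ satisfies every constraint of $\mathcal{J}$.

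For the backward direction, let $g$ be a satisfying Boolean assignment for $\mathcal{J}$, and let $f$ be its restriction to the variables of $\mathcal{I}$. Fix a constraint $(Z,R)$ of $\mathcal{I}$ and let $Y_1,\ldots,Y_t$ be the fresh variables of its block. Setting $a_i := f(Z_i)$ and $b_i := g(Y_i)$, the fact that $g$ satisfies every constraint $(W_j, S_j)$ of this block means precisely that each atom $S_j(w_j)$ of the pp-definition~\eqref{eqn:ppformula} is true under the substitution $x_i \mapsto a_i$, $y_i \mapsto b_i$. Hence the matrix of $\phi_R(a_1,\ldots,a_r)$ is satisfied by $b_1,\ldots,b_t$, so $\phi_R(a_1,\ldots,a_r)$ is true, i.e.\ $f(Z) \in R$. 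Since this holds for every constraint, $f$ satisfies $\mathcal{I}$.

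There is no real obstacle here beyond bookkeeping. The only thing I would be careful about is the handling of constants $\pm 1$ that may appear in either $Z$ or $w_j$: in both directions, the assignment sends the constants to themselves, and the replacement rule defining $W_j$ leaves constants untouched, so the substitution arguments go through verbatim. The disjointness of fresh variables across blocks is what makes the forward construction well-defined, and the projection $g \mapsto g\!\upharpoonright\!\mathrm{Var}(\mathcal{I})$ is what makes the backward direction work.
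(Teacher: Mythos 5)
Your proof is correct and is exactly the standard argument that the paper explicitly omits (``We ommit its very easy proof''): lift witnesses for the existential quantifiers in the forward direction, project and read off the witnesses in the backward direction, using the disjointness of the fresh variables across blocks. Nothing further is needed.
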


\noindent We ommit its very easy proof. Our goal in the rest of this
section is to show that one direction of this basic property of gadget
reductions is also true for satisfiability via operators, for both
finite- and infinite-dimensional Hilbert spaces, and that the other
direction is \emph{almost true} in a sense we will make precise in due
time.

\subsection{Correctness: Operator Solutions Lift}

The following lemma shows that the left-to-right direction in
Lemma~\ref{lem:ppclassical} also holds for satisfiability via
operators: satisfying operator assignments for $\mathcal{I}$ can be
lifted to satisfying operator assignments for $\mathcal{J}$, over the
same Hilbert space.

\begin{lemma} \label{lem:liftonly} Let $\mathcal{I}$ and
  $\mathcal{J}$ be as above and let $\mathcal{H}$ be a Hilbert
  space. For every $f$ that is a satisfying operator assignment for
  $\mathcal{I}$ over $\mathcal{H}$, there exists $g$ that extends $f$
  and is a satisfying operator assignment for $\mathcal{J}$ over
  $\mathcal{H}$.  Moreover, $g$ is pairwise commuting on each block of
  $\mathcal{J}$.
\end{lemma}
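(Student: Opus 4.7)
The plan is to extend $f$ one constraint of $\mathcal{I}$ at a time. Fix a constraint $(Z,R)$ of $\mathcal{I}$ with $Z=(Z_1,\ldots,Z_r)$, together with the pp-definition $\phi_R$ from~\eqref{eqn:ppformula} that specifies $R$ from $B$, and write $A_i := f(Z_i)$. Since the blocks of $\mathcal{J}$ share only variables of $\mathcal{I}$ (whose images under $g$ are already pinned down by $f$), a separate choice of witnesses $g(Y_1),\ldots,g(Y_t)$ made per constraint glues into a single well-defined extension $g$, and within-block pairwise commutativity is the only commutation condition I need to engineer.

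Because $f$ is an operator assignment for $\mathcal{I}$, the $A_i$ are self-adjoint, square to $I$, and pairwise commute on the block, so the Strong Spectral Theorem (Theorem~\ref{thm:sstfd} in finite dimensions, Theorem~\ref{thm:sstgeneral} in general) supplies a unitary $U : \mathcal{H} \rightarrow L^2(\Omega,\mu)$ and functions $a_1,\ldots,a_r \in L^\infty(\Omega,\mu)$ with $A_i = U^{-1} T_{a_i} U$ and $a_i(\omega) \in \{\pm 1\}$ almost everywhere. Since $f$ satisfies the constraint, $P_R(A_1,\ldots,A_r) = -I$, and Lemma~\ref{lem:simultaneouslysimilar} transfers this to $P_R(T_{a_1},\ldots,T_{a_r}) = -I$, i.e.\ $P_R(a_1(\omega),\ldots,a_r(\omega)) = -1$ for almost every $\omega$. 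As $P_R$ is the indicator polynomial of $R$, the Boolean tuple $(a_1(\omega),\ldots,a_r(\omega))$ therefore lies in $R$ almost everywhere.

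Next, pp-definability provides, for each $\vec{c} \in R$, Boolean witnesses $(\beta_1(\vec{c}),\ldots,\beta_t(\vec{c})) \in \{\pm 1\}^t$ that make every conjunct $S_j(w_j)$ of $\phi_R$ true when the $x$-variables are instantiated to $\vec{c}$, the $y$-variables to $\vec{\beta}(\vec{c})$, and constants kept as themselves. I would define $b_i(\omega) := \beta_i(a_1(\omega),\ldots,a_r(\omega))$: since the image of $(a_1,\ldots,a_r)$ is contained in the finite set $\{\pm 1\}^r$, the preimages of the $2^r$ Boolean tuples partition $\Omega$ (up to a null set) into measurable pieces on which $b_i$ is constant, so $b_i$ is a simple function in $L^\infty(\Omega,\mu)$. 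Setting $g(Y_i) := U^{-1} T_{b_i} U$ then yields self-adjoint, $I$-squaring operators that, together with $A_1,\ldots,A_r$, are simultaneously diagonalized by $U$, establishing pairwise commutativity on the block. Satisfaction of each new constraint $(W_j,S_j)$ follows from a final application of Lemma~\ref{lem:simultaneouslysimilar}: the polynomial $P_{S_j}$ evaluated at the corresponding multiplication operators equals $-I$ because, by the construction of $\vec{\beta}$, the pointwise value of the tuple lies in $S_j$ almost everywhere, and constants of $\phi_R$ are interpreted as $\pm I$ on the operator side and as $\pm 1$ on the scalar side.

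The only subtlety I anticipate is the measurable selection of the witnesses in the infinite-dimensional case, but since the tuple $(a_1,\ldots,a_r)$ takes values in the finite set $\{\pm 1\}^r$, a naive stratification of $\Omega$ by these at most $2^r$ Boolean tuples produces simple, $L^\infty$ witness functions without invoking any abstract selection theorem. The finite-dimensional case is handled identically after replacing $\Omega$ by $[d]$ equipped with the counting measure, the multiplication operators $T_{a_i},T_{b_i}$ by diagonal matrices, and ``almost everywhere'' by ``for every diagonal index''.
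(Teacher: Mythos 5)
Your proposal is correct and follows essentially the same route as the paper's proof: simultaneous diagonalization (or representation as multiplication operators) of the block via the Strong Spectral Theorem, transfer of $P_R = -I$ to the scalar level via Lemma~\ref{lem:simultaneouslysimilar}, a fixed Boolean witness selection for each of the finitely many tuples of $R$ yielding simple (hence bounded and measurable) witness functions, and a final application of Lemma~\ref{lem:simultaneouslysimilar} per conjunct. The paper makes the witness choice definite by taking the lexicographically smallest tuple and sets the witness functions to $0$ on the exceptional null set, but these are cosmetic differences from your stratification argument.
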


As in the proof of Lemma~\ref{lem:entail} we split into cases.

\begin{proof}[Proof of Lemma~\ref{lem:liftonly}, finite-dimensional case]
  As in the proof of the finite-dimensional case of
  Lemma~\ref{lem:entail}, we may assume that $\mathcal{H} =
  \mathbb{C}^d$ for some positive integer $d$, and that
  $A_1,\ldots,A_n$ are Hermitian $d \times d$ matrices that make a
  satisfying operator assignment $f$ for $\mathcal{I}$. We need to
  define Hermitian matrices for the new variables of $\mathcal{J}$
  that were introduced by its construction. We define these matrices
  simultaneously for all variables $Y_1,\ldots,Y_t$ that come from the
  same constraint $(Z,R)$ of $\mathcal{I}$.

By renaming the entries in $Z$ if necessary, let us assume without
loss of generality that the variables in $Z$ are $X_1,\ldots,X_r$,
where $r$ is the arity of $R$. By the commutativity condition of
satisfying operator assignments, the matrices $A_1,\ldots,A_r$
pairwise commute. As each $A_i$ is Hermitian, the Strong Spectral
Theorem applies to them. Thus, there exist a unitary matrix $U$ and
diagonal $d \times d$ matrices $E_1,\ldots,E_r$ such that the
relations $A_i = U^{-1} E_i U$ hold for each $i \in
[r]$. Equivalently, $U A_i U^{-1} = E_i$. From $A_i^2 = I$ we conclude
$E_i^2 = I$. Hence, if $a_i(j)$ denotes the $j$-th diagonal entry of
$E_i$, then $a_i(j)^2 = 1$ for all $j \in [d]$.  Thus $a_i(j) \in \{
\pm 1 \}$ for all $j \in [d]$.  The conditions of
Lemma~\ref{lem:simultaneouslysimilar} apply, thus
$P_R(A_1,\ldots,A_r)$ and $P_R(E_1,\ldots,E_r)$ are similar
matrices. Since $P_R(A_1,\ldots,A_r) = -I$ and the unique matrix that
is similar to $-I$ is $-I$ itself, we conclude that
$P_R(E_1,\ldots,E_r) = -I$. Now, $E_i$ is the diagonal matrix that has
the vector $(a_i(1),\ldots,a_i(d))$ in the diagonal, so
$P_R(a_1(j),\ldots,a_r(j)) = -1$ for all $j \in [d]$. Thus the tuple
$a(j) = (a_1(j),\ldots,a_r(j))$ belongs to the relation $R$ for all $j
\in [d]$. Now we are ready to define the matrices for the variables
$Y_1,\ldots,Y_t$.

For each $j \in [d]$, let $b(j) = (b_1(j),\ldots,b_t(j)) \in \{ \pm 1
\}^t$ be a tuple of witnesses to the existentially quantified
variables in $\phi_R(x_1/a_1(j),\ldots,x_r/a_r(j))$; such a vector of
witnesses must exist since the tuple $a(j)$ belongs to $R$ and
$\phi_R$ defines $R$. Let $F_k$ be the diagonal matrix that has the
vector $(b_k(1),\ldots,b_k(d))$ in the diagonal, and let $Y_k$ be
assigned the matrix $B_k = U^{-1} F_k U$. Since $U$ is unitary, each
such matrix is Hermitian and squares to the identity since $b_k(j) \in
\{ \pm 1 \}$ for all $j \in [d]$. Moreover,
$E_1,\ldots,E_r,F_1,\ldots,F_t$ pairwise commute since they are
diagonal matrices; thus $A_1,\ldots,A_r,B_1,\ldots,B_t$ also pairwise
commute since they are simultaneously similar via $U$. Moreover, as
each atomic formula in the matrix of $\phi_R$ is satisfied by the
mapping sending $x_i \mapsto a_i(j)$ and $y_i \mapsto b_i(j)$ for all
$j \in [d]$, another application of
Lemma~\ref{lem:simultaneouslysimilar} shows that the matrices that are
assigned to the variables of this atomic formula make the
corresponding indicator polynomial evaluate to $-I$. This means that
the assignment to the $X$ and $Y$-variables makes a satisfying
operator assignment for the constraints of $\mathcal{J}$ that come from the
constraint $(Z,R)$ of $\mathcal{I}$. As different constraints from
$\mathcal{I}$ produce their own sets of $Y$-variables, these
definitions of assignments do not conflict with one another, and the
proof of the lemma is complete.
\end{proof}

The proof for the general case
requires some new ingredients. Besides the need to take care of null
sets of exceptions as in the proof of Lemma~\ref{lem:entail}, a new
complication arises from the need to build the operators for the new
variables that are introduced by the reduction. Concretely we need to
make sure that the \emph{functions} of witnesses, in contraposition to
the \emph{finite tuples} of witnesses in the finite-dimensional case,
are bounded and measurable. We go carefully through the argument.

\begin{proof}[Proof of Lemma~\ref{lem:liftonly}, general case]
Assume that $A_1,\ldots,A_n$ are bounded self-adjoint linear operators
on $\mathcal{H}$ for the variables of $\mathcal{I}$. Suppose that the
operators $A_1,\ldots,A_r$ make a valid satisfying operator assignment
for $\mathcal{I}$.  We need to define bounded self-adjoint linear
operators for the new variables of $\mathcal{J}$ that were introduced
by the construction. We define these operators simultaneously for all
variables $Y_1,\ldots,Y_t$ that come from the same constraint $(Z,R)$
of $\mathcal{I}$.

By renaming the components of $Z$ if necessary, assume without loss of
generality that the variables in $Z$ are $X_1,\ldots,X_r$, where $r$
is the arity of $R$. By the commutativity condition of satisfying
operator assignments, the operators $A_1,\ldots,A_r$ pairwise
commute. As each $A_i$ is self-adjoint, it is also normal, and the
Strong Spectral Theorem (c.f.\ Theorem~\ref{thm:sstgeneral})
applies. Thus, there exist a measure space $(\Omega,\mathcal{M},\mu)$,
a unitary map $U : \mathcal{H} \rightarrow L^2(\Omega,\mu)$ and
functions $a_1,\ldots,a_r \in L^\infty(\Omega,\mu)$ such that, for the
multiplication operators $E_i = T_{a_i}$ of $L^2(\Omega,\mu)$, the
relations $A_i = U^{-1} E_i U$ hold for each $i \in
[r]$. Equivalently, $U A_i U^{-1} = E_i$. From $A_i^2 = I$ we conclude
$E_i^2 = I$. Hence, $a_i(\omega)^2 = 1$ for almost all $\omega \in
\Omega$; i.e., formally $\mu(\{ \omega \in \Omega : a_i(\omega)^2
\not= 1 \}) = 0$. Thus, $a_i(\omega) \in \{ \pm 1 \}$ for almost all
$\omega \in \Omega$. The conditions of
Lemma~\ref{lem:simultaneouslysimilar} apply, thus
$P_R(A_1,\ldots,A_r)$ and $P_R(E_1,\ldots,E_r)$ are similar linear
operators. Since $P_R(A_1,\ldots,A_r) = -I$ and the unique linear
operator that is similar to $-I$ is $-I$ itself, we conclude that
$P_R(E_1,\ldots,E_r) = -I$. Now, $E_i$ is the multiplication operator
given by $a_i$, and $a_i(\omega) \in \{ \pm 1 \}$ for almost all
$\omega \in \Omega$, so $P_R(a_1(\omega),\ldots,a_r(\omega)) = -1$ for
almost all $\omega \in \Omega$. Thus the tuple $a(\omega) =
(a_1(\omega),\ldots,a_r(\omega))$ belongs to the relation $R$ for
almost all $\omega \in \Omega$. Now we are ready to define the
operators for the variables~$Y_1,\ldots,Y_t$.

For each $\omega \in \Omega$ for which the tuple $a(\omega)$ belongs
to $R$, let $b(\omega) = (b_{1}(\omega),\ldots,b_{t}(\omega)) \in\{\pm
1\}^t$ be the \emph{lexicographically smallest} tuple of witnesses to
the existentially quantified variables in
$\phi_{R}(x_1/a_1(\omega),\ldots,x_r/a_r(\omega))$; such a vector of
witnesses must exist since $\phi_R$ defines~$R$, and the
lexicographically smallest exists because $R$ is finite. For every
other $\omega \in \Omega$, define $b(\omega) =
(b_1(\omega),\ldots,b_t(\omega)) = (0,\ldots,0)$.

Note that each function $b_k : \Omega \rightarrow \mathbb{C}$ is
bounded since its range is in $\{-1,0,1\}$. We claim that such
functions of witnesses $b_k$ are also measurable functions of
$(\Omega,\mathcal{M},\mu)$. This will follow from the fact that
$a_1,\ldots,a_r$ are measurable functions themselves, the fact
that~$R$ is a finite relation, and the choice of a definite tuple of
witnesses of each $\omega \in \Omega$; the lexicographically smallest
if $a(\omega)$ is in $R$, or the all-zero tuple otherwise. We discuss
the details.

Since $R$ is finite, the event $Q = \{ \omega \in \Omega : b_k(\omega)
= \sigma \}$, for fixed $\sigma \in \{ +1,0,-1 \}$, can be expressed
as a finite Boolean combination of events of the form $Q_{i,\tau} = \{
\omega \in \Omega : a_i(\omega) = \tau \}$, where $i \in [r]$ and
$\tau \in \{ \pm 1 \}$. Here is how: If $\sigma \not= 0$,
then
\begin{equation}
Q = \bigcup_{\substack{a \in R:\\b(a)_k = \sigma}} \Big( \bigcap_{i \in [r]}
Q_{i,a_i} \Big),
\end{equation}
where $b(a)$ denotes the lexicographically smallest tuple of witnesses
in $\{ \pm 1 \}^t$ for the quantified variables in
$\phi_R(x_1/a_1,\ldots,x_r/a_r)$. If $\sigma = 0$, then $Q$ is the
complement of this set. Each $Q_{i,\tau}$ is a measurable
set in the measure space $(\Omega,\mathcal{M},\mu)$ since $a_i$ is a
measurable function and $Q_{i,\tau} = a_i^{-1}(B_{1/4}(\tau))$, where
$B_{1/4}(\tau)$ denotes the complex open ball of radius $1/4$ centered
at $\tau$, which is a Borel set in the standard topology
of~$\mathbb{C}$. Since the range of $b_k$ is in the
finite set $\{-1,0,1\}$, the preimage $b_k^{-1}(S)$ of each Borel
subset $S$ of $\mathbb{C}$ is expressed as a finite Boolean
combination of measurable sets, and is thus measurable in
$(\Omega,\mathcal{M},\mu)$.

We just proved that each $b_k$ is bounded and measurable, so its
equivalence class under almost everywhere equality is represented in
$L^\infty(\Omega,\mu)$. We may assume without loss of generality that
$b_k$ is its own representative; else modify it on a set of measure
zero in order to achieve so. Let $F_k = T_{b_k}$ be the multiplication
operator given by $b_k$ and let $Y_k$ be assigned the linear operator
$B_k = U^{-1} F_k U$, which is bounded because $b_k$ is bounded and
$U$ is unitary. Also because $U$ is unitary, each such operator is
self-adjoint and squares to the identity since $b_{k}(\omega) \in \{
\pm 1 \}$ for almost all $\omega \in \Omega$. Moreover,
$E_1,\ldots,E_r,F_1,\ldots,F_t$ pairwise commute since they are
multiplication operators; thus $A_1,\ldots,A_r,B_1,\ldots,B_t$
pairwise commute since they are simultaneously similar via
$U$. Moreover, as each atomic formula in the matrix of $\phi_{R}$ is
satisfied by the mapping sending $x_i \mapsto a_{i}(\omega)$ and $y_i
\mapsto b_i(\omega)$ for almost all $\omega \in \Omega$, another
application of~Lemma~\ref{lem:simultaneouslysimilar} shows that the
operators that are assigned to the variables of this atomic formula
make the corresponding indicator polynomial evaluate to~$-I$. This
means that the assignment to the $X$ and $Y$-variables makes a
satisfying operator assigment for the constraints of $\mathcal{J}$
that come from the constraint $(Z,R)$ in $\mathcal{I}$. As different
constraints from $\mathcal{I}$ produce their own sets of
$Y$-variables, these definitions of assignments are not in conflict
with each other, and the proof of the lemma is complete.
\end{proof}

\subsection{The Extended Construction} \label{sec:extendedconstruction}

We proved so far that satisfying operator assignments for
$\mathcal{I}$ lift to satisfying operator assignments for
$\mathcal{J}$. We do not know if the converse is true. One could try
to just take the restriction of the satisfying assignment for
$\mathcal{J}$ to the variables of $\mathcal{I}$, but there is little
chance that this will work because there is no guarantee that the
operators that are assigned to any two variables that appear together
in a constraint of $\mathcal{I}$ will commute.  Instead of trying to
modify the assignment, we modify the instance $\mathcal{J}$. Let us
discuss a slightly modified version of $\mathcal{J}$, over a very
minor extension of the constraint language $B$, that still allows
lifting of solutions, and for which the naif projection works for the
backward direction. Let us stress now that we plan to use this
modified construction over a minor extension of the constraint
language merely as a technical device to get other results.

In the following, let $\mathrm{T}$ denote the full binary Boolean
relation; i.e., $\mathrm{T} = \{ \pm 1 \}^2$. Observe that the
indicator polynomial $P_{\mathrm{T}}(X_1,X_2)$ of the relation
$\mathrm{T}$ is just the constant $-1$; the letter~$\mathrm{T}$ stands
for \emph{true}.

Let $A$ and $B$ be the constraint languages such that $A$ is
pp-definable from $B$. Let $\mathcal{I}$ and $\mathcal{J}$ be the
instances over $A$ and $B$ as defined above. The modified version of
$\mathcal{J}$ will be an instance over the expanded constraint
language $B \cup \{\mathrm{T}\}$. We denote it $\mathcal{\hat{J}}$ and
it is defined as follows: the variables and the constraints of
$\mathcal{\hat{J}}$ are defined as in $\mathcal{J}$, but we also add
all the binary constraints of the form $((X_i,X_j),\mathrm{T})$,
$((X_i,Y_k),\mathrm{T})$ or $((Y_k,Y_\ell),\mathrm{T})$, for every
four different variables $X_i$, $X_j$, $Y_k$ and $Y_\ell$ that come
from the same block in $\mathcal{J}$.

\subsection{Correctness: Operator Solutions Lift and also Project}

We argue that in this new construction, satisfying assignments not
only lift from $\mathcal{I}$ to $\mathcal{\hat{J}}$, but also project
from $\mathcal{\hat{J}}$ to $\mathcal{I}$.

\begin{lemma} \label{lem:liftandproject} Let $\mathcal{I}$ and
  $\mathcal{\hat{J}}$ be as above and let $\mathcal{H}$ be a Hilbert
  space. Then the following assertions are true.
\begin{enumerate} \itemsep=0pt
\item For every $f$ that is a satisfying operator assignment for
  $\mathcal{I}$ over $\mathcal{H}$, there exists $g$ that extends $f$
  and is a satisfying operator assignment for $\mathcal{\hat{J}}$ over
  $\mathcal{H}$,
\item For every $g$ that is a satisfying operator assignment for
  $\mathcal{\hat{J}}$ over $\mathcal{H}$, the restriction $f$ of $g$
  to the variables of $\mathcal{I}$ is a satisfying operator assignment
  for $\mathcal{I}$ over $\mathcal{H}$.
\end{enumerate}
\end{lemma}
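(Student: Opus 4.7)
The plan is to dispatch assertion (1) essentially as a corollary of Lemma~\ref{lem:liftonly}, and to prove assertion (2) by combining the full block-commutativity that the new $\mathrm{T}$-constraints enforce with the entailment machinery provided by Lemma~\ref{lem:entail}.

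For assertion (1), I would apply Lemma~\ref{lem:liftonly} to obtain a satisfying operator assignment $g$ on $\mathcal{J}$ over $\mathcal{H}$ that extends $f$ and pairwise commutes on each block. The instance $\mathcal{\hat{J}}$ differs from $\mathcal{J}$ only by the addition of binary $\mathrm{T}$-constraints between distinct variables of the same block. For each such new constraint scope $(V_1,V_2)$, the operators $g(V_1)$ and $g(V_2)$ commute by the block-commutativity clause of Lemma~\ref{lem:liftonly}, and since $P_{\mathrm{T}}(X_1,X_2)$ is the constant polynomial $-1$, the satisfaction equation $P_{\mathrm{T}}(g(V_1),g(V_2)) = -I$ holds vacuously. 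Hence $g$ is also a satisfying operator assignment for $\mathcal{\hat{J}}$.

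For assertion (2), fix a satisfying operator assignment $g$ for $\mathcal{\hat{J}}$ and let $f$ be its restriction to the variables of $\mathcal{I}$. Consider a constraint $(Z,R)$ of $\mathcal{I}$; after renaming, write $Z = (X_1,\ldots,X_r)$ and let $Y_1,\ldots,Y_t$ be the auxiliary variables of the corresponding block, so that $\mathcal{\hat{J}}$ contains the constraints $(W_j,S_j)$ for $j \in [m]$ coming from the pp-definition $\phi_R$. Set $A_i = g(X_i)$ and $B_k = g(Y_k)$. The added $\mathrm{T}$-constraints on this block force $A_1,\ldots,A_r,B_1,\ldots,B_t$ to be pairwise commuting, hence a fully commuting operator assignment. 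Satisfaction of the $(W_j,S_j)$ in $\mathcal{\hat{J}}$ yields $P_{S_j}(W_j(A,B)) = -I$ for each $j \in [m]$. Now form the polynomials $Q_j := P_{S_j}(w_j) + 1$ and $Q := P_R(x_1,\ldots,x_r) + 1$ in $\mathbb{C}[X_1,\ldots,X_r,Y_1,\ldots,Y_t]$. Since $\phi_R$ pp-defines $R$, any Boolean assignment making every $Q_j$ vanish satisfies the matrix of $\phi_R$ and therefore places $(x_1,\ldots,x_r)$ in $R$, so it makes $Q$ vanish as well. This is exactly the Boolean-domain entailment required by Lemma~\ref{lem:entail}, whose conclusion delivers $P_R(A_1,\ldots,A_r) = -I$. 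Thus $f$ satisfies $(Z,R)$; commutativity of $f$ on the scope $Z$ is already in hand, and self-adjointness and $f(V)^2 = I$ are inherited from $g$.

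The only substantive point is recognizing that the extra $\mathrm{T}$-constraints are precisely what is needed to upgrade a per-constraint commuting assignment into a per-block fully commuting one, which is the hypothesis under which Lemma~\ref{lem:entail} translates the Boolean-domain entailment given by ``$\phi_R$ defines $R$'' into the operator identity $P_R(A_1,\ldots,A_r) = -I$. I do not foresee any obstacle beyond this observation; since Lemma~\ref{lem:entail} is stated for arbitrary Hilbert spaces, no separate argument for the finite- versus infinite-dimensional case is needed.
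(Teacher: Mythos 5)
Your proposal is correct and follows essentially the same route as the paper: part (1) via the ``moreover'' clause of Lemma~\ref{lem:liftonly} plus the observation that $P_{\mathrm{T}}\equiv -1$, and part (2) via the block $\mathrm{T}$-constraints to get full commutativity followed by Lemma~\ref{lem:entail} applied to the system $Q_j = P_{S_j}(w_j)+1$, $Q = P_R+1$. The only cosmetic difference is that the paper introduces explicit auxiliary variables $Z_{-1},Z_{+1}$ with equations $Z_{-1}+1=0$ and $Z_{+1}-1=0$ to handle constants occurring in $\phi_R$, whereas you substitute them as scalars directly; both are fine.
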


\begin{proof}
  Statement \emph{1} follows from Lemma~\ref{lem:liftonly}: Fix $f$
  that is a satisfying operator assignment for $\mathcal{I}$ and let
  $g$ be given by Lemma~\ref{lem:liftonly}. This is also an assignment
  for the variables of $\mathcal{\hat{J}}$. The constraints of
  $\mathcal{\hat{J}}$ that are already in $\mathcal{J}$ are of course
  satisfied by $g$. Next consider an additional constraint of the form
  $((X_i,X_j),\mathrm{T})$, $((X_i,Y_k),\mathrm{T})$ or
  $((Y_k,Y_\ell),\mathrm{T})$, for variables $X_i$, $X_j$, $Y_k$ and
  $Y_\ell$ coming from the same block in $\mathcal{J}$. By the
  ``moreover'' clause in Lemma~\ref{lem:liftonly}, the operators
  $A_i$, $A_j$, $B_k$ and $B_\ell$ associated to $X_i$, $X_j$, $Y_k$
  and $Y_\ell$ by $g$ pairwise commute. Moreover, the associated
  polynomial constraints $P_{\mathrm{T}}(A_i,A_j) = -I$,
  $P_{\mathrm{T}}(A_i,B_k) = -I$ and $P_{\mathrm{T}}(B_k,B_\ell) = -I$
  are trivial (i.e., void) since the indicator polynomial
  $P_{\mathrm{T}}(X_1,X_2)$ of $\mathrm{T}$ is just the constant $-1$.

  For statement~\emph{2}, fix $g$ that is a satisfying operator
  assignment for $\mathcal{\hat{J}}$ over $\mathcal{H}$, and let $f$
  be the restriction of $g$ to the variables of $\mathcal{I}$. Since
  $g$ satisfies $\mathcal{\hat{J}}$, for every two variables $X_i$ and
  $X_j$ that appear together in a constraint $(Z,R)$ of $\mathcal{I}$,
  the associated operators $g(X_i)$ and $g(X_j)$ commute since $X_i$
  and $X_j$ appear in the same block of $\mathcal{J}$. Hence $f(X_i)$
  and $f(X_j)$ commute. We still need to show that the polynomial
  constraint $P_{\mathrm{R}}(f(Z)) = -I$ is satisfied for every
  constraint $(Z,R)$ of $\mathcal{I}$. To do so, we use
  Lemma~\ref{lem:entail} on an appropriately defined system of
  polynomial equations.

  Let $r$ be the arity of $R$ and let $\phi_R$ be the pp-formula as
  in~\eqref{eqn:ppformula} that defines $R$ from $B$. The polynomials
  we define have variables
  $X_1,\ldots,X_r,Y_1,\ldots,Y_t,Z_{-1},Z_{+1}$ that correspond to the
  variables and constants in~\eqref{eqn:ppformula}. For every $k \in
  [m]$, let $Q_k$ be the polynomial $P_{S_k}(W_k) + 1$, so that the
  equation $Q_k = 0$ ensures $P_{S_k}(W_k) = -1$, where $P_{S_k}$ is
  the characteristic polynomial of $S_k$, and $W_k$ is the tuple of
  components from $X_1,\ldots,X_r,Y_1,\ldots,Y_s,Z_{-1},Z_{+1}$ that
  appear in the atom $S_k(w_k)$ of~\eqref{eqn:ppformula}. Here we use
  $X_i$ and $Y_j$ in place of $x_i$ and $y_j$, respectively, and
  $Z_{-1}$ and $Z_{+1}$ in place of the constants $-1$ and $+1$,
  respectively. Let also $Q_{m+1}$ and $Q_{m+2}$ be the polynomials
  $Z_{-1} + 1$ and $Z_{+1} - 1$, so that the equations $Q_{m+1} =
  Q_{m+2} = 0$ ensure that $Z_{-1} = -1$ and $Z_{+1} = +1$. Finally,
  let $Q$ be the polynomial $P_R(X_1,\ldots,X_r) + 1$, where $P_R$ is
  the characteristic polynomial of $R$. It follows from the
  definitions that every Boolean assignment that satisfies all
  equations $Q_1 = \cdots = Q_{m+2} = 0$ also satisfies $Q = 0$. Thus
  Lemma~\ref{lem:entail} applies, and since $g$ extended to $g(Z_{-1})
  = -I$ and $g(Z_{+1}) = I$ satisfies all equations $Q_1 = \cdots =
  Q_{m+2} = 0$, it also satisfies $Q = 0$. It follows that $P_R(f(Z))
  = P_R(g(Z)) = -I$, as was to be proved.
\end{proof}

\section{Satisfiability Gaps via Operator Assignments} \label{sec:gaps}

Let $A$ be a Boolean constraint language and let $\mathcal{I}$ be an
instance over $A$. It is easy to see that the following inequalities hold:
\begin{equation}
  \nu(\mathcal{I}) ~ \leq   \nu^*(\mathcal{I}) ~ \leq \nu^{**}(\mathcal{I}).
  \label{eqn:directrelationships}
\end{equation}
Indeed, the first inequality holds because if we interpret the
field of complex numbers $\mathbb{C}$ as a 1-dimensional Hilbert
space, then the only solutions to the equation $X^2 = 1$ are $X = -1$
and $X = +1$. The second inequality is a direct consequence of the
definitions.  For the same reason, if $\mathcal{I}$ is
satisfiable in the Boolean domain, then it is satisfiable via fd-operators, and if it is satisfiable via fd-operators, then
it is satisfiable via operators.  The converses are, in
general, not true; however, finding counterexamples is
a non-trivial task. For the Boolean constraint language LIN of affine relations, counterexamples are given by Mermin's magic square
 \cite{Mermin1990,Mermin1993} for the first case, and by Slofstra's  recent construction \cite{Slofstra2016} for the second case. These will be discussed
 at some length in due time. In the rest of this
section, we characterize the Boolean constraint languages that exhibit such gaps.

We distinguish three types of gaps. Specifically, we say
that an instance $\mathcal{I}$ witnesses
\begin{enumerate} \itemsep=0pt
\item a  \emph{satisfiability gap of the first kind} if
$\nu(\mathcal{I}) < 1$ and $\nu^*(\mathcal{I}) = 1$;
\item a \emph{satisfiability gap of the second kind} if
$\nu(\mathcal{I}) < 1$ and $\nu^{**}(\mathcal{I}) = 1$;
\item a \emph{satisfiability gap of the third kind} if
$\nu^*(\mathcal{I}) < 1$ and $\nu^{**}(\mathcal{I}) = 1$.
\end{enumerate}
As a mnemonic rule, count the number of stars $^*$ that appear in the
defining inequalities in 1,~2 or~3 to recall what kind the gap is.

We say that a Boolean constraint language $A$ has a \emph{satisfiability gap of the $i$-th kind}, $i= 1,2,3$,  if
there is at least one instance $\mathcal I$ over $A$ that witnesses such a gap.  Clearly,
a gap of the first kind or a gap of the third kind implies a gap of the second kind.  In other words, if $A$ has no gap of the second kind, then $A$ has no gap of the first kind and no gap of the third kind. A priori
 no other relationships seem to hold. 
We show that, in a precise sense, either $A$ has no gaps of any kind or $A$ has a gap of every kind.
 Recall from
Section~\ref{sec:pp} that $\mathrm{T}$ denotes the full binary Boolean
relation; i.e.\ $\mathrm{T} = \{ \pm 1 \}^2$. We are now ready to state and prove the main result of this section.

\begin{theorem} \label{thm:gaps}
Let $A$ be a Boolean constraint language. Then the following statements are equivalent.
\begin{enumerate} \itemsep=0pt
\item $A$ does not have a satisfiability gap of the first kind.
\item $A$ does not have a satisfiability gap of the second kind.
\item $A \cup \{ \mathrm{T} \}$ does not have a
  satisfiability gap of the third kind,
\item $A$ is $0$-valid, or $A$ is $1$-valid, or $A$ is bijunctive, or $A$ is Horn, or $A$ is dual Horn.
\end{enumerate}
\end{theorem}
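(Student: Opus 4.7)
The plan is to establish the four-way equivalence by proving $(4) \Rightarrow (1), (2), (3)$ via a uniform substitution argument, and the contrapositive $\neg(4) \Rightarrow \neg(1), \neg(2), \neg(3)$ via gadget reductions from the known gaps of LIN. Combined with the trivial implication $(2) \Rightarrow (1)$ that follows from $\nu^* \leq \nu^{**}$, this closes the loop among all four conditions.

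For the easy direction, the two $i$-valid cases are immediate: if $A$ is $0$-valid then every $R \in A$ contains the all-$(+1)$ tuple, so $P_R(+1,\ldots,+1) = -1$, and assigning $I$ to every variable gives $P_R(I,\ldots,I) = -I$ on every constraint, hence $\nu(\mathcal{I}) = 1$; the $1$-valid case is symmetric with $-I$, and both transfer to $A \cup \{\mathrm{T}\}$ because $\mathrm{T}$ is both $0$- and $1$-valid. For the bijunctive, Horn, and dual-Horn cases I would follow the substitution method of Ji and of Cleve--Mittal: given a satisfying operator assignment $f$ over $\mathcal{H}$, iteratively pick a variable $X$, spectrally decompose $\mathcal{H}$ via Theorem~\ref{thm:sstgeneral} applied to $f(X)$, and argue that the local commutativity of operators within each constraint together with the unit-propagation structure of 2CNF/Horn/dual-Horn formulas allows one to replace $f(X)$ by a Boolean constant $\pm I$ while maintaining a satisfying operator assignment on a (possibly smaller) Hilbert space. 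Lemma~\ref{lem:entail} certifies that every polynomial constraint continues to hold at each step. Iterating terminates in a genuinely Boolean assignment, showing $\nu(\mathcal{I}) = \nu^{**}(\mathcal{I})$ and thus ruling out all three kinds of gap, for both $A$ and $A \cup \{\mathrm{T}\}$.

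For the hard direction, suppose $A$ satisfies none of the clauses of condition (4). By Post's lattice of Boolean clones (equivalently, the algebraic reformulation of Schaefer's dichotomy), the collection of relations pp-definable from $A$ with access to the two Boolean constants includes ternary affine relations such as $\{(a,b,c) : abc = -1\}$ that underlie the Mermin--Peres magic square and Slofstra's construction. Fix a LIN-instance $\mathcal{I}_{\mathrm{MP}}$ with $\nu(\mathcal{I}_{\mathrm{MP}}) < 1$ and $\nu^*(\mathcal{I}_{\mathrm{MP}}) = 1$ from Mermin--Peres, and a LIN-instance $\mathcal{I}_{\mathrm{Slof}}$ with $\nu^*(\mathcal{I}_{\mathrm{Slof}}) < 1$ and $\nu^{**}(\mathcal{I}_{\mathrm{Slof}}) = 1$ from Slofstra. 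Apply the gadget reduction of Section~\ref{sec:basicconstruction}: by Lemma~\ref{lem:ppclassical} the resulting instance $\mathcal{J}$ over $A$ remains Boolean-unsatisfiable, and by Lemma~\ref{lem:liftonly} the operator solution of the LIN-instance lifts to $\mathcal{J}$, producing gaps of the first and second kind over $A$. For the third kind I would pass instead to the extended construction $\mathcal{\hat{J}}$ of Section~\ref{sec:extendedconstruction}, which lives over $A \cup \{\mathrm{T}\}$; Lemma~\ref{lem:liftandproject} makes the operator-satisfiability of $\mathcal{\hat{J}}$ equivalent, in both the fd- and general-operator senses, to that of $\mathcal{I}_{\mathrm{Slof}}$, so Slofstra's separation transfers intact.

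The main obstacle is the Post-lattice step at the start of the hard direction. One must verify that every $A$ outside all five listed classes actually pp-defines a LIN relation rich enough for Mermin--Peres and Slofstra, and must track whether the constants $\pm 1$ need to appear as parameters in the pp-formulas (and thus as auxiliary variables wrapped in $\mathrm{T}$-constraints inside $\mathcal{\hat{J}}$, which is precisely why condition (3) is phrased for $A \cup \{\mathrm{T}\}$ rather than $A$). This requires a case analysis mirroring the algebraic half of Schaefer's original dichotomy proof, taking care that $0$-validity and $1$-validity are the only obstacles to pp-defining the constants from $A$ itself.
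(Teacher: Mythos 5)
Your overall architecture coincides with the paper's: the hard direction is exactly the paper's argument (Post's lattice gives a pp-definition of LIN from any $A$ outside the five classes, Mermin--Peres and Slofstra supply the gaps of the first and third kind for LIN, and Lemmas~\ref{lem:ppclassical}, \ref{lem:liftonly} and \ref{lem:liftandproject} transport them to $A$ and $A\cup\{\mathrm{T}\}$ respectively; the paper packages this as Lemma~\ref{lem:Post} and Lemma~\ref{lem:gapfirstkindupwards}). One small correction there: the reason condition (3) is stated for $A\cup\{\mathrm{T}\}$ is not the handling of the constants $\pm 1$ (the framework already allows constants in constraint scopes); it is that the downward direction of Lemma~\ref{lem:liftandproject} --- needed to conclude that $\mathcal{\hat J}$ is \emph{not} fd-satisfiable --- requires the $\mathrm{T}$-constraints to force commutativity of the operators assigned to all variables of a block, so that a solution of $\mathcal{\hat J}$ projects to a solution of $\mathcal{I}_{\mathrm{Slof}}$.

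The genuine gap is in your treatment of the bijunctive, Horn and dual-Horn cases. Your plan --- spectrally decompose with respect to $f(X)$ and iteratively replace $f(X)$ by $\pm I$ on a smaller Hilbert space --- is Ji's eigenspace strategy, and as sketched it does not go through: restricting to an eigenspace of $f(X)$ only makes sense for operators that commute with $f(X)$, but variables that never share a constraint with $X$ carry operators with no commutativity guarantee, so they need not preserve that eigenspace and you do not obtain ``a satisfying operator assignment on a smaller Hilbert space.'' The paper deliberately avoids this route and instead proves Lemma~\ref{lem:twosatandhornsat} by pure identity manipulation: for 2SAT it takes the two implication-graph paths witnessing Boolean unsatisfiability and chains the relations $(I-f(\ell_i))(I+f(\ell_{i+1}))=0$ to derive both $f(x)=I$ and $f(x)=-I$; for Horn it simulates a unit-resolution refutation at the operator level. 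You are also missing the step, handled in the paper via Lemma~\ref{lem:entail}, that converts a constraint whose relation is defined by a \emph{conjunction} of 2-clauses (or Horn clauses) into the individual clause constraints, which is what licenses applying the clause-level argument to an arbitrary bijunctive or Horn constraint language rather than to literal 2CNF/Horn formulas.
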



The proof of Theorem \ref{thm:gaps} has two main parts. In the first part, we show that if $A$ satisfies at least one of the conditions in the fourth statement, then $A$ has no satisfiability gaps of the first kind or the second kind, and $A\cup\{{\rm T}\}$ has no satisfiability gaps of the third kind. In the second part, we show that, in all other cases,  $A$ has satisfiability  gaps of the first kind and the second kind, and $A \cup \{{\rm T}\}$ has satisfiability gaps of the third kind. The ingredients in the proof of the second part are the existence of gaps of all three kinds for LIN, results about Post's lattice \cite{Post1941}, and \emph{gap-preserving} reductions that use the results about pp-definability established in Section \ref{sec:pp}.

\subsection{No Gaps of Any Kind} \label{subsec:no-gaps}

Assume that $A$ satisfies at least one of the conditions in the fourth statement in Theorem~\ref{thm:gaps}. 
First, we observe that the full relation ${\rm T}$ is $0$-valid, $1$-valid, bijunctive, Horn, and dual Horn. Indeed, ${\rm T}$ is obviously $0$-valid and $1$-valid. Moreover, it is bijunctive, Horn, and dual Horn because it is equal to the set of satisfying assignments of the Boolean formula $(x\lor \neg x)\land (y \lor \neg y)$, which is bijunctive, Horn, and dual Horn. Therefore, to prove that the fourth statement in Theorem~\ref{thm:gaps} implies the other three statement, it suffices to prove that if $A$ satisfies at least one of the conditions in the fourth statement, then $A$ has no gaps of any kind. Towards this goal, we argue by cases.

We start with the trivial cases in which  $A$ is $0$-valid or $1$-valid. If an instance $\mathcal{I}$ of $A$ contains a constraint of the form $(Z,\mathrm{F})$,
where $\mathrm{F}$ is an empty relation (of some arity), then $\mathcal{I}$ is not satisfiable by any operator assignment. Otherwise, $\mathcal{I}$ is satisfiable in the Boolean domain, hence it is satisfiable by assigning the identity operator $I$ to every variable, if $A$ is $0$-valid, or by assigning the operator $-I$ to every variable, if $A$ is $1$-valid.

Next, we have to show that if $A$ is bijunctive or Horn or dual Horn, then $A$ has no gaps of any kind. As discussed earlier, it suffices to show that $A$ does not have a gap of the second kind (since a gap of the first kind or a gap of the third kind implies a gap of the second kind).

  Ji \cite{Ji2013} proved that if $\mathcal{I}$ is a 2SAT instance  or a HORN SAT
instance that is  satisfiable via fd-operators, then $\mathcal{I}$  is also
satisfiable in the Boolean domain. In other words, Ji showed that 2SAT and  HORN~SAT have no gaps of the first kind.
This is quite close to what we have to prove, but there are two differences.
 First, a constraint language $A$ of Boolean relations is bijunctive  if every  relation in $A$ is the set of satisfying assignments of a 2CNF-formula,  but this formula need not be a $2$-clause. Similarly, $A$ is Horn (dual Horn) if every relation in $A$ is the set of satisfying assignments of  a Horn (dual Horn) formula, but this formula need not be a Horn (dual Horn) clause. This, however, is a minor complication that can be handled with some additional arguments, the details of which will be provided later on.
Second, at first glance, Ji's proof for 2SAT and  HORN SAT does not seem to extend to operator assignments of arbitrary (finite or infinite) dimension.
 The reason for this is that Ji's argument relies on the
existence of eigenvalues and associated orthogonal eigenspaces for the
linear operators, which are not guaranteed to exist in the
infinite-dimensional case, even for self-adjoint bounded linear
operators. Note however that in our case we have the additional
requirement that the operators satisfy $A^2 = I$, and in such a case
their eigenvalues and associated eigenspaces can be reinstated. This observation could perhaps be used to give a proof along the lines of Ji's
that 2SAT and  HORN SAT have no gaps of the second kind. However, we prefer
to give an alternative and more direct proof  that does
not rely at all  on the existence of eigenvalues. Our proof is based on
the manipulation of non-commutative polynomial identities, a method
that has been called \emph{the substitution method} (see,
e.g., \cite{CleveM14}).

\begin{lemma} \label{lem:twosatandhornsat} Let $\mathcal{I}$ be a {\rm 2SAT} instance or a {\rm HORN SAT} instance or a {\rm DUAL~HORN~SAT} instance. Then the following statements are equivalent.
  \begin{enumerate} \itemsep=0pt
  \item $\mathcal{I}$ is  satisfiable in the Boolean domain;
  \item $\mathcal{I}$ is  satisfiable via fd-operators;
  \item $\mathcal{I}$ is  satisfiable via operators.
  \end{enumerate}
\end{lemma}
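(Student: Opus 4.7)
The implications $(1)\Rightarrow(2)$ and $(2)\Rightarrow(3)$ follow immediately from the chain of inequalities $\nu(\mathcal{I})\leq\nu^*(\mathcal{I})\leq\nu^{**}(\mathcal{I})$ recorded in~\eqref{eqn:directrelationships}, so the only content is the converse implication $(3)\Rightarrow(1)$. I fix a satisfying operator assignment $A_1,\ldots,A_n$ for $\mathcal{I}$ over some Hilbert space $\mathcal{H}$, which I may assume is non-zero. The overall plan is to emulate at the operator level the three classical polynomial-time algorithms --- unit propagation for HORN~SAT, its mirror for DUAL~HORN~SAT, and the implication-graph method for 2SAT --- by manipulating the clause constraints as non-commutative polynomial identities, in the spirit of the substitution method of \cite{CleveM14}. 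Since the manipulations use only the clause-local commutativities and the relation $A_i^2=I$, they apply uniformly to any Hilbert space, bypassing any spectral decomposition.

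For HORN~SAT, the plan is to run unit propagation on $\mathcal{I}$ treated as a Boolean formula, while maintaining the inductive invariant that every variable propagated to a truth value $\tau\in\{\pm 1\}$ satisfies $A_i=\tau I$. The base case follows from the fact that a unit clause's constraint is $\sg(\ell)A_{v(\ell)}=-I$, which directly forces the sought operator equality. For the inductive step, a Horn clause $\neg x_{i_1}\lor\cdots\lor\neg x_{i_{k-1}}\lor x_{i_k}$ whose negative literals have all been propagated to ``true'' gives the operator constraint $\prod_{j<k}(I-A_{i_j})\cdot(I+A_{i_k})=0$, which after substituting $A_{i_j}=-I$ collapses to $2^{k-1}(I+A_{i_k})=0$ and forces $A_{i_k}=-I$. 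If unit propagation reports a contradiction at some all-negative clause whose variables have all been propagated to ``true'', the same substitution yields $2^kI=0$, contradicting $\mathcal{H}\neq\{0\}$. Otherwise, classical completeness of unit propagation for HORN~SAT supplies a Boolean satisfying assignment by setting every unassigned variable to ``false''. The DUAL~HORN~SAT case is the mirror image and requires no new ideas.

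For 2SAT, the plan is to invoke the standard implication-graph characterization: $\mathcal{I}$ is Boolean unsatisfiable if and only if some variable $x$ admits directed paths $x\leadsto\neg x$ and $\neg x\leadsto x$ in the implication graph whose edges $\ell\to\ell'$ come from 2-clauses $\neg\ell\lor\ell'$. Writing $Y_\ell=\sg(\ell)A_{v(\ell)}$ for a literal $\ell$ over variable $v(\ell)$, each 2-clause $\ell_1\lor\ell_2$ yields the operator equation $(I+Y_{\ell_1})(I+Y_{\ell_2})=0$ with $Y_{\ell_1}$ and $Y_{\ell_2}$ commuting. The key substitution lemma, which I prove by induction on chain length, states that for every directed path $\ell_0\to\ell_1\to\cdots\to\ell_k$ in the implication graph the operator identity $(I-Y_{\ell_0})(I+Y_{\ell_k})=0$ holds on $\mathcal{H}$. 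The inductive step is the purely algebraic manipulation: from $(I-Y_{\ell_0})(I+Y_{\ell_j})=0$ one rewrites $(I-Y_{\ell_0})Y_{\ell_j}=-(I-Y_{\ell_0})$ and hence $(I-Y_{\ell_0})(I-Y_{\ell_j})=2(I-Y_{\ell_0})$; multiplying the clause identity $(I-Y_{\ell_j})(I+Y_{\ell_{j+1}})=0$ on the left by $(I-Y_{\ell_0})$ then yields $2(I-Y_{\ell_0})(I+Y_{\ell_{j+1}})=0$. Applied to the two directions of the forbidden cycle, the lemma gives $(I-A_x)^2=0$ and $(I+A_x)^2=0$, which together with $A_x^2=I$ collapse to $A_x=I$ and $A_x=-I$, hence $2I=0$ and $\mathcal{H}=\{0\}$, a contradiction.

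The hard part is exactly the 2SAT inductive step: along a long implication chain, $Y_{\ell_0}$ and $Y_{\ell_j}$ originate from different clauses and therefore carry no a priori commutativity relation, so a naive multiplicative composition of the per-clause identities does not go through. The polynomial manipulation above resolves this by propagating the derived identity through the chain using only the clause-local commutativity of $Y_{\ell_j}$ with $Y_{\ell_{j+1}}$, which is precisely the flexibility the substitution method is designed to provide and the reason the same proof survives the passage from the finite-dimensional to the general Hilbert-space setting without any change.
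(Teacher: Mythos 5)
Your proof is correct and follows essentially the same route as the paper: the implication-graph argument with the inductively propagated identity $(I-Y_{\ell_0})(I+Y_{\ell_k})=0$ is the paper's 2SAT proof almost verbatim, and your Horn argument is the paper's unit-resolution simulation recast as forward chaining with the invariant that every derived variable carries a scalar operator $\pm I$. The only cosmetic differences are that your path lemma absorbs the final clause $(\overline{\ell_r}\lor\neg x)$ into the induction, so you never need the commutativity of $f(x)$ with $f(\ell_r)$ that the paper invokes in its last step, and your Horn invariant is a slightly sharper form of the paper's claim that every clause in the derivation remains operator-satisfied.
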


\noindent We split the proof into  two: one for 2SAT and
another one for HORN SAT; the proof for  DUAL HORN SAT is analogous to the proof for HORN SAT, and it is omitted.

\begin{proof}[Proof of Lemma~\ref{lem:twosatandhornsat} for 2SAT]
Let $\mathcal{I}$ be a 2CNF-formula.  The implications $\emph{1}
\Longrightarrow \emph{2}$ and $ \emph{2} \Longrightarrow \emph{3}$
follow from the definitions.  To prove the implication $\emph{3}
\Longrightarrow \emph{1}$, assume that $f$ is a satisfying operator
assigment for $\mathcal{I}$ over a (finite-dimensional or infinite-dimensional)
Hilbert space $\mathcal{H}$, and, towards a contradiction, assume that
$\mathcal{I}$ is  unsatisfiable in the Boolean domain. We will make use of the
well-known characterization of unsatisfiable in the Boolean domain 2SAT
instances in terms of a reachability property of their associated
\emph{implication graph}. For~$\mathcal{I}$, the implication graph is
the directed graph $G$ that has one vertex for each literal $x$ or
$\neg x$ of every variable $x$ in $\mathcal{I}$, and two directed
edges for each clause $(\ell_1 \vee \ell_2)$ of $\mathcal{I}$, one
edge from $\overline{\ell_1}$ to $\ell_2$, and another one from
$\overline{\ell_2}$ to $\ell_1$. The well-known characterization
states that $\mathcal{I}$ is unsatisfiable in the Boolean domain if and only if there exists
a variable $x$ and two directed paths in $G$, one from the variable
$x$ to the literal $\neg x$, and another one from the literal $\neg x$
to the variable $x$ (see, e.g., \cite{Papadimitriou94}). Accordingly,
let $\ell_1,\ldots,\ell_r$ and $m_1,\ldots, m_s$ be literals such that
$x,\ell_1,\ldots,\ell_r,\neg x$ and $\neg x,m_1,\ldots,m_s,x$ are the
vertices in the paths from $x$ to $\neg x$ and from $\neg x$ to $x$,
respectively, in the order they are traversed.

  The existence of the path $x,\ell_1,\ldots,\ell_r,\neg x$ from the
  variable $x$ to the literal $\neg x$ in the implication graph $G$
  means that the clauses
  \begin{equation}
    (\neg x\lor \ell_1),\ (\overline{\ell_1}\lor \ell_2),\ \ldots, (\overline{\ell_{r-1}}\lor \ell_r),\ (\overline{\ell_r}\lor \neg x)
  \end{equation}
  are clauses of the instance $\mathcal{I}$. Symmetrically, the
  existence of the path $\neg x,m_1,\ldots,m_s, x$ from the literal
  $\neg x$ to the variable $ x$ in the implication graph
  $G$ means that the clauses
  \begin{equation}
    ( x\lor m_1),\ (\overline{m_1}\lor m_2),\ \ldots, (\overline{m_{s-1}}\lor m_s),\ (\overline{m_s}\lor  x)
  \end{equation}
  are clauses of the instance $\mathcal{I}$.

  In the case of satisfiability in the Boolean domain, one reasons that the
  instance $\mathcal{I}$ is unsatisfiable, because if it were
  satisfiable by some truth assignment, then the path of implications
  from $x$ to $\neg x$ forces $x$ to be set to \emph{false}, while the
  path of implications from $\neg x$ to $x$ forces $x$ to be set to
  \emph{true}. In what follows, we will show that, with some care,
  essentially the same reasoning can be carried out for operator
  assignments that satisfy the instance $\mathcal{I}$.

  Extend the operator assignment $f$ to all literals by setting
  $f(\ell) = \sg(\ell)f(x)$, where $x$ the variable underlying $\ell$.
  Since $f$ is a quantum satisfying assignment for $\mathcal{I}$,
  Lemma~\ref{lem:fourierofaclause} implies that
  \begin{eqnarray}
  (I-f(x))(I+f(\ell_1)) &=& 0 \label{eq:base}\\
   (I-f(\ell_i))(I+f(\ell_{i+1})) & = & 0, \quad 1\leq i\leq r-1.  \label{eq:i}\\
   (I-f(\ell_r))(I-f(x))   & = & 0 \label{eq:r}
  \end{eqnarray}
  We now claim that
  \begin{eqnarray} \label{eq:claim}
  (I-f(x))(I+f(\ell_i)) & = &0,  \quad 1\leq i\leq r.
  \end{eqnarray}

  We prove the claim by induction on $i$.  For $i = 1$, what we need
  is just equation~\eqref{eq:base}.  By induction, assume now that
    \begin{eqnarray} \label{eq:ind}
    (I-f(x))(I+f(\ell_{i-1})) & = &  0
    \end{eqnarray}
     holds for some $i$ with $2\leq i\leq r-1$. By~\eqref{eq:i}, we
     have that
     \begin{eqnarray} \label{eq:i-1}
     (I-f(\ell_{i-1}))(I+f(\ell_{i})) & = & 0.
     \end{eqnarray}
    holds.  First, by multiplying equation~\eqref{eq:ind} from the
    right by $(I+f(\ell_i))$, we get
  \begin{eqnarray} \label{eqn:first-prod}
 (I-f(x))(1+f(\ell_{i-1}))(1+f(\ell_i)) & = &  0
  \end{eqnarray}
  Second, by multiplying equation~\eqref{eq:i-1} from the left by
  $(I-f(x))$, we get
   \begin{eqnarray} \label{eqn:second-prod}
 (I-f(x))(1-f(\ell_{i-1}))(1+f(\ell_i)) & = & 0
  \end{eqnarray}
 By adding equations~\eqref{eqn:first-prod}
 and~\eqref{eqn:second-prod}, we obtain
  \begin{eqnarray} \label{eqn:goal}
 (I-f(x))(I+f(\ell_i)) & = &  0,
  \end{eqnarray}
  as desired. In particular, by considering the case $i=r$, we get
 \begin{eqnarray}
 (I-f(x))(I+f(\ell_r)) & = &  0,
  \end{eqnarray}
which, after multiplying out the left-hand side, becomes
\begin{eqnarray} \label{eqn:goal2}
 I+ f(\ell_r) -f(x) - f(x) f(\ell_r) & = &  0.
  \end{eqnarray}
Furthermore, by multiplying out the left-hand side of
equation~\eqref{eq:r}, we get
\begin{eqnarray} \label{eqn:goal3}
 I-f(x) -f(\ell_r) +f(\ell_r)f(x) & = & 0.
 \end{eqnarray}
 Since the variable $x$ and the literal $\ell_r$ appear in the same
 clause of the instance $\mathcal{I}$, namely, the clause
 $(\overline{\ell_r} \lor \neg x)$, we have that $f(x)f(\ell_r) =
 f(\ell_r)f(x)$. Therefore, by adding equations~\eqref{eqn:goal2}
 and~\eqref{eqn:goal3}, we get that $2I-2f(x)=0$, which implies that
 $f(x)= I$.

  An entirely symmetric argument using the path from $\neg x$ to $x$,
  instead of the path from $x$ to $\neg x$, gives $f(x) = -I$, which
  contradicts the previous finding that $f(x)=I$.
\end{proof}

\begin{proof}[Proof of Lemma~\ref{lem:twosatandhornsat} for HORN SAT]
Let $\mathcal{I}$ be a Horn formula.  As with the proof for  2SAT, the
only non-trivial direction is $\emph{3} \Longrightarrow \emph{1}$.  To
prove the implication $\emph{3} \Longrightarrow \emph{1}$, assume that
$f$ is a satisfying operator assigment for $\mathcal{I}$ over a
(finite-dimensional or infinite-dimensional) Hilbert space $\mathcal{H}$, and,
towards a contradiction, assume that $\mathcal{I}$ is
unsatisfiable in the Boolean domain. As in the proof for 2SAT, let $f$ be extended to all
literals by $f(\ell) = \sg(\ell) f(x)$, where $x$ is the variable
underlying $x$. We will make use of the characterization of
 unsatisfiable in the Boolean domain Horn instances in terms of unit resolution.
For this, we need to first introduce some terminology and notation. If
$C$ and $C'$ are two clauses such that $C$ contains a literal $\ell$
and $C'$ contains the complementary literal $\overline{\ell}$ of
$\ell$, then the \emph{resolution rule} produces in one step the
\emph{resolvent} clause $D$ that is the disjunction of all literals in
the \emph{premises} $C$ and $C'$ other than $\ell$ and
$\overline{\ell}$. The \emph{unit resolution rule} is the special case
of the resolution rule in which (at least) one of the clauses $C$ and
$C'$ is a single literal.  It is well known (see, e.g., \cite{Schoning2008}) that a Horn formula
$\mathcal{I}$ is unsatisfiable if and only if there is a \emph{unit
  resolution derivation} of the empty clause from the clauses of
$\mathcal{I}$, i.e., there is a sequence $C_1,\ldots,C_m$ of clauses
such that, for each $i \in \{1,\ldots,m\}$, we have that $C_i$ is one
of the clauses of $\mathcal{I}$ or $C_i$ is obtained from earlier
clauses $C_j$ and $C_k$ in the sequence via the unit resolution
rule. Clearly, in a unit resolution derivation of the empty clause,
the last application of the unit resolution rule involves two clauses
each of which is the complementary literal of the other.

In what follows, we will show that a unit resolution derivation
can be ``simulated" by a sequence of equations involving operator
assignments. We begin by formulating and proving the following
claim.

\smallskip

\noindent{\bf Claim 1:}
Let $(\ell_1\lor \cdots \lor \ell_r)$ be
clause and let $\overline{\ell_j}$ be the complementary literal of
some literal $\ell_j$ in that clause. If $f$ satisfies both the clause
$(\ell_1\lor \cdots \lor \ell_r)$ and the literal $\overline{\ell_j}$,
then $f$ also satisfies the resolvent $(\ell_1 \lor \cdots \lor
\ell_{j-1} \lor \ell_{j+1} \lor \cdots \lor \ell_r)$ of $(\ell_1\lor
\cdots \lor \ell_r)$ and $\overline{\ell_j}$; equivalently, the operators
$\{ f(\ell_i) : i \not= j \}$ pairwise commute and
\begin{eqnarray}\label{eq:horn1}
\prod_{i=1}^{j-1} (I+f(\ell_i)) \prod_{i=j+1}^{r} (I+f(\ell_i)) & = & 0.
\end{eqnarray}

Observe that for the unit resolution rule, as is the case here, the
resolvent is always a subclause of one of the premises. In particular,
since $f$ satisfies both premises, all the operators involved in the
premises commute, and so do the ones involved in the resolvent clause.
To complete the proof of the claim observe that, since $f$ satisfies
both the clause $(\ell_1\lor \cdots \lor\ell_r)$ and the literal
$\overline{\ell_j}$, the corresponding operators commute, and the
identity of polynomials in commuting variables of
Lemma~\ref{lem:fourierofaclause} implies that
\begin{eqnarray}
\prod_{i=1}^r(I+f(\ell_i)) & = & 0 \label{eq:horn2}\\
(I-f(\ell_j))  & = & 0.  \label{eq:horn3}
\end{eqnarray}
By multiplying equation~\eqref{eq:horn3} by
$\prod_{i=1}^{j-1}(I+f(\ell_i))$ from the left, and by
$\prod_{i=j+1}^r (I+f(\ell_i))$ from the right, we get
\begin{eqnarray}
\left({\prod_{i=1}^{j-1}(I+f(\ell_i)})\right)(I-f(\ell_j))\left({\prod_{i=j+1}^r(I+f(\ell_i))}\right) & = & 0. \label{eq:horn4}
\end{eqnarray}
By adding equations~\eqref{eq:horn2} and~\eqref{eq:horn4}, we
get~\eqref{eq:horn1}, which completes the proof of Claim 1.

Consider now a unit resolution derivation $C_1,\ldots,C_m$ of the
empty clause from the clauses of $\mathcal{I}$. Since the operator
assignment $f$ satisfies every clause of $\mathcal{I}$, we can apply
Claim~1 repeatedly and, by induction, show that $f$ satisfies each
clause in this derivation. Since $C_m$ is the empty clause, it must
have been derived via the unit resolution rule from two earlier
clauses each of which is the complementary literal of the other, say,
$\ell$ and $\overline{\ell}$. So, we must have $f(\ell)= -I$ and
$f(\overline{\ell})= -I$, which is a contradiction since $f(\ell) =
-f(\overline{\ell})$.
\end{proof}

In what follows, we will use Lemma~\ref{lem:twosatandhornsat} to show that if $A$ is bijunctive or Horn or dual Horn, then $A$ has no gaps of any kind.

Assume that $A$ is bijunctive. Note
that we cannot apply Lemma~\ref{lem:twosatandhornsat} directly to conclude that $A$ has no gaps of any kind, because the relations in the
constraint-language $A$ are defined by conjunctions of 2-clauses, but need not be defined by individual 2-clauses.
 In order to be able to apply Lemma~\ref{lem:twosatandhornsat}, we first need to verify the following claim. Assume that $(Z, R)$ is a constraint in
which $R$ is a relation in $A$ defined by a conjunction $C_1 \wedge
\cdots \wedge C_m$, where each $C_i$ is a 2-clause on the variables in
$Z$. Then a satisfying operator assignment for the instance consisting of the single constraint
 $(Z,R)$ will also satisfy each of the 2-clause constraints
$(W_1,C_1),\ldots,(W_r,C_r)$ individually, where $W_i =
(Z_{c_i},Z_{d_i})$ is the tuple of components of $Z =
(Z_1,\ldots,Z_r)$ that appear in $C_i$. To prove this claim, first note that
the commutativity condition on the operators assigned to the variables
in $W_i$ is guaranteed by the commutativity condition on the variables
in $Z$. Thus, we just need to check that the characteristic polynomial
of $C_i$ evaluates to $-I$, and to do so we use Lemma~\ref{lem:entail}
for an appropriately defined system of polynomial equations. In
the remaining, fix $i \in [m]$.

Our polynomials  have variables $X_1,\ldots,X_r$. Let $Q_1$ be the
polynomial $P_R(X_1,\ldots,X_r) + 1$, so that the equation $Q_1 = 0$
ensures $P_R(X_1,\ldots,X_r) = -1$, where $P_R$ is the characteristic
polynomial of $R$. Let $Q$ be the polynomial $P_{C_i}(X_{c_i},X_{d_i})
+ 1$, so that the equation $Q = 0$ ensures $P_{C_i}(X_{c_i},X_{d_i}) =
-1$, where $P_{C_i}$ is the characteristic polynomial of $C_i$, and
$c_i$ and $d_i$ are the indices of the components of $Z$ in
$W_i$. Then, every  Boolean assignment that satisfies the
equation $Q_1 = 0$ belongs to $R$, from which it follows that the Boolean assignment
satisfies the conjunct $C_i$ in the bijunctive definition of $R$, and
hence it also satisfies the equation $Q = 0$. Thus,
Lemma~\ref{lem:entail} applies and every operator assignment that
satisfies $P_R(Z) = -I$ also satisfies $P_{C_i}(W) = -I$, as was to be
proved.

We are now ready to complete the proof that if $A$ is bijunctive, then $A$ has no gaps. Let $\mathcal{I}$ be an instance over $A$ that is
   satisfiable via operators. The preceding
  paragraph shows that the 2SAT instance that results from replacing
  each constraint in the instance $\mathcal{I}$ by its defining
  conjunction of 2-clauses is also satisfiable via operators.
  By Lemma~\ref{lem:twosatandhornsat}, this  2SAT
  instance is also  satisfiable in the Boolean domain. But then $\mathcal{I}$
  itself is satisfiable in the Boolean domain, as was to be shown.

   If $A$ is Horn or
  dual Horn, then  the proof is entirely analogous.

\subsection{Background on Post's Lattice} \label{subsec:Post}

Before we start with the second part in the proof of
Theorem~\ref{thm:gaps}, we need to introduce some basic terminology
and basic results from universal algebra; we devote this
section to that.

\newcommand{\pol}{\mathrm{Pol}}
\newcommand{\inv}{\mathrm{Inv}}

Let $R \subseteq \{ \pm 1 \}^r$ be a Boolean relation of arity $r$ and
let $f : \{ \pm 1 \}^m \rightarrow \{ \pm 1 \}$ be a Boolean operation
of arity $m$. The relation $R$ is \emph{invariant} under $f$ if, for
all sequences of $m$ many $r$-tuples
$(a_{1,1},\ldots,a_{1,r}),\ldots,(a_{m,1},\ldots,a_{m,r})$ in $\{ \pm
1 \}^r$, the following holds:
\begin{equation}
\begin{array}{lll}
\text{if }\; (a_{1,1},\ldots,a_{1,r}),\ldots,(a_{m,1},\ldots,a_{m,r})\;
\text{ are tuples in } R, \\
\text{then }\; (f(a_{1,1},\ldots,a_{m,1}),\ldots,f(a_{1,r},\ldots,a_{m,r}))\;
\text{ is also a tuple in } R.
\end{array}
\label{eqn:tuple}
\end{equation} 
Note that the tuple in the second line is obtained by applying the
$m$-ary operation $f$ to the $m$ many tuples in the first line
\emph{componentwise}. If $A$ is a Boolean constraint language, we say
that $A$ is invariant under $f$ if every relation in $A$ is invariant
under $f$. Whenever $A$ is invariant under $f$ we also say that $f$ is
a \emph{closure operation} of $A$.

The importance of the closure operations of a constraint language
stems from the fact that they completely determine the relations that
are pp-definable from it. This semantic characterization of the
syntactic notion of pp-definability was discovered by Geiger
\cite{Geiger1968} and, independently, Bodnarchuk et
al.\ \cite{Bodnarchuk1969}, for all constraint languages of arbitrary
but finite domain. Here we state the special case of this
characterization for the Boolean domain, since only this special case
is needed in our applications.

\begin{theorem}[\cite{Geiger1968,Bodnarchuk1969}] \label{thm:geigeretal} 
Let $A$ be a Boolean constraint language and
let $R$ be a Boolean relation. The following statements are equivalent:
\begin{enumerate} \itemsep=0pt
\item $R$ is pp-definable from $A$ by a pp-formula without constants,
\item $R$ is invariant under all Boolean closure operations of $A$.
\end{enumerate}
\end{theorem}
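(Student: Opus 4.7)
The plan is to prove the two directions of the equivalence separately; the direction $1 \Rightarrow 2$ is routine structural induction, while $2 \Rightarrow 1$ uses the classical Bodnarchuk--Geiger indicator construction.

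For the direction $1 \Rightarrow 2$, I argue by structural induction on a pp-formula $\phi(x_1,\ldots,x_r)$ (without constants) defining $R$, showing that any Boolean operation $f$ that preserves every relation in $A$ also preserves the relation defined by $\phi$. For atomic $\phi=S(z)$ with $S\in A$, invariance under $f$ is the defining hypothesis on $f$. Conjunction preserves invariance componentwise. For an existential step $\exists y.\psi(x_1,\ldots,x_r,y)$: given $m$ tuples in the projected relation, pick arbitrary witnesses for $y$ tuple-by-tuple and invoke the induction hypothesis on $\psi$ to see that the componentwise image under $f$ has a witness, hence lies in the projection.

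For the direction $2 \Rightarrow 1$, assume $R\neq\emptyset$; the empty case is handled by standard conventions. Enumerate $R=\{t^{(1)},\ldots,t^{(m)}\}$ and let $r$ be its arity. For each column index $j\in[r]$ define $v_j:=(t^{(1)}_j,\ldots,t^{(m)}_j)\in\{\pm 1\}^m$. I build a pp-formula whose existentially quantified variables are indexed by $\{\pm 1\}^m$: introduce one variable $y_v$ for each $v\in\{\pm 1\}^m$ and identify $y_{v_j}$ with the free variable $x_j$ for each $j\in[r]$. For every $k$-ary $S\in A$ and every $k$-tuple $(u_1,\ldots,u_k)\in(\{\pm 1\}^m)^k$ satisfying $(u_1[\ell],\ldots,u_k[\ell])\in S$ for all $\ell\in[m]$, include the atom $S(y_{u_1},\ldots,y_{u_k})$. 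Let $\phi(x_1,\ldots,x_r)$ be the existential closure over the remaining $y_v$'s of the conjunction of all such admissible atoms.

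The forward inclusion $R\subseteq\phi(\{\pm 1\}^r)$ is by direct witnessing: for $t^{(\ell)}\in R$, set $y_v:=v[\ell]$ for every $v\in\{\pm 1\}^m$; then $x_j=y_{v_j}=v_j[\ell]=t^{(\ell)}_j$, and each selected atom is satisfied because its admissibility condition at index $\ell$ gives exactly $(u_1[\ell],\ldots,u_k[\ell])\in S$. For the reverse inclusion, suppose $\phi(a_1,\ldots,a_r)$ holds with witnesses $\beta_v$ extended by $\beta_{v_j}:=a_j$ to a function $\beta:\{\pm 1\}^m\to\{\pm 1\}$. Define the $m$-ary Boolean operation $f:\{\pm 1\}^m\to\{\pm 1\}$ by $f(v):=\beta_v$. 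The admissibility rule governing the choice of atoms is precisely the statement that $f$ preserves every relation of $A$: for any $S\in A$ of arity $k$ and tuples with $(u_1[\ell],\ldots,u_k[\ell])\in S$ for every $\ell$, the corresponding atom is a conjunct of $\phi$, so $(f(u_1),\ldots,f(u_k))=(\beta_{u_1},\ldots,\beta_{u_k})\in S$. Hence $f$ is a closure operation of $A$, so by the hypothesis of statement~2 the relation $R$ is invariant under $f$; applying $f$ componentwise to $t^{(1)},\ldots,t^{(m)}$ yields the tuple whose $j$-th entry is $f(v_j)=\beta_{v_j}=a_j$, which therefore lies in $R$.

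The main obstacle is notational rather than mathematical: care is needed to handle repeated columns $v_i=v_j$ for $i\neq j$ (which correctly force $x_i=x_j$ via the convention that pp-atoms allow repeated variables in their argument tuples), and the edge case $R=\emptyset$ under the adopted convention about empty conjunctions. The conceptual core of the argument is the bijective match between the admissibility conjuncts in $\phi$ and the preservation conditions making the assignment $v\mapsto\beta_v$ a closure operation of $A$; this is what converts a syntactic satisfaction of $\phi$ into a direct invocation of invariance.
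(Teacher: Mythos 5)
The paper does not prove Theorem~\ref{thm:geigeretal}: it is stated as a classical result and attributed to Geiger and to Bodnarchuk et al., so there is no in-paper argument to compare against. Your proof is the standard one and is essentially correct: the easy direction by structural induction on the pp-formula (atoms by the defining property of a closure operation, conjunction by intersection, existential quantification by choosing witnesses tuple-by-tuple), and the hard direction by the canonical ``indicator'' formula on variables indexed by $\{\pm 1\}^m$, where a satisfying assignment of the quantified variables is literally an $m$-ary closure operation of $A$; the match between admissible atoms and preservation conditions is exactly the right mechanism. Two caveats deserve more than the passing mention you give them. First, if $A$ is infinite your conjunction of admissible atoms is infinite; since a conjunction over the fixed variable set $\{y_v : v \in \{\pm 1\}^m\}$ can define only finitely many distinct subsets of $\{\pm 1\}^{2^m}$, it is equivalent to a finite subconjunction, but this reduction should be stated. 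Second, the ``standard conventions'' you invoke do not quite rescue the degenerate cases under the paper's syntax for pp-formulas (no equality atoms, and a tuple of distinct free variables $x_1,\ldots,x_r$): the empty relation is invariant under every operation yet is not pp-definable from, say, $A=\{\mathrm{T}\}$; and when two columns of $R$ coincide (e.g.\ $R$ the binary equality relation, which is invariant under all operations) your construction identifies $x_i$ with $x_j$, which is legitimate only if the designated tuple of free variables may repeat a variable or if equality atoms are admitted. These are well-known defects of the theorem as literally stated rather than of your argument, and they do not affect the paper's applications of the theorem, but a careful write-up should handle them explicitly rather than defer to convention.
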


In the following we refer to Theorem~\ref{thm:geigeretal} as Geiger's
Theorem.

Recall from Section~\ref{sec:pp} that a pp-formula without constants
is one in which the constants $+1$ and $-1$ do not appear in its
quantifier-free part of the formula. Although it will not be used
until a later section, it is worth pointing out here that a similar
characterization of pp-definability \emph{with} constants
exists. Indeed, it is easy to see that Geiger's Theorem implies that a
Boolean relation $R$ is pp-definable from the Boolean constraint
language $A$ by a pp-formula with constants if and only if $R$ is
invariant under all \emph{idempotent} Boolean closure operations
of~$A$, or equivalently, invariant under all Boolean closure
operations of the Boolean constraint language $A^+$ that is obtained
from $A$ by adding the two unary \emph{singleton} relations $\{+1\}$
and $\{-1\}$; i.e., $A^+ = A \cup \{\{+1\},\{-1\}\}$. We return to the
issue of definability with constants
in~Section~\ref{sec:closureoperations}.

For every set $F$ of Boolean operations, let $\inv(F)$ denote the set
of all Boolean relations that are invariant under all operations in
$F$.  Conversely, for every set of Boolean relations~$A$, let
$\pol(A)$ denote the set of all Boolean operations under which all
relations in $A$ are invariant.  Geiger's Theorem implies that the
mappings $A \mapsto \pol(A)$ and $F \mapsto \inv(F)$ are the lower and
upper adjoints of a Galois connection \cite{DaveyPriestley2002}
between the partially ordered set of sets of Boolean relations ordered
by inclusion, and the partially ordered set of sets of Boolean
operations, also ordered by inclusion.

Note that for every constraint language $A$, the set $\pol(A)$
contains all \emph{projection operations}: all operations $f : \{ \pm
1 \}^r \rightarrow \{ \pm 1 \}$ for which there exists an index $i \in
[r]$ such that $f(x_1,\ldots,x_r) = x_i$ for all $(x_1,\ldots,x_r) \in
\{ \pm 1 \}^r$. Also, $\pol(A)$ is closed under \emph{compositions}:
if $f : \{ \pm 1 \}^s \rightarrow \{ \pm 1 \}$ and $g_1,\ldots,g_s :
\{ \pm 1 \}^r \rightarrow \{ \pm 1 \}$ are operations from $\pol(A)$,
then the operation $h = f \circ (g_1,\ldots,g_s)$ defined by
$h(x_1,\ldots,x_r) =
f(g_1(x_1,\ldots,x_r),\ldots,g_s(x_1,\ldots,x_r))$ for all
$(x_1,\ldots,x_r) \in \{ \pm 1 \}^r$ is also in $\pol(A)$. Any set of
relations that contains all projection operations and that is closed
under compositions is called a \emph{clone}.

Post \cite{Post1941} analyzed the collection of all clones of Boolean
operations and completely determined the inclusions between them. In
particular, he showed that this collection forms a lattice under
inclusion, which is known as Post's lattice. 
%
%
In denoting clones in Post's lattice, we will follow the notation and
terminology used by B\"ohler et al.\ \cite{Bohler2003}. The lattice is
represented by the diagram in Figure~\ref{fig:postslattice}, which is
also borrowed from~\cite{Bohler2003} (we thank Steffen
Reith for allowing us to reproduce the diagram here).

\begin{figure}
\begin{center}
\includegraphics[scale=0.75]{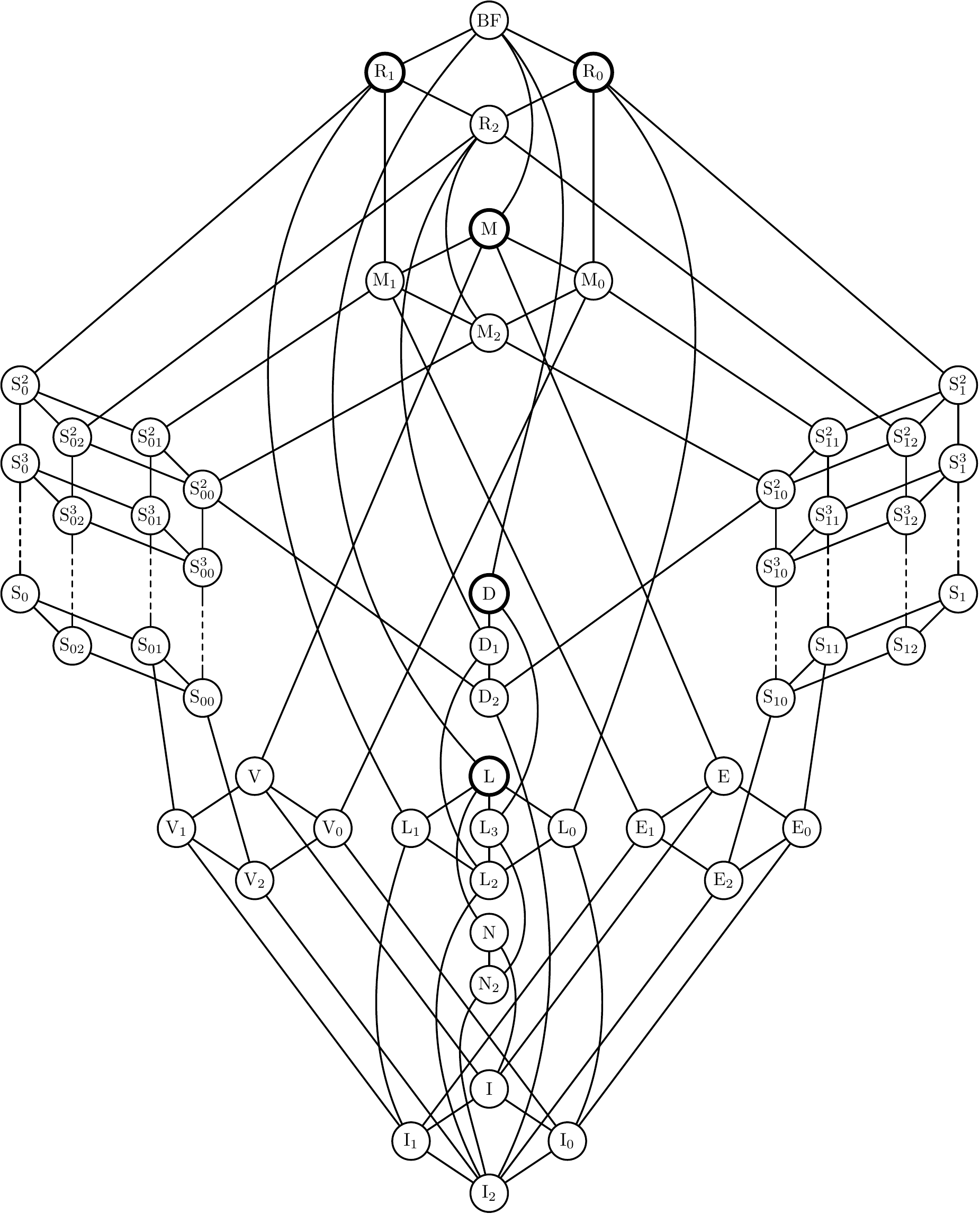}
\end{center}
\caption{Graph of all Boolean clones (diagram by Steffen Reith).}
\label{fig:postslattice}
\end{figure}

Each circle in the diagram of Figure~\ref{fig:postslattice} represents
a clone of Boolean operations, and a line between two circles denotes
inclusion of the clone of the lower circle into the clone of the upper
circle. Post showed that every clone of Boolean operations is
represented in the diagram. Post also identified a \emph{finite basis}
of operations for each clone, which means that the clone is the
smallest class of operations that contains the operations in the basis
and all the projections operations, and that is closed under
composition. For our application, we need only the bases for the eight
clones called $\mathrm{I}_2$, $\mathrm{I}_0$, $\mathrm{I}_1$,
$\mathrm{D}_2$, $\mathrm{E}_2$, $\mathrm{V}_2$, $\mathrm{L}_2$ and
$\mathrm{N}_2$. These are listed in the table in
Figure~\ref{fig:bases}.

\begin{figure}
\begin{center}
\begin{tabular}{l|lll|l}
\cline{1-2} \cline{4-5}
$\mathrm{I}_2$ & $\emptyset$ & \;\;\; & $\mathrm{E}_2$ & $\{x \wedge y\}$ \\
\cline{1-2} \cline{4-5}
$\mathrm{I}_0$ & $\{\text{false}\}$ & \;\;\; & $\mathrm{V}_2$ & $\{x \vee y\}$ \\
\cline{1-2} \cline{4-5}
$\mathrm{I}_1$ & $\{\text{true}\}$ & \;\;\; & $\mathrm{L}_2$ & $\{x \oplus y \oplus z\}$ \\
\cline{1-2} \cline{4-5}
$\mathrm{D}_2$ & $\{ (x \wedge y) \vee (x \wedge z) \vee (y \wedge z) \}$ & \;\;\; & $\mathrm{N}_2$ & $\{\neg x\}$ \\
\cline{1-2} \cline{4-5}
\end{tabular}
\end{center}
\caption{Bases of some selected clones from
  Figure~\ref{fig:postslattice}. Here $\wedge$, $\vee$, $\neg$ and
  $\oplus$ denote Boolean conjunction, Boolean disjunction, Boolean
  negation, and Boolean exclusive or, respectively.}
\label{fig:bases}
\end{figure}

The final ingredient we need from Post's lattice is a characterization
of the tractable Boolean constraint languages from Schaefer's Theorem
in terms of their closure operations.

\begin{theorem}[see Section 1.1 in \cite{Bohler2004}]
\label{thm:conditions}
Let $A$ be a Boolean constraint language. The following statements
hold.
\begin{enumerate} \itemsep=0pt
\item $A$ is $0$-valid if and only if $A$ is invariant under
the constant $\mathrm{false}$ operation.
\item $A$ is $1$-valid if and only if $A$ is invariant under the
  constant $\mathrm{true}$ operation.
\item $A$ is bijunctive if and only if $A$ is invariant under
$(x \wedge y) \vee (x \wedge z) \vee (y \wedge z)$.
\item $A$ is Horn if and only if $A$ is invariant under $x \wedge y$.
\item $A$ is dual Horn if and only if $A$ is invariant under $x \vee y$.
\item $A$ is affine if and only if $A$ is invariant under $x \oplus y
\oplus z$.
\end{enumerate} 
\end{theorem}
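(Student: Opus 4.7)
The plan is to prove each of the six biconditionals separately, in each case splitting into a forward direction (the syntactic property implies invariance under the named operation) and a backward direction (invariance implies the syntactic property). The forward directions are elementary verifications, while the backward directions require representation theorems that I would derive either directly or via Geiger's Theorem already recorded above.

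For the forward directions, items (1) and (2) are immediate: if every relation in $A$ contains the all-$(+1)$ tuple, then applying the constant-false operation componentwise to any collection of tuples yields the all-$(+1)$ tuple, which lies in each relation; the $1$-valid case is symmetric. For items (3)--(6), I would use that each class is defined as the solution set of a conjunction of atoms of a specific shape ($2$-clauses, Horn clauses, dual Horn clauses, linear equations over $\mathbb{F}_2$), and that invariance is preserved under intersection. Hence it suffices to check invariance on a single atom, which is a finite case analysis; e.g., to verify that the satisfying assignments of a fixed $2$-clause are closed under the ternary majority $(x \wedge y) \vee (x \wedge z) \vee (y \wedge z)$, one enumerates the relevant possibilities.

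For the backward directions, the cleanest uniform strategy invokes Geiger's Theorem. For each item, I would exhibit a concrete set $B$ of generating relations in the target syntactic class (e.g., all binary Boolean relations for item (3); all Horn-clause relations for item (4); all $\mathbb{F}_2$-linear equations for item (6); the unary singletons $\{+1\}$ for (1) and $\{-1\}$ for (2)), and verify that $\mathrm{Pol}(B)$ is precisely the clone generated by the named operation and the projections in Post's lattice. By Geiger, a relation $R$ is invariant under the named operation iff it is pp-definable from $B$ without constants. To close the loop, I would show that pp-definability from $B$ stays within the target class: conjunction of atoms is immediate, and existential quantification preserves bijunctiveness (by resolution on $2$-clauses, which yields another $2$-clause), Horn-ness (analogously), dual Horn-ness (symmetrically), and affineness (by Gaussian elimination over $\mathbb{F}_2$).

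The main obstacle is identifying $\mathrm{Pol}(B)$ as the correct clone in each case, which amounts to locating these clones precisely within Post's lattice. For some items a direct combinatorial representation theorem is cleaner and avoids invoking Post's lattice: for (3), one shows that a majority-invariant $R$ equals the intersection of its two-variable projections, by repeatedly applying majority to combine pairwise witnesses into a full witness for any purported element of $R$; for (4), one shows that an $\wedge$-invariant $R$ is cut out by the Horn clauses it entails, exploiting closure under coordinatewise meet; and for (6), one shows that an $\oplus$-invariant $R$ is the solution set of its linear span over $\mathbb{F}_2$. Such direct arguments yield the characterization without reference to the full clone-theoretic machinery, although the pp-definability route has the virtue of treating all six items uniformly.
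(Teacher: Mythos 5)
The paper does not prove this theorem at all: it is imported verbatim from the literature (Section~1.1 of B\"ohler et al., going back to Schaefer and Post), so there is no proof of record to compare against. Your reconstruction is essentially the standard one and is correct in its essentials. The forward directions are indeed finite verifications once one notes that invariance is preserved under intersections and cylindrifications, so it suffices to check a single $2$-clause, Horn clause, or parity equation against the corresponding operation. For the backward directions, your ``direct'' arguments are exactly the classical representation theorems: majority-closure gives $2$-decomposability (hence a defining $2$-CNF built from the binary projections), meet-closure gives definability by the entailed Horn clauses (the key step being that the coordinatewise meet of all tuples extending a given positive part is again in $R$), and closure under $x\oplus y\oplus z$ forces $R$ to be a coset of an $\mathbb{F}_2$-subspace. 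Your alternative Geiger-based route is also viable for items (3)--(6), and you correctly identify its cost: one must know that $\mathrm{Pol}(B)$ collapses to the clone generated by the single named operation ($\mathrm{D}_2$, $\mathrm{E}_2$, $\mathrm{V}_2$, $\mathrm{L}_2$), which is itself a nontrivial piece of Post's classification; the quantifier-elimination step (resolution for $2$-clauses and Horn clauses, Gaussian elimination for linear systems) is fine.

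Two minor points. First, your proposed generating sets for items (1) and (2) under the Geiger route do not work: $\mathrm{Pol}(\{\{+1\}\})$ is the clone of all false-reproducing operations, which strictly contains the clone generated by the constant operation, so the relations pp-definable from the singleton $\{+1\}$ form a proper subset of the $0$-valid relations. This is harmless because (1) and (2) need no machinery: if $R$ is nonempty and invariant under the constant-false operation, applying it to any tuple of $R$ puts the all-false tuple in $R$. Second, that argument exposes the one genuine caveat in the statement itself: the empty relation is vacuously invariant under every operation but is neither $0$-valid nor $1$-valid, so items (1) and (2) as literally stated require either excluding empty relations or treating the constants as nullary operations. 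This imprecision is inherited from the cited source and does not affect how the theorem is used in the paper.
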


\noindent For the connection with Post's lattice, note that, by
Figure~\ref{fig:bases}, the six conditions listed on the right of the
entries \emph{1} through \emph{6} in Theorem~\ref{thm:conditions} are
equivalent to $\pol(A)$ containing the clones $\mathrm{I}_0$,
$\mathrm{I}_1$, $\mathrm{D}_2$, $\mathrm{E}_2$, $\mathrm{V}_2$ and
$\mathrm{E}_2$, respectively.

\subsection{Gaps of Every Kind} \label{subsec:all-gaps}

We are ready to proceed with the second part in the proof of
Theorem~\ref{thm:gaps}. Assume that $A$ satisfies none of the
conditions in the fourth statement in Theorem~\ref{thm:gaps}, i.e.,
$A$ is not $0$-valid, $A$ is not $1$-valid, $A$ is not bijunctive, $A$
is not Horn, and $A$ is not dual Horn. We will show that $A$ has a
satisfiability gap of the first kind (hence, $A$ also has a
satisfiability gap of the second kind) and $A \cup \{{\rm T}\}$ has a
satisfiabiilty gap of the third kind.

As a stepping stone, we will use the known fact that LIN has gaps of every kind. We now  discuss the proof of this fact  and give the appropriate references to the literature.

Recall that LIN is the class of all affine relations, i.e., Boolean relations that are the set of solutions of a system of linear equations over the two-element field. In the $\pm 1$-representation, every such equation is a \emph{parity} equation of the form $\prod_{i=1}^r x_i = y$, where $y \in \{\pm 1\}$.

Mermin \cite{Mermin1990,Mermin1993} considered the following system $\mathcal M$ of parity equations:
\begin{equation}
\begin{array}{ccccccc}
 X_{11} X_{12} X_{13} & = & 1 & \;\;\;\; & X_{11} X_{21} X_{31} & = & 1 \\
 X_{21} X_{22} X_{23} & = & 1 & &  X_{12} X_{22} X_{32} & = & 1 \\
 X_{31} X_{32} X_{33} & = & 1 & &  X_{13} X_{23} X_{33} & = & -1. \\
\end{array}
\label{eqn:Merminequations}
\end{equation}
Graphically, this system of equations can be represented by a square,
where each equation on the left of~\eqref{eqn:Merminequations} comes
from a row, and each equation on the right
of~\eqref{eqn:Merminequations} comes from a column.
\begin{center}
\begin{tikzpicture}[draw=black, ultra thick, x=\Size,y=\Size]
\node[Square] at (1,4) { $X_{11}$ };
\node[Square] at (2,4) { $X_{12}$ };
\node[Square] at (3,4) { $X_{13}$ };
\node[Square] at (1,3) { $X_{21}$ };
\node[Square] at (2,3) { $X_{22}$ };
\node[Square] at (3,3) { $X_{23}$ };
\node[Square] at (1,2) { $X_{31}$ };
\node[Square] at (2,2) { $X_{32}$ };
\node[Square] at (3,2) { $X_{33}$ };
\node at (4,4) { $+1$ };
\node at (4,3) { $+1$ };
\node at (4,2) { $+1$ };
\node at (1,1) { $+1$ };
\node at (2,1) { $+1$ };
\node at (3,1) { $-1$ };
\end{tikzpicture}
\end{center}
%
It is easy to see that this system of equations has no solutions in
the Boolean domain. Indeed, by multiplying the left-hand sides of all
equations, we get $1$ because every variable $X_{ij}$ occurs 
twice in the system and $X_{ij}^2 = 1$. At the same time, by
multiplying the right-hand sides of all equations, we get $-1$, hence
the system has no solutions in the Boolean domain. Observe, however,
that this argument used the assumption that variables commute
pairwise, even if they do not appear in the same equation. Thus, this
argument does not go through if one assumes only that variables
occurring in the same equation commute pairwise.  Mermin
\cite{Mermin1990,Mermin1993} showed that the system $\mathcal M$ has a
solution consisting of linear operators on a Hilbert space of
dimension four.  Thus, in our terminology, Mermin established the
following result.

\begin{theorem}[\cite{Mermin1990,Mermin1993}] \label{lem:Mermin}
 $\mathcal{M}$ witnesses a  satisfiability gap of the first kind for {\rm LIN}.
\end{theorem}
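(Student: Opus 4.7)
The plan is to establish two facts separately: that $\mathcal{M}$ has no Boolean solution (so $\nu(\mathcal{M}) < 1$), and that $\mathcal{M}$ admits a satisfying operator assignment on a finite-dimensional Hilbert space (so $\nu^\ast(\mathcal{M}) = 1$).

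For Boolean unsatisfiability, I would multiply the left-hand sides of all six equations in~\eqref{eqn:Merminequations}. Every variable $X_{ij}$ appears in exactly one row equation and exactly one column equation, hence it occurs exactly twice in this global product; since $X_{ij}^2 = 1$ over $\{\pm 1\}$, the product of all left-hand sides equals $+1$ for any Boolean assignment. On the other hand, five of the right-hand sides are $+1$ and one is $-1$, so their product is $-1$. This contradiction shows $\nu(\mathcal{M}) < 1$. The point the proof will implicitly emphasize is that this \emph{commutation-heavy} argument collapses once variables from different equations are allowed to fail to commute, which is precisely the relaxation granted by operator assignments.

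For satisfiability via fd-operators I would exhibit an explicit assignment on $\mathcal{H} = \mathbb{C}^2 \otimes \mathbb{C}^2$, using the Pauli matrices $\sigma_x, \sigma_y, \sigma_z$ (each of which is Hermitian and squares to the $2 \times 2$ identity, and which satisfy $\sigma_z \sigma_x = i\sigma_y$, $\sigma_x \sigma_z = -i\sigma_y$ and their cyclic permutations). A convenient choice is
\begin{equation}
\begin{array}{ccc}
\sigma_z \otimes I & I \otimes \sigma_z & \sigma_z \otimes \sigma_z \\[2pt]
I \otimes \sigma_x & \sigma_x \otimes I & \sigma_x \otimes \sigma_x \\[2pt]
\sigma_z \otimes \sigma_x & \sigma_x \otimes \sigma_z & \sigma_y \otimes \sigma_y
\end{array}
\end{equation}
for $(X_{11},\ldots,X_{33})$. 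Each such tensor product of Pauli matrices (or identities) is Hermitian and squares to the identity on $\mathbb{C}^2 \otimes \mathbb{C}^2$, so conditions (1) and (2) in the definition of operator assignment hold. What remains is the row/column verification: for each equation, check that its three operators pairwise commute and that their product equals the required $\pm I$. Translating the parity constraint $\prod_i x_i = y$ into the polynomial form $-y\,X_1 X_2 X_3 + 1$, the condition $P_R(f(Z)) = -I$ becomes $f(Z_1)f(Z_2)f(Z_3) = y I$, so we need row products and the first two column products to equal $+I$ and the third column product to equal $-I$.

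The main technical step, and the place where one must be careful, is the verification of row~3 and column~3, where nontrivial cancellations among Pauli factors occur. The key calculations are $(\sigma_z \otimes \sigma_x)(\sigma_x \otimes \sigma_z) = (\sigma_z\sigma_x) \otimes (\sigma_x\sigma_z) = (i\sigma_y) \otimes (-i\sigma_y) = \sigma_y \otimes \sigma_y$, and the symmetric computation $(\sigma_x \otimes \sigma_z)(\sigma_z \otimes \sigma_x) = (-i\sigma_y) \otimes (i\sigma_y) = \sigma_y \otimes \sigma_y$, which simultaneously give commutativity and yield the row product $(\sigma_y\otimes\sigma_y)(\sigma_y\otimes\sigma_y) = I$ after multiplying by $X_{33}$. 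For column~3 the corresponding computation is $(\sigma_z \otimes \sigma_z)(\sigma_x \otimes \sigma_x) = (i\sigma_y)\otimes(i\sigma_y) = -\sigma_y\otimes\sigma_y$, and multiplying by $\sigma_y\otimes\sigma_y$ yields $-I$, matching the right-hand side of the sixth equation. Commutativity within the other rows and columns is automatic because at each tensor factor at most one of the three matrices differs from the identity of the Pauli family, while for the mixed row~3 and column~3 the opposite-sign commutators in the two tensor slots cancel, as shown above. Once all six equations are verified, $f$ satisfies every constraint of $\mathcal{M}$, so $\nu^\ast(\mathcal{M}) = 1$, completing the proof of the gap of the first kind.
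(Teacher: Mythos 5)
Your proposal is correct and matches the paper's treatment: the Boolean-unsatisfiability argument (each $X_{ij}$ occurs twice across the six equations, so the product of left-hand sides is $+1$ while the product of right-hand sides is $-1$) is exactly the one given in the text, and for the operator part the paper simply cites Mermin, whereas you write out and verify the standard Mermin--Peres Pauli-observable square on $\mathbb{C}^2\otimes\mathbb{C}^2$, which is precisely the cited construction. All of your row/column products and commutativity checks are right, including the sign bookkeeping $(i\sigma_y)\otimes(-i\sigma_y)=\sigma_y\otimes\sigma_y$ versus $(i\sigma_y)\otimes(i\sigma_y)=-\sigma_y\otimes\sigma_y$ that distinguishes row~3 from column~3.
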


Cleve and Mittal \cite[Theorem 1]{CleveM14}  have shown that a system of parity equations has a solution consisting of linear operators on a finite-dimensional Hilbert space if and only if there is a perfect strategy in a certain non-local game in the tensor-product model.  Cleve, Liu, and Slofstra \cite[Theorem 4]{CleveLS16} have shown that
a system of parity equations has a solution consisting of linear operators on a (finite-dimensional or infinite-dimensional) Hilbert space if and only if there is a perfect strategy in a certain non-local game in the commuting-operator model. Slofstra \cite{Slofstra2016} obtained a breakthrough result that has numerous consequences about these models. In particular,  Corollary 3.2 in Slofstra's paper \cite{Slofstra2016} asserts that there is a system $\mathcal S$ of parity equations whose associated non-local game has a perfect strategy in the commuting-operator model, but not in the tensor-product model. Thus, by combining Theorem 1 in \cite{CleveM14}, Theorem 4 in \cite{CleveLS16}, and Corollary 3.2 in \cite{Slofstra2016}, we obtain the following result.

\begin{theorem}[\cite{CleveLS16,CleveM14,Slofstra2016}] \label{thm:Slofstragap}
 $\mathcal{S}$ witnesses a  satisfiability gap of
the third kind for ${\rm LIN}$.
\end{theorem}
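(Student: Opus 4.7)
The plan is to plug the three cited results together after checking that our notion of operator satisfiability for a system of parity equations is exactly what the two non-local-game models demand. Concretely, a parity constraint $\prod_{i=1}^r X_i = y$ (with $y \in \{\pm 1\}$) corresponds, via Lemma~\ref{lem:fourierofaclause} applied to the affine relation it defines, to the polynomial equation $\prod_{i=1}^r X_i - y = 0$. An operator assignment $f$ in our sense then satisfies this constraint if and only if the assigned bounded self-adjoint $\pm 1$-observables $A_1,\dots,A_r$ (i.e.~$A_j^* = A_j$ and $A_j^2 = I$) pairwise commute and satisfy $\prod_{i=1}^r A_i = yI$. This is precisely the algebraic condition that defines a solution to the binary linear system associated with $\mathcal{S}$ in the sense of \cite{CleveM14,CleveLS16}.

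First, I would apply Theorem~1 of \cite{CleveM14} to identify the condition $\nu^*(\mathcal{S}) = 1$ with the existence of a perfect tensor-product strategy for the binary linear system game $G_{\mathcal{S}}$ associated with $\mathcal{S}$. Second, I would apply Theorem~4 of \cite{CleveLS16} to identify the condition $\nu^{**}(\mathcal{S}) = 1$ with the existence of a perfect commuting-operator strategy for the same game $G_{\mathcal{S}}$. Third, I would invoke Corollary~3.2 of \cite{Slofstra2016} to produce a specific system $\mathcal{S}$ of parity equations for which $G_{\mathcal{S}}$ admits a perfect commuting-operator strategy but no perfect tensor-product strategy. Combining the three steps yields $\nu^*(\mathcal{S}) < 1$ and $\nu^{**}(\mathcal{S}) = 1$, which by definition means that $\mathcal{S}$ witnesses a satisfiability gap of the third kind for LIN.

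The only substantive point to verify is the equivalence between our operator-assignment formalism and each of the two non-local game formalisms; this is not a new calculation but rather a careful matching of definitions. In the tensor-product model one must translate between jointly commuting $\pm 1$-observables on a finite-dimensional Hilbert space and the Alice/Bob observables acting on the tensor factors of $\mathcal{H}_A \otimes \mathcal{H}_B$; this is standard and is the content of the correspondence in \cite{CleveM14}. In the commuting-operator model, one has to check that the requirement that operators pairwise commute within each constraint (as in our definition) is exactly the relaxation permitted by \cite{CleveLS16}. The main conceptual obstacle, and the whole reason this theorem is non-trivial, is of course the existence of the separating system $\mathcal{S}$ itself, which is Slofstra's deep construction; once that is invoked as a black box, the rest of the argument is a bookkeeping exercise.
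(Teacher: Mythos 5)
Your proposal is correct and follows the paper's own route exactly: the paper obtains this theorem by the same combination of Theorem~1 of \cite{CleveM14} (finite-dimensional operator solutions $\leftrightarrow$ perfect tensor-product strategies), Theorem~4 of \cite{CleveLS16} (arbitrary operator solutions $\leftrightarrow$ perfect commuting-operator strategies), and Corollary~3.2 of \cite{Slofstra2016}, together with the definitional matching you describe. One minor slip: the reduction of the constraint $P_R = -I$ to $\prod_{i} A_i = yI$ for the affine relation follows from the Fourier expansion $P_R(X_1,\ldots,X_r) = -y\, X_1\cdots X_r$ (as the paper computes in Section~\ref{sec:closureoperations}), not from Lemma~\ref{lem:fourierofaclause}, which concerns disjunctive clauses rather than parity relations.
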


LIN has a rather special place among all classes of Boolean relations
that are not $0$-valid, are not $1$-valid, are not bijunctive, are not
Horn, and are not dual Horn. This special role is captured by the next
lemma, which follows from Post's analysis of the lattice of clones of
Boolean functions from Section~\ref{subsec:Post}.

\begin{lemma} \label{lem:Post}
Let $A$ be a Boolean constraint language. If $A$ is not $0$-valid,
not $1$-valid, not bijunctive, not Horn, and not dual Horn, then
${\rm LIN}$ is pp-definable from $A$.
\end{lemma}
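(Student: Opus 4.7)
The plan is to reduce the claim, via Geiger's Theorem (Theorem~\ref{thm:geigeretal}), to a structural assertion about Post's lattice. As noted immediately after the statement of Geiger's Theorem, pp-definability of a Boolean relation from $A$ with constants is equivalent to invariance under all Boolean closure operations of the expanded language $A^+ := A \cup \{\{+1\},\{-1\}\}$. Hence it suffices to establish the inclusion $\pol(A^+) \subseteq \pol(\mathrm{LIN})$.

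The right-hand side is easy to pin down: $\pol(\mathrm{LIN}) = \mathrm{L}_2$, the clone generated by $x \oplus y \oplus z$. Indeed, Theorem~\ref{thm:conditions} item~6 already gives $x \oplus y \oplus z \in \pol(\mathrm{LIN})$; conversely, the singletons $\{(+1)\}$ and $\{(-1)\}$ are affine relations, and requiring an affine operation $c_0 \oplus c_1 x_1 \oplus \cdots \oplus c_n x_n$ to preserve both forces $c_0 = 0$ and $\sum_i c_i = 1$, which is precisely the defining condition of $\mathrm{L}_2$. As to the left-hand side, $\pol(A^+)$ is idempotent by construction (it preserves the two singletons), and by Theorem~\ref{thm:conditions} items~3--5, the hypotheses that $A$ is not bijunctive, not Horn, and not dual Horn mean that $\pol(A)$---and therefore $\pol(A^+)$---contains neither the ternary majority, nor $\wedge$, nor $\vee$; in the notation of Figure~\ref{fig:bases}, $\pol(A^+)$ contains none of the clones $\mathrm{D}_2$, $\mathrm{E}_2$, $\mathrm{V}_2$.

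The argument is then completed by an inspection of the idempotent fragment of Post's lattice (Figure~\ref{fig:postslattice}): the only idempotent clones of Boolean operations that avoid all three of $\mathrm{D}_2$, $\mathrm{E}_2$, $\mathrm{V}_2$ are the projection clone $\mathrm{I}_2$ and the clone $\mathrm{L}_2$ itself. Both are contained in $\mathrm{L}_2$, so $\pol(A^+) \subseteq \mathrm{L}_2 = \pol(\mathrm{LIN})$, and Geiger's Theorem yields the desired pp-definability of every relation in $\mathrm{LIN}$ from $A$. The main obstacle is this structural claim about Post's lattice; I would treat it as a known consequence of Post's classification, as formalized in \cite{Bohler2003}, observing that by hand it reduces to the fact that every non-affine idempotent Boolean operation generates one of $\wedge$, $\vee$, or the ternary majority.
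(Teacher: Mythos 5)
Your proof is correct, and while it draws on the same toolkit as the paper (Geiger's Theorem~\ref{thm:geigeretal}, Theorem~\ref{thm:conditions}, and an inspection of Post's lattice), it organizes the case analysis differently. The paper works with $\pol(A)$ directly and argues at the bottom of the \emph{full} lattice: if $\pol(A)$ is not the projection clone $\mathrm{I}_2$, it must contain one of the seven atoms $\mathrm{I}_0$, $\mathrm{I}_1$, $\mathrm{D}_2$, $\mathrm{E}_2$, $\mathrm{V}_2$, $\mathrm{L}_2$, $\mathrm{N}_2$; the five hypotheses exclude the first five, and the two surviving cases are handled separately --- in the $\mathrm{L}_2$ case LIN is already pp-definable without constants, while the $\mathrm{N}_2$ case requires an explicit trick of first obtaining the even-arity parity relations and then substituting constants to recover the odd-arity ones. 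You instead absorb the constants into the language at the outset by passing to $A^+ = A \cup \{\{+1\},\{-1\}\}$, which confines the analysis to the idempotent part of the lattice, where the only clones avoiding $\mathrm{D}_2$, $\mathrm{E}_2$ and $\mathrm{V}_2$ are $\mathrm{I}_2$ and $\mathrm{L}_2$; this claim is correct (every other idempotent clone in Post's classification contains $\wedge$, $\vee$ or the majority operation, e.g.\ the clones $\mathrm{S}_{00}$-type contain $x \vee (y \wedge z)$ and hence $x\vee y$ by identifying variables, and $\mathrm{D}_1 \supseteq \mathrm{D}_2$), and your reduction of it to the fact that every non-affine idempotent operation generates $\wedge$, $\vee$ or majority is the standard way to check it. What your route buys: the $\mathrm{N}_2$ case and the parity-padding argument disappear entirely, and your proof shows that the hypotheses ``not $0$-valid'' and ``not $1$-valid'' are not actually needed for this lemma, since idempotency of $\pol(A^+)$ excludes the constant operations for free (the paper uses those hypotheses only to rule the atoms $\mathrm{I}_0$ and $\mathrm{I}_1$ out of $\pol(A)$). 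Two minor remarks: you only need the inclusion $\mathrm{L}_2 \subseteq \pol(\mathrm{LIN})$, which is immediate from item~6 of Theorem~\ref{thm:conditions} and Figure~\ref{fig:bases}, so the computation of the reverse inclusion can be omitted; and the condition $\sum_i c_i = 1$ should of course be read modulo $2$.
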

\begin{proof}
Assume that $A$ is a Boolean constraint language satisfying the
hypothesis of Lemma~\ref{lem:Post}.  We consider the clone $\pol(A)$
and distinguish several cases using Post's lattice. 

If $\pol(A)$ is the smallest clone ${\mathrm I}_2$ in Post's lattice,
then $\pol(A)$ contains only the projection functions; hence, every
Boolean relation is closed under every function in $\pol(A)$. Geiger's
Theorem implies that every Boolean relation and, in particular, every
relation in LIN, is pp-definable from $A$ (and, in fact, it is
pp-definable without using constants).
 
 If $\pol(A)$ is not the smallest clone ${\mathrm I}_2$ in Post's lattice, then it must contain one of the seven minimal clones ${\mathrm I}_0$, $\mathrm{I}_1$, ${\mathrm D}_2$, ${\mathrm E}_2$, ${\mathrm V}_2$, ${\mathrm L}_2$, ${\mathrm N}_2$ that contain ${\mathrm I}_2$. Recall that these clones have bases of operations as described in Figure~\ref{fig:bases}. Since $A$ is not $i$-valid, where $i=0,1$, and since the clone ${\mathrm I}_i$ is generated by the constant function $c_i(x)=i$,   it must be the case that $\pol(A)$ does not contain  the clone ${\mathrm I}_0$ or the clone $\mathrm{I}_1$.  Since $A$ is not bijunctive, there is a relation  in $A$ that is not closed under the \emph{majority} function $\mathrm{maj}(x,y,z) = (x\land y) \lor (x\land z) \lor (y\land z)$.  Since the clone ${\mathrm D}_2$ is generated by the function $\mathrm{maj}(x,y,z)$, it must be the case that that $\pol(A)$ does not contain  the clone ${\mathrm D}_2$.  Since $A$ is not Horn,
there is a relation  in $A$ that is not closed under the function $\mathrm{and}(x,y) = x\land y$.  Since the clone ${\mathrm E}_2$ is generated by  the function $\mathrm{and}(x,y)$, it must be the case  that $\pol(A)$ does not contain  the clone ${\mathrm E}_2$. Since $A$ is not dual Horn,
there is a relation  in $A$ that is not closed under the function  $\mathrm{or}(x,y) = x\lor y$.  Since the clone ${\mathrm V}_2$ is generated by the function $\mathrm{or}(x,y)$, it must be the case that that $\pol(A)$ does not contain  the clone ${\mathrm V}_2$.

The preceding analysis shows that there are just two possibilities:
$\pol(A)$ contains the clone ${\mathrm L}_2$ or $\pol(A)$ contains the
clone ${\mathrm N}_2$. Assume first that $\pol(A)$ contains the clone
${\mathrm L}_2$. Since ${\mathrm L}_2$ is generated by the
\emph{exclusive~or} function $\oplus(x,y,z) = x\oplus y\oplus z$ and
since a relation is affine if and only if it is closed under the
function $\oplus$, Geiger's Theorem implies that a relation is
pp-definable without constants from $A$ if and only if it is an affine
relation. Thus, LIN is pp-definable from $A$ (and, in fact, it is
pp-definable without constants).  Finally, assume that $\pol(A)$
contains the clone ${\mathrm N}_2$. Since $\pol(A)$ is generated by
the function $\mathrm{not}(x) = \neg x$, Geiger's Theorem implies that
a relation is pp-definable without constants from $A$ if and only if
it is it is closed under the function $\mathrm{not}(x)$. In
particular, for every $n\geq 1$ and for $i=0,1$, the affine relation
that is the set of solutions of the equation $x_1+ \cdots +x_{2n} = i
\text{ mod } 2$ is pp-definable without constants from $A$. By using
the constant $0$ in these equations, we have that for every $n\geq 1$
and for every $i=0,1$, the affine relation that is the set of
solutions of the equation $x_1+\cdots +x_{2n-1} = i \text{ mod } 2$ is
pp-definable from $A$ (recall that pp-definitions allow constants). It
follows that LIN is pp-definable from $A$.
\end{proof}

The final lemma in this section asserts that reductions based on pp-definitions preserve satisfiability gaps upwards.

\begin{lemma} \label{lem:gapfirstkindupwards}
Let $B$ and $C$ be  Boolean constraint languages such that~$B$~is~pp-definable~from~$C$.
\begin{enumerate} \itemsep=0pt
\item If  $B$ has a satisfiability gap of the first kind, then so does $C$.
\item If $B$ has a satisfiability gap of the third kind, then so does $C \cup \{{\rm T}\}$.
\end{enumerate}
\end{lemma}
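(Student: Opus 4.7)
The plan is to reduce both parts to the gadget-reduction machinery from Section~\ref{sec:pp}, using the basic construction for part~1 and the extended construction for part~2. Concretely, fix an instance $\mathcal{I}$ over $B$ that witnesses a gap of the first (resp.\ third) kind. Since every relation in $B$ is pp-definable from $C$, we may form the instance $\mathcal{J}$ (resp.\ $\mathcal{\hat{J}}$) over $C$ (resp.\ over $C \cup \{\mathrm{T}\}$) as described in Sections~\ref{sec:basicconstruction} and~\ref{sec:extendedconstruction}. The claim is that $\mathcal{J}$ witnesses a gap of the first kind for $C$, and $\mathcal{\hat{J}}$ witnesses a gap of the third kind for $C \cup \{\mathrm{T}\}$.

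For part~1, I would argue both inequalities separately. To see that $\nu(\mathcal{J}) < 1$, invoke Lemma~\ref{lem:ppclassical} on the classical side: if $\mathcal{J}$ were satisfiable in the Boolean domain, then so would $\mathcal{I}$ be, contradicting the assumption $\nu(\mathcal{I}) < 1$. To see that $\nu^*(\mathcal{J}) = 1$, fix a satisfying fd-operator assignment $f$ for $\mathcal{I}$ over some finite-dimensional $\mathcal{H}$ (guaranteed by $\nu^*(\mathcal{I}) = 1$) and apply the finite-dimensional case of Lemma~\ref{lem:liftonly} to obtain an extension~$g$ that satisfies $\mathcal{J}$ over the same $\mathcal{H}$.

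For part~2, the symmetric argument goes through using the extended construction. The easy direction is that $\nu^{**}(\mathcal{\hat{J}}) = 1$: take a satisfying operator assignment $f$ for $\mathcal{I}$ over an arbitrary Hilbert space $\mathcal{H}$ (from $\nu^{**}(\mathcal{I}) = 1$) and apply statement~1 of Lemma~\ref{lem:liftandproject} to lift it to a satisfying operator assignment for $\mathcal{\hat{J}}$ over the same $\mathcal{H}$. For $\nu^*(\mathcal{\hat{J}}) < 1$, I would argue by contrapositive: if $g$ were a satisfying fd-operator assignment for $\mathcal{\hat{J}}$ over some finite-dimensional~$\mathcal{H}$, statement~2 of Lemma~\ref{lem:liftandproject} would yield, by restriction, a satisfying fd-operator assignment for $\mathcal{I}$, contradicting $\nu^*(\mathcal{I}) < 1$. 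Combining the two bounds shows $\mathcal{\hat{J}}$ witnesses a gap of the third kind for $C \cup \{\mathrm{T}\}$.

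The only subtle point — not really an obstacle, but worth flagging — is why the two parts require different constructions. For part~1, the lifting direction from Lemma~\ref{lem:liftonly} suffices for the fd-side, and Lemma~\ref{lem:ppclassical} handles the Boolean side, so the plain construction $\mathcal{J}$ is enough. For part~3, however, we need to rule out fd-operator satisfiability of the constructed instance, which requires projecting a hypothetical satisfying assignment back to $\mathcal{I}$; this is precisely what the extra $\mathrm{T}$-constraints of $\mathcal{\hat{J}}$ are designed to enable, via statement~2 of Lemma~\ref{lem:liftandproject}. That is why the second part of the lemma must be stated for the expanded language $C \cup \{\mathrm{T}\}$ rather than for $C$ itself.
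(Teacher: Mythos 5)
Your proposal is correct and follows essentially the same route as the paper: part~1 combines Lemma~\ref{lem:ppclassical} with the lifting direction of Lemma~\ref{lem:liftonly} on the basic construction $\mathcal{J}$, and part~2 uses both the lifting and projecting statements of Lemma~\ref{lem:liftandproject} on the extended construction $\mathcal{\hat{J}}$. Your closing remark on why the $\mathrm{T}$-constraints are needed only in part~2 matches the paper's motivation exactly.
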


\begin{proof} For the first part,
assume that $B$ is pp-definable from $C$ and that $\mathcal{I}$ is an
instance  that witnesses a satisfiability gap of the first kind for $B$. Thus,
$\mathcal{I}$ is satisfiable via fd-operators, but is not
 satisfiable in the Boolean domain. Let $\mathcal{J}$ be the instance over $C$ as
defined in Section~\ref{sec:basicconstruction}. On the one hand, by
Lemma~\ref{lem:liftonly}, the instance $\mathcal{J}$ is also
 satisfiable via fd-operator. On the other hand, by
Lemma~\ref{lem:ppclassical}, the instance $\mathcal{J}$ is also not
 satisfiable in the Boolean domain. Thus, $\mathcal{J}$ witnesses a satisfiability  gap  of the first kind for~$C$.

For the second part, assume that $B$ is pp-definable from $C$ and that $\mathcal{I}$ is an
instance that witnesses a  satisfiability gap of the third kind
for $B$. Thus, $\mathcal{I}$ is satisfiable via operators,  but it is not satisfiable via fd-operators. Let
$\mathcal{\hat{J}}$ be the instance over $C \cup \{ \mathrm{T} \}$ as
defined in Section~\ref{sec:extendedconstruction}. By
Lemma~\ref{lem:liftandproject}, the instance $\mathcal{\hat{J}}$ is
 satisfiable via operators, but it is  not
 satisfiable via fd-operators. Thus, $\mathcal{\hat{J}}$
witnesses a satisfiability  gap of the third kind for $C \cup \{ \mathrm{T} \}$.
\end{proof}

We now have all the machinery needed to put everything together.

Let $A$ be a Boolean constraint language that is not $0$-valid, not $1$-valid, not bijunctive, not Horn, and not dual Horn. By Lemma~\ref{lem:Post}, we have that LIN is pp-definable from $A$. Since, by Theorem~\ref{lem:Mermin}, LIN has a satisfiability gap of the first kind, the first part of Lemma~\ref{lem:gapfirstkindupwards} implies that $A$ has a satisfiability gap of the first kind. Since, by Theorem~\ref{thm:Slofstragap}, LIN has a satisfiability gap of the third kind, the second part of Lemma~\ref{lem:gapfirstkindupwards} implies that $A$ has a satisfiability gap of the third kind. The proof of Theorem~\ref{thm:gaps} is now complete.

\section{Further Applications} \label{sec:applications}

In this section we discuss two applications of the results from
Sections~\ref{sec:pp} and~\ref{sec:gaps}. The first application is
about classification theorems in the style of Schaefer. The second
application builds on Slofstra's results to answer some open questions
from \cite{Acin2015} on the quantum realizability of contextuality
scenarios. While these open questions were solved earlier by Fritz
also using Slofstra's results (see \cite{Fritz2016}), our alternative
perspective may still add some value since, as we will see, we obtain
improved, and indeed optimal, parameters.

\subsection{Dichotomy Theorems} \label{sec:dichotomy}

For a Boolean constraint language $A$, let SAT$(A)$ denote
the following decision problem:
\begin{center}
Given an instance $\mathcal{I}$ over
$A$, is $\mathcal{I}$ satisfiable in the Boolean domain?
\end{center}
Similarly, let SAT$^*(A)$ and SAT$^{**}(A)$ be the versions of the
problem in which the questions are whether $\mathcal{I}$ is
satisfiable via an operator assignment on a finite-dimensional Hilbert
space, or on an arbitrary Hilbert space, respectively.  We say that a
problem poly-m-reduces to another if there is a polynomial-time
computable function that transforms instances of the first problem
into instances of the second in such a way that the answer is
preserved.

Recall that $\mathrm{T}$ denotes the full binary Boolean relation $\{
\pm 1 \}^2$. The construction in
Section~\ref{sec:extendedconstruction} and
Lemma~\ref{lem:liftandproject} give the following:

\begin{lemma} \label{lem:reductions}
Let $A$ and $B$ be Boolean constraint languages and let $A' = A \cup
\{ \mathrm{T} \}$ and $B' = B \cup \{ \mathrm{T} \}$. If $A$ is
pp-definable from $B$, then
\begin{enumerate} \itemsep=0pt
\item {\rm SAT}$(A')$ poly-m-reduces to {\rm SAT}$(B')$,
{\rm SAT}$^*(B')$, and {\rm SAT}$^{**}(B')$,
\item {\rm SAT}$^{*}(A')$ poly-m-reduces to {\rm SAT}$^{*}(B')$.
\item {\rm SAT}$^{**}(A')$ poly-m-reduces to {\rm SAT}$^{**}(B')$.
\end{enumerate}
\end{lemma}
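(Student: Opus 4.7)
The plan is to use the extended construction $\mathcal{I} \mapsto \mathcal{\hat{J}}$ of Section~\ref{sec:extendedconstruction} as a single polynomial-time reduction function that serves all three items. First I would verify applicability: since $A$ is pp-definable from $B$ and $B \subseteq B'$, every relation of $A$ is pp-definable from $B'$, and $\mathrm{T} \in A'$ is pp-definable from $B'$ trivially because $\mathrm{T} \in B'$. Hence $A'$ is pp-definable from $B'$, and the extended construction produces an instance $\mathcal{\hat{J}}$ over $B' \cup \{\mathrm{T}\} = B'$ for each instance $\mathcal{I}$ over $A'$. Because the constraint language is fixed, each relation has a pp-definition of fixed size, so each constraint of $\mathcal{I}$ is replaced by a constant-sized block of fresh variables, constantly many $B'$-constraints, and $O(1)$ additional $\mathrm{T}$-constraints enforcing block-wise commutativity; the reduction thus runs in polynomial time.

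For item~1 I would argue that $\mathcal{I}$ is satisfiable in the Boolean domain if and only if $\mathcal{\hat{J}}$ is. In one direction, a satisfying Boolean assignment for $\mathcal{I}$ extends to one for $\mathcal{\hat{J}}$ by choosing, within each block, witnesses to the existentially quantified variables in the pp-definition of the corresponding relation of $A'$; the added $\mathrm{T}$-constraints are vacuously satisfied by any Boolean assignment. In the other direction, the restriction of any satisfying Boolean assignment for $\mathcal{\hat{J}}$ to the variables of $\mathcal{I}$ satisfies $\mathcal{I}$; this is the Boolean analog of part~2 of Lemma~\ref{lem:liftandproject} and is essentially Lemma~\ref{lem:ppclassical} applied to $\mathcal{\hat{J}}$. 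Hence $\mathrm{SAT}(A')$ poly-m-reduces to $\mathrm{SAT}(B')$, and combined with the trivial implications ``Boolean-satisfiable implies satisfiable via fd-operators implies satisfiable via operators'' for $\mathcal{\hat{J}}$, the same function $\mathcal{I} \mapsto \mathcal{\hat{J}}$ also serves as a reduction to $\mathrm{SAT}^{*}(B')$ and $\mathrm{SAT}^{**}(B')$.

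For items~2 and~3 I would invoke Lemma~\ref{lem:liftandproject} directly. In the finite-dimensional case (item~2), part~1 of that lemma lifts any satisfying fd-operator assignment for $\mathcal{I}$ to one for $\mathcal{\hat{J}}$ over the same Hilbert space, and part~2 projects any such assignment for $\mathcal{\hat{J}}$ back to one for $\mathcal{I}$. Hence $\mathcal{I}$ is satisfiable via fd-operators if and only if $\mathcal{\hat{J}}$ is, and therefore $\mathrm{SAT}^{*}(A')$ poly-m-reduces to $\mathrm{SAT}^{*}(B')$. Item~3 is identical, replacing ``fd-operator'' by ``operator'' throughout and appealing to the general case of Lemma~\ref{lem:liftandproject} with $\mathcal{H}$ ranging over all Hilbert spaces, yielding that $\mathrm{SAT}^{**}(A')$ poly-m-reduces to $\mathrm{SAT}^{**}(B')$.

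There is essentially no obstacle to overcome; the substantive work has already been done in Lemmas~\ref{lem:ppclassical} and~\ref{lem:liftandproject}, and the present lemma is a direct packaging of the lift-and-project correspondence as polynomial-time many-one reductions. The only routine verification is polynomial-time computability of the reduction, which follows from the fixedness of the constraint language and hence of its pp-definitions.
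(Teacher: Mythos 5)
Your overall strategy is exactly the paper's: the paper proves this lemma in a single sentence by pointing at the extended construction of Section~\ref{sec:extendedconstruction} and Lemma~\ref{lem:liftandproject}, and your write-up is a faithful, more detailed unpacking of that. Your treatment of items~2 and~3 is correct (both directions of answer preservation come from the two parts of Lemma~\ref{lem:liftandproject}, in the finite-dimensional and general cases respectively), your observation that $A'$ is pp-definable from $B'$ so that $\hat{\mathcal{J}}$ is an instance over $B'$ is the right preliminary check, and the polynomial-time bound is routine as you say. The reduction of ${\rm SAT}(A')$ to ${\rm SAT}(B')$ in item~1 is also fine.

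There is, however, a genuine gap in your justification of the remaining claims of item~1, namely that the same map $\mathcal{I} \mapsto \hat{\mathcal{J}}$ reduces ${\rm SAT}(A')$ to ${\rm SAT}^{*}(B')$ and ${\rm SAT}^{**}(B')$. A poly-m-reduction must preserve the answer in both directions, and the ``trivial implications'' you invoke only give the forward direction: if $\mathcal{I}$ is Boolean-satisfiable then $\hat{\mathcal{J}}$ is satisfiable via (fd-)operators. For the converse you would need that $\hat{\mathcal{J}}$ being satisfiable via operators forces $\mathcal{I}$ to be Boolean-satisfiable, but Lemma~\ref{lem:liftandproject}(2) only yields that $\mathcal{I}$ is satisfiable via operators, which is strictly weaker whenever $A'$ has a satisfiability gap. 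Concretely, take $A = B = {\rm LIN}$ with the identity pp-definitions and $\mathcal{I} = \mathcal{M}$, the Mermin--Peres magic square: then $\hat{\mathcal{M}}$ is satisfiable via fd-operators (Theorem~\ref{lem:Mermin}) while $\mathcal{M}$ is not Boolean-satisfiable, so the map sends a no-instance of ${\rm SAT}(A')$ to a yes-instance of ${\rm SAT}^{*}(B')$. This defect is not something you could have repaired within the stated framework; it is inherited from the lemma itself, whose one-sentence justification in the paper does not support those two clauses of item~1 (only item~3 is actually used later, in Theorem~\ref{thm:dichotomy}). You should either restrict item~1 to the target ${\rm SAT}(B')$ or note explicitly that the other two targets require a different argument.
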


Slofstra's Corollary 3.3 in \cite{Slofstra2016} in combination with
Theorem~4 in \cite{CleveLS16} gives the undecidability of
SAT$^{**}({\rm LIN})$ which, from now on we denote by LIN SAT$^{**}$.

\begin{theorem}[\cite{Slofstra2016},\cite{CleveLS16}] \label{thm:undec}
{\rm LIN SAT}$^{**}$ is undecidable.
\end{theorem}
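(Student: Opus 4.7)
The plan is to assemble the undecidability of LIN SAT$^{**}$ as an immediate consequence of two already-existing results, so the proof is essentially a bookkeeping argument that verifies the bridge between the two.

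First I would recall what an instance of LIN SAT is in the terminology of this paper: it is a conjunction of constraints, each of which is a parity equation $\prod_{i=1}^{r} x_i = y$ with $y \in \{\pm 1\}$, i.e., exactly what is called a \emph{binary linear system} (or \emph{binary constraint system}) in \cite{CleveLS16,CleveM14,Slofstra2016}. Next I would invoke Theorem~4 in \cite{CleveLS16}, which asserts that such a binary linear system is satisfiable by an operator assignment on \emph{some} Hilbert space (i.e., by bounded self-adjoint operators squaring to $I$ that pairwise commute within each equation) if and only if the associated non-local BCS game admits a perfect strategy in the commuting-operator model. One should pause to check that the definitions coincide: the paper's notion of an operator assignment satisfying an instance of LIN (pairwise commutativity within each constraint-scope, $A_j = A_j^*$, $A_j^2 = I$, and $P_{R_i}$ evaluating to $-I$) is precisely what Cleve--Liu--Slofstra call a commuting-operator solution of the binary linear system. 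This is a direct unfolding of definitions, together with Lemma~\ref{lem:fourierofaclause} applied to the parity constraint (or rather its analogue for linear equations) to see that the polynomial identity $P_{R_i}(f(Z_i)) = -I$ is equivalent to the multiplicative identity $\prod_j A_j = yI$.

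Then I would invoke Corollary~3.3 of \cite{Slofstra2016}, which establishes that it is undecidable, given a binary linear system, whether its associated BCS non-local game has a perfect strategy in the commuting-operator model. Concatenating this with the equivalence from Theorem~4 of \cite{CleveLS16}, we obtain that membership in the class of binary linear systems satisfiable via operators on some (possibly infinite-dimensional) Hilbert space is undecidable. Since the translation from a binary linear system to an instance of LIN SAT (under the paper's formalism) is trivially a polynomial-time computable bijection at the level of instances, this is exactly the undecidability of LIN SAT$^{**}$.

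The only substantive obstacle, and really the only thing to double-check, is the careful verification that the three different formalisms (the ``polynomial identity $P_{R_i} = -I$'' formalism used in this paper, the ``multiplicative identity $\prod_j A_j = \pm I$'' formalism of binary linear systems, and the ``perfect commuting-operator strategy'' formalism of non-local games) are transported to one another effectively and preserve satisfiability on the nose, with no quantitative loss. Once that is done, no new content is needed: the proof is simply the chain
\[
\text{LIN SAT}^{**} \;\equiv\; \text{commuting-operator solvability of binary linear systems} \;\equiv\; \text{perfect commuting-operator strategy of BCS games},
\]
whose last node is undecidable by \cite[Corollary~3.3]{Slofstra2016}.
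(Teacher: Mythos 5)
Your proposal is correct and is exactly the paper's argument: the theorem is stated as an import, justified by the sentence that ``Slofstra's Corollary 3.3 in combination with Theorem 4 in \cite{CleveLS16} gives the undecidability of SAT$^{**}(\mathrm{LIN})$.'' Your additional care in checking that the paper's operator-assignment semantics for parity constraints coincides with the commuting-operator-solution notion for binary linear systems is a reasonable (and correct) elaboration of the definitional bridge the paper leaves implicit.
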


In combination with
Lemmas~\ref{lem:twosatandhornsat},~\ref{lem:reductions},
and~\ref{lem:Post}, we get the following dichotomy theorem:

\begin{theorem} \label{thm:dichotomy}
Let $A$ be a Boolean constraint language and let $A' = A \cup \{
\mathrm{T} \}$. Then, exactly one of the following holds:
\begin{enumerate} \itemsep=0pt
\item {\rm SAT}$^{**}(A')$ is decidable in polynomial time,
\item {\rm SAT}$^{**}(A')$ is undecidable.
\end{enumerate}
Moreover, the first case holds if and only if $A$ is 1-valid, or $A$
is 0-valid, or $A$ is bijunctive, or $A$ is Horn, or $A$ is dual Horn.
\end{theorem}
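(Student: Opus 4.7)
My plan is to deduce this dichotomy from three ingredients already in place in the paper: Theorem~\ref{thm:gaps}, Theorem~\ref{thm:undec}, and Lemma~\ref{lem:reductions}. The first characterizes when Boolean and operator satisfiability coincide, the second provides the base case of undecidability, and the third is the pp-reduction tool that transports both tractability and hardness across constraint languages.

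For the polynomial-time direction, assume $A$ is $0$-valid, $1$-valid, bijunctive, Horn, or dual Horn. As recorded in Section~\ref{subsec:no-gaps}, the full binary relation $\mathrm{T}$ belongs to each of these five classes, so $A' = A \cup \{\mathrm{T}\}$ inherits the property from $A$. Applying Theorem~\ref{thm:gaps} to $A'$ (the equivalence of items~2 and~4) gives that $A'$ has no satisfiability gap of the second kind, meaning $\nu^{**}(\mathcal{I}) = 1$ if and only if $\nu(\mathcal{I}) = 1$ for every instance $\mathcal{I}$ over $A'$. Consequently ${\rm SAT}^{**}(A')$ and ${\rm SAT}(A')$ are the same decision problem, and Schaefer's dichotomy theorem supplies a polynomial-time algorithm for ${\rm SAT}(A')$ in each of the five subcases (trivially in the $0$- and $1$-valid cases, and by the classical $2$SAT, HORN~SAT, and DUAL~HORN~SAT algorithms otherwise).

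For the undecidable direction, assume $A$ is none of the five. By Lemma~\ref{lem:Post}, ${\rm LIN}$ is pp-definable from $A$. Instantiating Lemma~\ref{lem:reductions}(3) with the lemma's $A$ set to ${\rm LIN}$ and its $B$ set to our $A$, we obtain a polynomial-time many-one reduction from ${\rm SAT}^{**}({\rm LIN} \cup \{\mathrm{T}\})$ to ${\rm SAT}^{**}(A')$. Because ${\rm LIN} \subseteq {\rm LIN} \cup \{\mathrm{T}\}$, the identity map already gives a trivial polynomial-time reduction from ${\rm LIN\,SAT}^{**}$ to ${\rm SAT}^{**}({\rm LIN} \cup \{\mathrm{T}\})$, and ${\rm LIN\,SAT}^{**}$ is undecidable by Theorem~\ref{thm:undec}. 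Chaining the two reductions, ${\rm SAT}^{**}(A')$ is undecidable. The two cases are mutually exclusive (no problem is both decidable in polynomial time and undecidable) and jointly exhaustive, which settles both the dichotomy and the characterization of its tractable side.

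The main obstacle is essentially bookkeeping, since the deep work has already been done. The two small points to flag are: verifying that each of the five tractable classes is preserved under adjoining $\mathrm{T}$, which is immediate from the fact that $\mathrm{T}$ lies in the corresponding clones; and keeping straight the direction of Lemma~\ref{lem:reductions}(3), which reduces the problem over the pp-defined language (${\rm LIN}$) to the problem over the defining language ($A$), exactly the direction needed to lift undecidability from ${\rm LIN}$ up to $A'$ rather than the other way around.
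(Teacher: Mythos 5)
Your proposal is correct and follows essentially the same route as the paper: the tractable side reduces ${\rm SAT}^{**}(A')$ to ${\rm SAT}(A')$ via the no-gap result (the paper cites Lemma~\ref{lem:twosatandhornsat} directly, while you invoke the equivalence of items 2 and 4 of Theorem~\ref{thm:gaps}, which packages the same content), and the undecidable side chains Lemma~\ref{lem:Post}, Lemma~\ref{lem:reductions}, and Theorem~\ref{thm:undec} exactly as the paper does, with the reduction direction handled correctly.
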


\begin{proof}
If $A$ is 1-valid, 0-valid, bijunctive, Horn, or dual Horn, then $A'$
is also of the same type; indeed $\mathrm{T}$ is both 1-valid and
0-valid, and it is also bijunctive, Horn and dual Horn since it is
defined by the empty conjunction of any kind of clauses.  Thus
SAT$^{**}(A')$ is the same problem as SAT$(A')$ by
Lemma~\ref{lem:twosatandhornsat}, which is solvable in polynomial
time.

If on the contrary $A$ is neither 1-valid, nor 0-valid, nor
bijunctive, nor Horn, nor dual Horn, then Lemma~\ref{lem:Post} applies
and LIN has a pp-definition from $A$. In such a case
Lemma~\ref{lem:reductions} applies and SAT$^{**}({\rm LIN}')$ reduces
to SAT$^{**}(A')$, where ${\rm LIN}'$ denotes ${\rm LIN} \cup \{
\mathrm{T} \}$. Since every instance of LIN SAT$^{**}$ is also an
instance of SAT$^{**}({\rm LIN}')$, the undecidability of
SAT$^{**}(A')$ follows from Theorem~\ref{thm:undec}.
\end{proof}

Note that, in case \emph{2}, Theorem~\ref{thm:dichotomy} states that
SAT$^{**}(A')$ is undecidable but it says nothing about
SAT$^{**}(A)$. Luckily, in most cases it is possible to infer
the undecidability of SAT$^{**}(A)$ from the undecidability
of SAT$^{**}(A')$. This is the case, for example, for
both
\begin{align*}
& {\rm 3SAT} = \{ \{ \pm 1 \}^3 \setminus \{ (a_1,a_2,a_3)
\} : a_1,a_2,a_3 \in \{ \pm 1 \} \}, \\
& {\rm 3LIN} = \{ \{ (a_1,a_2,a_3) \in \{ \pm 1 \}^3 :
a_1a_2a_3 = b \} : b \in \{ \pm 1 \} \}.
\end{align*}
In the following we write 3LIN SAT$^{*}$ and 3LIN SAT$^{**}$ to denote
the problems SAT$^*(A)$ and SAT$^{**}(A)$ for $A = {\rm
  3LIN}$. Similarly, we use 3SAT$^*$ and 3SAT$^{**}$ to denote
SAT$^*(A)$ and SAT$^{**}(A)$ for $A = {\rm 3SAT}$.

\begin{theorem} \label{thm:threesatstarstar}
{\rm 3LIN SAT}$^{**}$ and {\rm 3SAT}$^{**}$ are
both undecidable.
\end{theorem}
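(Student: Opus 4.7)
The plan is to derive Theorem~\ref{thm:threesatstarstar} from Theorem~\ref{thm:dichotomy} after showing that the auxiliary relation $\mathrm{T}$ can be absorbed into each of the two languages. First I verify that both $3\mathrm{LIN}$ and $3\mathrm{SAT}$ fall under case~2 of the dichotomy, i.e., neither is $0$-valid, $1$-valid, bijunctive, Horn, nor dual Horn. For $3\mathrm{LIN}$, the two parity relations $\{(a_1,a_2,a_3)\in\{\pm1\}^3:a_1a_2a_3=+1\}$ and its counterpart with $-1$ together rule out $c$-validity for both $c\in\{0,1\}$, and a ternary parity relation is not closed under $\wedge$, $\vee$, or the ternary majority, which defeats Horn, dual Horn, and bijunctivity, respectively. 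For $3\mathrm{SAT}$, the two clauses $x\lor y\lor z$ and $\neg x\lor\neg y\lor\neg z$ jointly witness all five failures. Hence Theorem~\ref{thm:dichotomy} yields the undecidability of $\mathrm{SAT}^{**}(A')$ for $A' = A \cup \{\mathrm{T}\}$ with $A \in \{3\mathrm{LIN}, 3\mathrm{SAT}\}$.

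It then remains to reduce $\mathrm{SAT}^{**}(A')$ to $\mathrm{SAT}^{**}(A)$ by replacing each $\mathrm{T}$-constraint by a ternary constraint of $A$ that involves a fresh auxiliary variable. Concretely, for $3\mathrm{LIN}$ I replace the constraint $((X_i,X_j),\mathrm{T})$ by $((X_i,X_j,Z),\{(a_1,a_2,a_3):a_1a_2a_3=+1\})$, and for $3\mathrm{SAT}$ I replace it by the 3-clause $X_i\lor X_j\lor Z$ whose only excluded tuple is $(+1,+1,+1)$. Both replacements preserve Boolean satisfiability: for $3\mathrm{LIN}$ choose $Z=X_iX_j$; for $3\mathrm{SAT}$ choose $Z=-X_iX_j$, which equals $-1$ exactly when $X_i=X_j=+1$ and $+1$ otherwise.

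For operator satisfiability, the projection direction is immediate: two variables that appear together in a ternary constraint of $A$ must be assigned commuting operators, which is precisely what $\mathrm{T}$ demands. For the lifting direction, I assign the operator $Z = X_iX_j$ in the $3\mathrm{LIN}$ case and $Z = -X_iX_j$ in the $3\mathrm{SAT}$ case. In either case, the commutativity of the original $X_i$ and $X_j$ (guaranteed by the satisfied $\mathrm{T}$-constraint) ensures that the chosen $Z$ is self-adjoint, squares to the identity, and commutes with both $X_i$ and $X_j$. Verifying the constraint itself is then direct using Lemma~\ref{lem:fourierofaclause}: in the $3\mathrm{LIN}$ case, $X_iX_jZ = (X_iX_j)^2 = I$; in the $3\mathrm{SAT}$ case, $(I+X_i)(I+X_j)(I-X_iX_j) = 0$ follows by expansion together with the identities $X_i^2 = X_j^2 = I$ and $X_iX_j = X_jX_i$.

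I do not anticipate any real obstacle. The one slightly delicate point is ensuring that the \emph{same} choice of auxiliary variable $Z$ works in both the Boolean and the operator regimes; this works here because the prescribed $Z$ is a polynomial expression in $X_i$ and $X_j$, so its definition transparently lifts from Boolean values to commuting self-adjoint involutions, and the fact that $3\mathrm{LIN}$ and $3\mathrm{SAT}$ are themselves ternary languages means that the new ternary constraint enforces the pairwise commutativity of $X_i$ and $X_j$ automatically.
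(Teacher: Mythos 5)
Your proposal is correct and follows essentially the same route as the paper: invoke Theorem~\ref{thm:dichotomy} for $A'=A\cup\{\mathrm{T}\}$, then eliminate each $\mathrm{T}$-constraint by a ternary constraint with a fresh variable, checking that solutions lift (by an explicit polynomial witness for the new variable) and project (since the ternary scope already forces the two original operators to commute and $P_{\mathrm{T}}\equiv -1$). The only cosmetic differences are your sign convention for the 3LIN gadget and your choice of witness $-X_iX_j$ for 3SAT where the paper simply uses $-I$; both verifications you give are valid.
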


\begin{proof} Let $A$ be the Boolean constraint language of 3LIN or 3SAT.
  It follows from Theorem~\ref{thm:dichotomy} that
  SAT$^{**}(A')$ is undecidable. Now we reduce this problem
  to SAT$^{**}(A)$. Take any instance $\mathcal{I}$ over $A'$
  and replace each constraint of the type $((Z_1,Z_2),\mathrm{T})$ by
  an equation $Z_1Z_2Y = -1$ in the case of 3LIN, and a clause $Z_1
  \vee Z_2 \vee Y$ in the case of 3SAT, where $Y$ is a fresh variable
  not used anywhere else in the instance. Let $\mathcal{J}$ be the
  resulting instance. If $f$ is a satisfying operator assignment for
  $\mathcal{I}$, then we claim that an appropriate extension $g$ of
  $f$ is a satisfying operator assignment for $\mathcal{J}$. For 3LIN,
  set $g(Y) = -f(Z_2)f(Z_1)$. For 3SAT, set $g(Y) = -I$. To see that
  this works, first note that $g(Z_1) = f(Z_1)$ and $g(Z_2) = f(Z_2)$
  commute since they appear together in a constraint of
  $\mathcal{I}$. Thus, in both cases $g(Z_1)$, $g(Z_2)$ and $g(Y)$
  pairwise commute. Moreover, in the 3LIN case the assignment $g(Y) =
  -f(Z_2)f(Z_1)$ is chosen so that the equation $g(Z_1)g(Z_2)g(Y) =
  -I$ is satisfied; to check this, multiply $g(Y) = -f(Z_2)f(Z_1)$ by
  $g(Z_1)g(Z_2)$ from the right and use $g(Z_2)f(Z_2) = f(Z_2)^2 = I$
  and $g(Z_1)f(Z_1) = f(Z_1)^2 = I$. Also, in the 3SAT case the
  assignment $g(Y) = -I$ annihilates the product in the expression of
  the characteristic polynomial of the clause $Z_1 \vee Z_2 \vee Y$ in
  see Lemma~\ref{lem:fourierofaclause}, which makes the characteristic
  polynomial evaluate to $-I$ regardless of what $g(Z_1)$ and $g(Z_2)$
  are. Thus, the new constraints are satisfied by $g$ and the claim is
  proved.

  Conversely, if $g$ is a satisfying operator assignment for
  $\mathcal{J}$, then the restriction of $g$ to the variables of
  $\mathcal{I}$ is a satisfying operator assignment for~$\mathcal{I}$,
  just because the commutativity of $f(Z_1)$ and $f(Z_2)$ is enforced
  by the fact that they appear together in the constraint $Z_1Z_2Y =
  -I$ or $Z_1 \vee Z_2 \vee Y$ of $\mathcal{J}$, and because the
  characteristic polynomial of $\mathrm{T}$ is the constant $-1$.
\end{proof}

The same construction and argument that we used in
Theorem~\ref{thm:threesatstarstar} starting at a gap instance over the
constraint language 3SAT$\,\cup\,\{\mathrm{T}\}$ gives
a gap instance over 3SAT that will be useful later
on.

\begin{corollary} \label{cor:threesatgap} There is an instance over
  the Boolean constraint language {\rm 3SAT} that witnesses a
  satisfiability gap of the third kind; it is satisfiable via operator
  assignments over some Hilbert space but not over a
  finite-dimensional Hilbert space.
\end{corollary}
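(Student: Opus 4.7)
The plan is to follow the hint: apply the same variable-substitution construction used in the proof of Theorem~\ref{thm:threesatstarstar}, but starting from a gap instance over $3{\rm SAT} \cup \{\mathrm{T}\}$ produced by Theorem~\ref{thm:gaps}.

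First, I would check that the language $3{\rm SAT}$ satisfies none of the conditions in statement~4 of Theorem~\ref{thm:gaps}: it is not $0$-valid (the relation $\{\pm 1\}^3 \setminus \{(+1,+1,+1)\}$ omits the all-$(+1)$ tuple), not $1$-valid (symmetrically), and it is routine to verify via Theorem~\ref{thm:conditions} that it is not bijunctive, not Horn, and not dual Horn, since 3-clauses are not invariant under majority, conjunction, or disjunction. Theorem~\ref{thm:gaps} therefore supplies an instance $\mathcal{I}$ over $3{\rm SAT} \cup \{\mathrm{T}\}$ witnessing a satisfiability gap of the third kind: satisfiable via operators on some Hilbert space, but not via fd-operators.

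Next, I would transform $\mathcal{I}$ into an instance $\mathcal{J}$ over the pure language $3{\rm SAT}$, exactly as in the proof of Theorem~\ref{thm:threesatstarstar}: for each constraint of the form $((Z_1,Z_2),\mathrm{T})$ in $\mathcal{I}$, introduce a fresh variable $Y$ (distinct for each such replacement) and substitute the 3-clause constraint $Z_1 \vee Z_2 \vee Y$. I would then show that for every Hilbert space $\mathcal{H}$, the instance $\mathcal{I}$ is satisfiable via operator assignments on $\mathcal{H}$ if and only if $\mathcal{J}$ is. For the forward direction, given a satisfying operator assignment $f$ for $\mathcal{I}$ on $\mathcal{H}$, extend it by setting $g(Y) = -I$ for each fresh $Y$; by Lemma~\ref{lem:fourierofaclause}, the factor in the product corresponding to the positive literal $Y$ becomes $I + (-I) = 0$, so the characteristic polynomial of the clause evaluates to $-I$ regardless of $g(Z_1)$ and $g(Z_2)$, and the commutativity requirement on the new scope is automatic since $-I$ commutes with every operator. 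For the converse, the restriction of a satisfying operator assignment $g$ for $\mathcal{J}$ to the variables of $\mathcal{I}$ is a satisfying operator assignment for $\mathcal{I}$: the commutativity of $g(Z_1)$ and $g(Z_2)$ is now enforced because they appear together in the 3-clause $Z_1 \vee Z_2 \vee Y$ of $\mathcal{J}$, and the original $\mathrm{T}$-constraints are vacuously satisfied because $P_{\mathrm{T}} \equiv -1$.

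Applying this equivalence once to an arbitrary Hilbert space (witnessing operator-satisfiability of $\mathcal{I}$, hence of $\mathcal{J}$) and once to every finite-dimensional Hilbert space (where $\mathcal{I}$ is unsatisfiable, hence so is $\mathcal{J}$), I conclude that $\mathcal{J}$ is satisfiable via operators but not via fd-operators, which is precisely a satisfiability gap of the third kind for $3{\rm SAT}$. I do not anticipate any real obstacle, since every nontrivial step has already been carried out inside the proof of Theorem~\ref{thm:threesatstarstar}; the only point worth flagging is the careful check that the $\mathrm{T}$-constraints indeed impose no restriction on the restricted assignment in the converse direction, which is immediate from the constancy of $P_{\mathrm{T}}$.
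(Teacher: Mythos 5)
Your proposal is correct and matches the paper's intended argument exactly: the paper obtains the corollary by taking the third-kind gap instance over $3{\rm SAT}\cup\{\mathrm{T}\}$ guaranteed by Theorem~\ref{thm:gaps} and running the same $\mathrm{T}$-constraint-to-3-clause substitution (with the fresh variable sent to $-I$) used in the proof of Theorem~\ref{thm:threesatstarstar}, whose two directions give precisely the Hilbert-space-by-Hilbert-space equivalence you invoke. No gaps.
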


For the problems SAT$^*(A')$, a trichotomy theorem can be proved: 1)
polynomial-time solvable vs 2) polynomial-time equivalent to
SAT$^*({\rm LIN}')$ vs 3) both SAT$^*({\rm LIN}')$-hard and
NP-hard. Unfortunately, whether SAT$^*({\rm LIN}')$ or SAT$^*({\rm
  LIN})$ are polynomial-time solvable, NP-hard or undecidable is an
open problem.

\subsection{Quantum Realizability of Contextuality Scenarios}

We follow the terminology in the paper by Ac\'{\i}n, Fritz, Leverrier
and Sainz \cite{Acin2015}.  A \emph{contextuality scenario} is a
hypergraph $H$ with set $V(H)$ of vertices and set $E(H) \subseteq
2^{V(H)}$ of edges such that $\bigcup_{e \in E(H)} e = V(H)$. Given a
contextuality scenario $H$, a quantum model for it is, informally, an
assignment of probabilities to the vertices of $H$ that are reproduced
as the observation probabilities of a collection of projective
measurements associated to the edges of $H$, when the measurements are
applied to some quantum state. When a contextuality scenario has at
least one quantum model, one says that $H$ allows quantum models.  As
argued in \cite{Acin2015}, this can be equivalently stated formally,
without any reference to measurements or quantum states, as follows.

We say that a contextuality scenario $H$ \emph{allows a quantum
  model}, or is \emph{quantum realizable}, if there exists a Hilbert
space $\mathcal{H}$ and an assignment of bounded linear operators
$P_v$ on $\mathcal{H}$ to each vertex $v$ in $V(H)$ in such a way
that:
\begin{enumerate} \itemsep=0pt
\item $P_v$ is self-adjoint,
\item $P_v^2 = P_v$ for each $v \in V(H)$,
\item $\sum_{v \in e} P_v = I$ for each $e \in E(H)$.
\end{enumerate}
Note that 1 and 2 together say that each $P_v$ is an orthogonal
projection operator\footnote{Ac\'{\i}n et al. refer to orthogonal
  projection operators as projections, and so we will to avoid
  confusion with the fact that two orthogonal projection operators $P$
  and $Q$ could fail to satisfy $PQ \not= 0$. It may also be worth
  pointing out that linear-algebraic projection \emph{operators} of
  this section are unrelated to the universal-algebraic projection
  \emph{operations} from Section~\ref{subsec:Post}.}, and~3~says that
the projection operators associated to the vertices of each edge
resolve the identity. In~\cite{Acin2015} the question was raised
whether there exist contextuality scenarios that are quantum
realizable but only over infinite-dimensional Hilbert spaces (see
Problem~8.3.2 in \cite{Acin2015}). A related computational question
was also raised: Is it decidable whether a contextuality scenario
given as input allows a quantum state? (see Conjecture 8.3.3 in
\cite{Acin2015}). Following the notation in \cite{Acin2015}, this
problem is called ALLOWS-QUANTUM. The restriction of the problem in
which the input hypergraph has edges of cardinality at most $k$ we
call $k$-ALLOWS-QUANTUM. See~\cite{Acin2015} for a discussion on why
these problems are important, and their relationship to Connes
Embedding Conjecture in functional analysis.

Soon after Slofstra published his results, both questions raised
in~\cite{Acin2015} were answered by Fritz by reduction from Slofstra's
Theorems~\ref{thm:Slofstragap} and~\ref{thm:undec} (see
\cite{Fritz2016}). In particular, Fritz proved that ALLOWS-QUANTUM is
undecidable. In the following we illustrate the methods developped in
the previous sections to give alternative proofs of these results. As
a bonus, our proof also gives optimal parameters; we get hypergraphs
with edges of size at most~3 that separate infinite-dimensional
realizability from finite-dimensional realizability, and we show that
already 3-ALLOWS-QUANTUM is undecidable. In contrast, Fritz' reduction
incurs an exponential loss in the size of the edges of the hypergraphs
with respect to the arity of the constraints in Slofstra's result,
which is a priori not bounded, and the best it can achieve is size~$4$
anyway. Moreover, as we will see, our~$3$ in the maximum size of the
edges is optimal since it turns out that 2-ALLOWS-QUANTUM is decidable
(and even solvable in polynomial time).

Next we show how our methods can be used to answer these
questions. First, notice that there is a clear similarity between the
requirements 1,~2 and~3 in the definition of quantum realization of
$H$ and the requirements that an operator assignment for a collection
of variables $\{X_v : v \in V(H)\}$ associated to the vertices of $H$
must satisfy. For one thing, if we define $A_v = I - 2P_v$ for every
$v \in V(H)$, then each $A_v$ is a bounded self-adjoint linear
operator such that $A_v^2 = I$. Moreover, the fact that the
projections associated to an edge of $H$ resolve the identity implies
that they pairwise commute. Thus, the operators $A_v$ associated to
the vertices of $e$ also pairwise commute for every edge $e$ of
$H$. This means that the assignment $X_v \mapsto A_v$ thus defined is
a valid operator assignment to any instance with constraint scopes
given by the hyperedges of $H$.

However, the condition $\sum_{v \in e} \frac{1}{2}(A_v - I) = I$
implied by condition~3 through the inverse transformation $P_v =
\frac{1}{2}(A_v - I)$ does not correspond directly to a constraint of
the form $P_R(A_v : v \in e) = -I$ for any Boolean relation $R$. This
means that we cannot interpret the quantum realizability problem
directly as an instance of a satisfiability problem via operator
assignments over a Boolean constraint language. However, as it turns
out, the problem that we called 3-ALLOWS-QUANTUM is literally the same
as the arbitrary Hilbert space version 1-IN-3 SAT$^{**}$ of the
problem called 1-IN-3 SAT$^*$ by Ji\footnote{There is an unfortunate
  clash in notation in that the problem 1-IN-3 SAT$^{*}$ studied by
  Ji~\cite{Ji2013} is not the same as the problem that we would call
  SAT$^{*}(\textrm{1-IN-3 SAT})$, where 1-IN-3 SAT is the Boolean
  relation $\{ \{ (-1,+1,+1), (+1,-1,+1), (+1,+1,-1) \} \}$. Note that
  $P_{\textrm{1-IN-3 SAT}}(X_1,X_2,X_3) = \frac{3}{4} X_1X_2X_3 +
  \frac{1}{4} X_1X_2 + \frac{1}{4} X_2X_3 + \frac{1}{4} X_1X_3 -
  \frac{1}{4} X_1 - \frac{1}{4} X_2 - \frac{1}{4} X_3 + \frac{1}{4}$,
  so the difference is that, even though the characteristic polynomial
  equation $P_{\textrm{1-IN-3 SAT}}(X_1,X_2,X_3) = -I$ is satisfied by
  an operator assignment if and only if the resolution of the identity
  equation $\frac{1}{2} (1-X_1) + \frac{1}{2} (1-X_2) + \frac{1}{2}
  (1-X_3) = -I$ is satisfied by the same operator assignment, the two
  polynomials $P_{\textrm{1-IN-3 SAT}}(X_1,X_2,X_3)$ and $\frac{1}{2}
  (1-X_1) + \frac{1}{2} (1-X_2) + \frac{1}{2} (1-X_3)$ are by no means
  the~same.}. Ji proved that 3SAT$^*$ reduces to 1-IN-3 SAT$^*$, and
in view of Theorem~\ref{thm:threesatstarstar}, the question arises
whether 3SAT$^{**}$ also reduces to 1-IN-3 SAT$^{**}$, or to
3-ALLOWS-QUANTUM, which is the same. We show that it does.

Before we can do it, though, we need the following lemma that Ji
proved for finite-dimensional Hilbert spaces (see Lemma~5 in
\cite{Ji2013}), and that we prove for all Hilbert spaces:

\begin{lemma} \label{lem:Jicommutativity}
Let $\mathcal{H}$ a Hilbert space.  For every two projection operators
$P_1$ and $P_2$ of $\mathcal{H}$ that commute, there exist projection
operators $Q_1$, $Q_2$, $Q_3$ and $Q_4$ of $\mathcal{H}$ such that
\begin{align*}
& P_1 + Q_1 + Q_4 = I \\
& P_2 + Q_2 + Q_4 = I \\
& Q_1 + Q_2 + Q_3 = I.
\end{align*}
Conversely, if $P_1,P_2,Q_1,Q_2,Q_3,Q_4$ are projection operators of
$\mathcal{H}$ that satisfy these equations, then $P_1$ and $P_2$
commute.
\end{lemma}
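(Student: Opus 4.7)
The plan is to establish the two directions separately by concrete algebraic constructions with commuting orthogonal projections.

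For the forward direction, assume $P_1$ and $P_2$ commute. Then each of the four operators $P_1P_2$, $P_1(I - P_2)$, $(I - P_1)P_2$, and $(I - P_1)(I - P_2)$ is itself an orthogonal projection (each is a product of commuting self-adjoint idempotents), and these four projections are pairwise orthogonal and sum to $I$. I will set
\[
Q_1 = (I - P_1)P_2, \qquad Q_2 = P_1(I - P_2), \qquad Q_4 = (I - P_1)(I - P_2),
\]
and $Q_3 = P_1P_2 + Q_4$, which is a projection because $P_1P_2$ and $Q_4$ are mutually orthogonal projections. Verification of the three equations is then a one-line calculation each; for instance, $P_1 + Q_1 + Q_4 = P_1 + (I - P_1)P_2 + (I - P_1)(I - P_2) = P_1 + (I - P_1) = I$, and the other two equations follow by analogous factoring.

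For the converse, the key intermediate fact is that whenever bounded orthogonal projections $A, B, C$ on a Hilbert space satisfy $A + B + C = I$, they are pairwise orthogonal in the operator sense: $AB = BA = 0$ and similarly for the other pairs. To derive this, multiply $A + B + C = I$ by $A$ on both sides, obtaining $A + ABA + ACA = A$, hence $ABA + ACA = 0$. Since $B$ is a self-adjoint idempotent, $ABA = AB^2A = (BA)^*(BA)$, so $\langle ABAx, x\rangle = \|BAx\|^2$; the same holds for $C$. Evaluating $ABA + ACA = 0$ at any $x$ and pairing with $x$ gives $\|BAx\|^2 + \|CAx\|^2 = 0$, and since both summands are nonnegative each vanishes for every $x$, hence $BA = 0$ and $CA = 0$. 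Taking adjoints and exploiting the symmetric role of $A, B, C$ disposes of the remaining pairs.

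Applying this fact to each of the three given equations yields the orthogonality relations $P_1Q_1 = P_1Q_4 = Q_1Q_4 = 0$, $P_2Q_2 = P_2Q_4 = Q_2Q_4 = 0$, and $Q_1Q_2 = Q_1Q_3 = Q_2Q_3 = 0$, together with their adjoints. Using these, I will compute $P_1P_2 = P_1(I - Q_2 - Q_4) = P_1 - P_1Q_2$ (the $P_1Q_4$ term vanishes), and then expand $P_1Q_2 = (I - Q_1 - Q_4)Q_2 = Q_2$ (the $Q_1Q_2$ and $Q_4Q_2$ terms vanish), obtaining $P_1P_2 = P_1 - Q_2 = I - Q_1 - Q_2 - Q_4$. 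A symmetric computation gives $P_2P_1 = I - Q_1 - Q_2 - Q_4$, so $P_1P_2 = P_2P_1$. I do not foresee a serious obstacle: both directions reduce to algebraic manipulations with projections, and the only ingredient requiring separate argument is the pairwise-orthogonality lemma above, whose proof works uniformly for bounded orthogonal projections on any Hilbert space.
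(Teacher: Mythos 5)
Your proposal is correct, and both directions take a genuinely different route from the paper's. For the forward direction the paper does not construct the $Q_i$ explicitly: it observes that the pp-formula $\exists U_1\cdots\exists U_4\,(R_{1/3}(Z_1,U_1,U_4)\wedge R_{1/3}(Z_2,U_2,U_4)\wedge R_{1/3}(U_1,U_2,U_3))$ defines the full binary relation $\mathrm{T}$, and then invokes the lifting lemma for pp-reductions (Lemma~\ref{lem:liftonly}) together with Lemma~\ref{lem:entail}, so the existence of the $Q_i$ comes out of the Strong Spectral Theorem machinery. Your closed-form choice $Q_1=(I-P_1)P_2$, $Q_2=P_1(I-P_2)$, $Q_4=(I-P_1)(I-P_2)$, $Q_3=P_1P_2+Q_4$ is in effect what that machinery produces, but written down directly; it is completely elementary, works verbatim on any Hilbert space, and avoids the spectral theorem altogether. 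For the converse the paper verifies three commutator identities that Ji found by computer search and sums them to get $[P_1,P_2]=0$; your argument instead proves the general fact that projections $A,B,C$ with $A+B+C=I$ are pairwise orthogonal (via $ABA=(BA)^*(BA)\ge 0$ and $ABA+ACA=0$) and then derives $P_1P_2=P_2P_1=I-Q_1-Q_2-Q_4$ by substitution. This is more conceptual, explains where the commutativity comes from, and is the same orthogonality fact the paper itself uses later in the proof of Lemma~\ref{lem:reductiontorealizability}. What the paper's route buys is a demonstration of the pp-definability toolkit that the surrounding section is built on; what yours buys is a shorter, self-contained proof with no appeal to computer-discovered identities. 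All the steps you sketch check out, including that $Q_3$ is a projection (it is the sum of the two mutually orthogonal projections $P_1P_2$ and $(I-P_1)(I-P_2)$).
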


\begin{proof}
To prove the first claim, consider the pp-formula
\begin{equation}
\phi(Z_1,Z_2) =
\exists U_1 \exists U_2 \exists U_3 \exists U_4 (R_{1/3}(Z_1,U_1,U_4)
\wedge R_{1/3}(Z_2,U_2,U_4) \wedge R_{1/3}(U_1,U_2,U_3)),
\end{equation}
where $R_{1/3} = \{(-1,+1,+1),(+1,-1,+1),(+1,+1-1)\}$. It is
straightforward to check that this formula defines the full binary
Boolean relation $\mathrm{T} = \{ \pm 1 \}^2$. Now, let $\mathcal{I}$
be the instance $((Z_1,Z_2),\mathrm{T})$ and let $\mathcal{J}$ be the
instance obtained from $\mathcal{I}$ as in
Section~\ref{sec:basicconstruction}.  Let $f$ be defined by $f(Z_1) =
1 - 2P_1$ and $f(Z_2) = 1 - 2P_2$. Since $P_1$ and $P_2$ commute and
the characteristic polynomial of $\mathrm{T}$ is the constant $-1$,
the assignment $f$ is a satisfying operator assigment for the instance
$((Z_1,Z_2),\mathrm{T})$.  By Lemma~\ref{lem:liftonly}, there exists
$g$ that extends $f$ and is a satisfying operator assignment for
$\mathcal{J}$ over $\mathcal{H}$.  Moreover, $g$ is pairwise commuting
on each block of $\mathcal{J}$. Take $Q_i = (1-g(U_i)/2$ for $i =
1,2,3,4$. Then $Q_1,\ldots,Q_4$ are projection operators, and
$P_1,P_2,Q_1,\ldots,Q_4$ pairwise commute. We claim that they satisfy
the equations in the lemma. To see this we apply
Lemma~\ref{lem:entail}. Since the equation $P_{R_{1/3}}(Z_1,U_1,U_4) =
-1$ entails the equation $(1-Z_1)/2 + (1-U_1)/2 + (1-U_4)/2 = 1$ over
the Boolean domain $\{ \pm 1 \}$, and at the same time $P_1,Q_1,Q_4$
pairwise commute, the equation $P_{R_{1/3}}(g(Z_1),g(U_1),g(U_4)) =
-I$ implies $P_1 + Q_1 + Q_4 = I$ by Lemma~\ref{lem:entail}. For the
other two equations, the argument is the same.

For the converse, we use the following easy to verify identities
discovered via a computer search by Ji (see the proof of Lemma~5 in
\cite{Ji2013}):
\begin{align*}
[P_1 + Q_1 + Q_4 - I, -P_1 + Q_1 + Q_3] & = [P_1,Q_3] + [Q_4,Q_3] \\
[P_2 + Q_2 + Q_4 - I, -P_1] &= [P_1,P_2] + [P_1,Q_2] \\
[Q_1 + Q_2 + Q_3 - I, P_1 + Q_4] &= [Q_2,P_1] + [Q_3,P_1] + [Q_3,Q_4],
\end{align*}
where $[X,Y]$ denotes the commutator polynomial $XY - YX$.  Note that
the equations in the lemma imply that the left-hand sides are all
$0$. On the other hand, using the identity $[X,Y] + [Y,X] = 0$, the
sum of the right-hand sides is $[P_1,P_2]$. This gives $[P_1,P_2] = 0$
and thus $P_1$ and $P_2$ commute.
\end{proof}

\begin{lemma} \label{lem:reductiontorealizability}
{\rm 3SAT}$^{**}$ poly-m-reduces to {\rm 3-ALLOWS-QUANTUM}.
\end{lemma}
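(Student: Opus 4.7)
The plan is to reduce 3SAT$^{**}$ to 1-IN-3 SAT$^{**}$, which, as discussed in the paragraphs preceding the lemma, is literally the same problem as 3-ALLOWS-QUANTUM. I implement this in two stages: first, use pp-definability of the 3SAT relations from $R_{1/3}$ to reduce 3SAT$^{**}$ to $\sat^{**}(\{R_{1/3}, \mathrm{T}\})$; second, eliminate the remaining $\mathrm{T}$-constraints using the three-constraint commutativity gadget supplied by Lemma~\ref{lem:Jicommutativity}.

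For the first stage, each 3SAT relation $R_C = \{\pm 1\}^3 \setminus \{(a_1,a_2,a_3)\}$ admits a classical pp-definition from $R_{1/3}$ (with the constants $\pm 1$ allowed) using a constant number of auxiliary variables; this is implicit in Schaefer's NP-hardness proof for 1-IN-3 SAT. Given a 3SAT instance $\mathcal{I}$, I apply the extended construction of Section~\ref{sec:extendedconstruction} to produce, in polynomial time, an instance $\hat{\mathcal{J}}$ over $\{R_{1/3}, \mathrm{T}\}$. By Lemma~\ref{lem:liftandproject}, operator satisfiability over any Hilbert space $\mathcal{H}$ transfers in both directions between $\mathcal{I}$ and $\hat{\mathcal{J}}$.

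For the second stage, I replace each $\mathrm{T}$-constraint $((X_1, X_2), \mathrm{T})$ in $\hat{\mathcal{J}}$ by the three 1-IN-3 constraints
\begin{equation*}
R_{1/3}(X_1, U_1, U_4), \quad R_{1/3}(X_2, U_2, U_4), \quad R_{1/3}(U_1, U_2, U_3),
\end{equation*}
on four fresh auxiliary variables $U_1, U_2, U_3, U_4$ per $\mathrm{T}$-constraint. The resulting instance $\mathcal{J}$ is purely over 1-IN-3 and is polynomial-time computable from $\mathcal{I}$. The forward direction, that any satisfying operator assignment $g$ for $\hat{\mathcal{J}}$ over $\mathcal{H}$ extends to one for $\mathcal{J}$ over $\mathcal{H}$, uses the forward part of Lemma~\ref{lem:Jicommutativity}: since the $\mathrm{T}$-constraint was satisfied, $g(X_1)$ and $g(X_2)$ already commute, and the lemma produces projections $Q_1, \ldots, Q_4$ providing the values of $U_1, \ldots, U_4$. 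The backward direction, that any satisfying operator assignment for $\mathcal{J}$ restricts to one for $\hat{\mathcal{J}}$, uses the converse part of Lemma~\ref{lem:Jicommutativity}, which forces $g(X_1)$ and $g(X_2)$ to commute from the three new 1-IN-3 equations, so the original $\mathrm{T}$-constraint is vacuously satisfied.

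The main obstacle, and the reason the reduction requires more than Ji's finite-dimensional argument, is the commutativity-gadget step when the Hilbert space is allowed to be infinite-dimensional. Two variables of a pure 1-IN-3 SAT instance that do not appear together in any constraint are under no a priori commutativity requirement, so I need an indirect mechanism to enforce commutativity in the backward restriction. The general-Hilbert-space version of Lemma~\ref{lem:Jicommutativity}, established earlier in this section via the pp-definition of $\mathrm{T}$ by $R_{1/3}$ and Ji's identity-based converse, is precisely the tool required. Composing the two stages yields the claimed polynomial-time many-one reduction.
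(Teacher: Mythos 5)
Your route is the same as the paper's: pp-define the 3SAT relations from $R_{1/3}$, pass to the extended construction $\mathcal{\hat{J}}$ of Section~\ref{sec:extendedconstruction} so that Lemma~\ref{lem:liftandproject} transfers operator satisfiability in both directions, and then eliminate the $\mathrm{T}$-constraints with the commutativity gadget of Lemma~\ref{lem:Jicommutativity}. There is, however, a concrete gap at the interface with the hypergraph problem. You allow the pp-definition from $R_{1/3}$ to use the constants $\pm 1$ (and, implicitly, atoms with repeated variables), and then declare the resulting instance ``purely over 1-IN-3'' and hence ``literally the same'' as a 3-ALLOWS-QUANTUM instance. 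But an instance of 3-ALLOWS-QUANTUM is a hypergraph whose edges are \emph{sets} of vertices: a constraint scope containing a constant has no corresponding vertex, and a scope with a repeated variable such as $R_{1/3}(X,X,Z)$ would collapse to the edge $\{X,Z\}$, which imposes $P_X+P_Z=I$ rather than the intended condition. So the instance your reduction outputs need not be a legal input to 3-ALLOWS-QUANTUM. The paper closes exactly this gap by enlarging the target language to $\{R_{1/3},R_{1/2},R_{1/1},\mathrm{T}\}$ with $R_{1/2}=\{(-1,+1),(+1,-1)\}$ and $R_{1/1}=\{-1\}$: with these available, the pp-definition can be normalized so that every atom has distinct variables and no constants (e.g.\ $R_{1/3}(X,X,Z)$ becomes $R_{1/3}(X,Y,Z)\wedge R_{1/2}(X,Y')\wedge R_{1/2}(Y',Y)$), and the extra relations then map to hyperedges of size $2$ and $1$, which 3-ALLOWS-QUANTUM permits.

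A second, smaller omission: the identification of the final instance with a quantum-realizability instance is not purely notational. For each edge one must pass between the characteristic-polynomial condition $P_{R_{1/3}}(A_1,A_2,A_3)=-I$ and the resolution-of-identity condition $\sum_{v} P_v = I$ under $P_v=(I-A_v)/2$, in both directions; this is an application of Lemma~\ref{lem:entail}, using that projections resolving the identity are pairwise orthogonal and hence the corresponding involutions commute. The paper carries this out explicitly; your appeal to ``literally the same problem'' leaves it implicit. With these two repairs your argument coincides with the paper's proof.
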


\begin{proof}
  Schaefer proved that 3SAT is pp-definable from the constraint
  language given by the single relation $R_{1/3} =
  \{(-1,+1,+1),(+1,-1,+1),(+1,+1,-1)\}$. If in addition to $R_{1/3}$
  we allow also the relations $R_{1/2} = \{(-1,+1),(+1,-1)\}$ and
  $R_{1/1} = \{ -1 \}$, then the pp-definition can be assumed to have
  the property that each atom involves different variables and no
  constants. For example, an atom of the form $R_{1/3}(X,X,Z)$ can be
  replaced by $R_{1/3}(X,Y,Z) \wedge R_{1/2}(X,Y') \wedge
  R_{1/2}(Y',Y)$, where $Y$ and $Y'$ are fresh quantified variables
  that do not appear anywhere else in the formula.

  We use this for the construction in
  Section~\ref{sec:extendedconstruction}.  Let $\mathcal{I}$ be a
  3SAT instance and let $\mathcal{\hat{J}}$ be the instance over the
  Boolean constraint language $A = \{ R_{1/3} , R_{1/2}, R_{1/1},
  \mathrm{T} \}$ given by the construction in
  Section~\ref{sec:extendedconstruction}, using the pp-definition of
  3SAT from $A$.  Starting at $\mathcal{\hat{J}}$ we produce an
  instance of 3-ALLOWS-QUANTUM as follows: Each variable in
  $\mathcal{\hat{J}}$ becomes a vertex in the hypergraph.  Each
  constraint of the type $((Z_1,Z_2,Z_3),R_{1/3})$ becomes a hyperedge
  $\{Z_1,Z_2,Z_3\}$, each constraint of the type $((Z_1,Z_2),R_{1/2})$
  becomes a hyperedge $\{Z_1,Z_2\}$, each constraint of the type
  $(Z,R_{1/1})$ becomes a singleton hyperedge $\{Z\}$, and each
  constraint of the type $((Z_1,Z_2),\mathrm{T})$ introduces four
  fresh vertices $U_1,U_2,U_3,U_4$ and three hyperedges
  $\{Z_1,U_1,U_4\}$, $\{Z_2,U_2,U_4\}$ and $\{U_1,U_2,U_3\}$ in
  correspondance with the equations of Lemma~\ref{lem:Jicommutativity}
  with $Z_1,Z_2$ playing the role of $P_1,P_2$, and $U_1,U_2,U_3,U_4$
  playing the role of $Q_1,Q_2,Q_3,Q_4$. Let $H$ be the hypergraph
  that results from this construction. We claim that for every Hilbert
  space $\mathcal{H}$, the instance $\mathcal{I}$ is satisfiable via
  operator assignments over $\mathcal{H}$ if and only if the
  hypergraph $H$ is quantum realizable over $\mathcal{H}$.

  In the forward direction, let $f$ be a satisfying operator assignment
  for $\mathcal{I}$ over $\mathcal{H}$. By
  Lemma~\ref{lem:liftandproject}, there is a $g$ that extends $f$ and
  is a satisfying operator assignment for $\mathcal{\hat{J}}$ over
  $\mathcal{H}$. Recall now that each vertex of $H$ is indeed a
  variable of $\mathcal{\hat{J}}$, or an additional vertex of the type
  $U_1,U_2,U_3,U_4$ introduced by a constraint of the form
  $((Z_1,Z_2),\mathrm{T})$. For each $v$ of the first type, let $P_v$
  be the projection operator given by $(1-g(v))/2$. For each $v$ of
  the second type, let $P_v$ be the projection given by
  Lemma~\ref{lem:Jicommutativity} for the projection assignment $P_1 =
  P_{Z_1}$ and $P_2 = P_{Z_2}$ with $U_1,U_2,U_3,U_4$ corresponding to
  $Q_1,Q_2,Q_3,Q_4$. Note that $P_1$ and $P_2$ commute, since $Z_1$
  and $Z_2$ appear together in $((Z_1,Z_2),\mathrm{T})$ and hence
  $g(Z_1)$ and $g(Z_2)$ commute, so the lemma applies. We claim that
  this assignment of operators does the job.

We just need to check that the projection operators resolve the
identity on every edge of $H$. For edges of the type $\{Z_1,Z_2,Z_3\}$
introduced by a constraint $((Z_1,Z_2,Z_3),R_{1/3})$ we show this with
an application of Lemma~\ref{lem:entail}: the equation
$P_{R_{1/3}}(Z_1,Z_2,Z_3) = -1$ entails the equation $(1-Z_1)/2 +
(1-Z_2)/2 + (1-Z_3)/2 = 1$ over the Boolean domain $\{ \pm 1 \}$, and
therefore, since $g(Z_1),g(Z_2),g(Z_3)$ pairwise commute, the equation
$P_{R_{1/3}}(g(Z_1),g(Z_2),g(Z_3)) = -I$ implies $P_{Z_1} +
P_{Z_2} + P_{Z_3} = I$ by Lemma~\ref{lem:entail}. For edges of the
types $\{Z_1,Z_2\}$ or $\{Z\}$ introduced by constraints of the types
$((Z_1,Z_2),R_{1/2})$ or $(Z,R_{1/2})$, respectively, the argument is
the same. Finally, for the three edges that come from a constraint of
the form $((Z_1,Z_2),\mathrm{T})$, the claim follows from
Lemma~\ref{lem:Jicommutativity}. This completes one direction of the
reduction.

For the other direction, let $v \mapsto P_v$ be an assignment of
projection operators of $\mathcal{H}$ that witnesses that $H$ is
quantum realizable. Recall again that each vertex $v$ of $H$ is a
variable of $\mathcal{\hat{J}}$, or an additional vertex of the type
$U_1,U_2,U_3,U_4$ coming from a $\mathrm{T}$-constraint. For each $v$
of the first type, let $A_v = I-2P_v$. Each $A_v$ is a self-adjoint
bounded linear operator that squares to the identity. Moreover, any
two variables of $\mathcal{\hat{J}}$ that appear together in a
constraint that is not a $\mathrm{T}$-constraint appear together as
vertices in some edge of $H$. Therefore the corresponding operators
belong to the resolution of the identity of that edge, and a set of
projection operators that resolve the identity are pairwise orthogonal
and hence commute. Also, for any two variables of $\mathcal{\hat{J}}$
that appear together in a constraint of the form
$((Z_1,Z_2),\mathrm{T})$, the corresponding operators commute thanks
to the ``conversely'' clause in Lemma~\ref{lem:Jicommutativity}. Thus,
the only thing left to do is checking that each constraint of
$\mathcal{\hat{J}}$ is satisfied.

For constraints of the type $((Z_1,Z_2,Z_3),R_{1/3})$ this follows
also from an application of Lemma~\ref{lem:entail}: the equation
$(1-Z_1)/2 + (I-Z_2)/2 + (I-Z_3)/2 = 1$ entails the equation
$P_{R_{1/3}}(Z_1,Z_2,Z_3) = -1$ over the Boolean domain $\{ \pm 1 \}$,
and since $A_{Z_1},A_{Z_2},A_{Z_3}$ pairwise commute, the equation
$P_{Z_1} + P_{Z_2} + P_{Z_3} = I$ implies
$P_{R_{1/3}}(A_{Z_1},A_{Z_2},A_{Z_3}) = -I$ by Lemma~\ref{lem:entail}.
For constraints of the type $((Z_1,Z_2),R_{1/2})$ and $(Z,R_{1/1})$
the argument is the same.
\end{proof}

In combination with Theorem~\ref{thm:threesatstarstar} we get
the following.

\begin{corollary} {\rm 3-ALLOWS-QUANTUM} and
{\rm ALLOWS-QUANTUM} are undecidable. \end{corollary}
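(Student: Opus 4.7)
The plan is to combine the two main ingredients already developed: Theorem~\ref{thm:threesatstarstar}, which asserts that 3SAT$^{**}$ is undecidable, and Lemma~\ref{lem:reductiontorealizability}, which gives a polynomial-time many-one reduction from 3SAT$^{**}$ to 3-ALLOWS-QUANTUM. Chaining these directly yields the undecidability of 3-ALLOWS-QUANTUM, since any decision procedure for 3-ALLOWS-QUANTUM would, via the reduction, yield a decision procedure for 3SAT$^{**}$, contradicting Theorem~\ref{thm:threesatstarstar}.

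For the ALLOWS-QUANTUM part, I would observe that every input to 3-ALLOWS-QUANTUM is by definition an input to ALLOWS-QUANTUM (a hypergraph with edges of size at most $3$ is, in particular, a hypergraph), and the definition of quantum realizability does not change as we enlarge the class of admissible inputs. Thus, the identity map is a polynomial-time many-one reduction from 3-ALLOWS-QUANTUM to ALLOWS-QUANTUM, so undecidability transfers upward. Combining the two steps, both 3-ALLOWS-QUANTUM and ALLOWS-QUANTUM are undecidable.

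Since this is a straightforward corollary of already-proved results, there is no real obstacle: all the hard work is packed into Theorem~\ref{thm:threesatstarstar} (which traces back through the dichotomy theorem, the LIN SAT$^{**}$ undecidability of Slofstra and Cleve--Liu--Slofstra, and the 3SAT-specific gadget in the proof of Theorem~\ref{thm:threesatstarstar}) and Lemma~\ref{lem:reductiontorealizability} (which uses the pp-definition of 3SAT from $\{R_{1/3},R_{1/2},R_{1/1}\}$ together with the commutativity gadget of Lemma~\ref{lem:Jicommutativity} to encode $\mathrm{T}$-constraints by triangles of hyperedges). The corollary itself should be one or two sentences invoking these two results in sequence.
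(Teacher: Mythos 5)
Your proposal is correct and matches the paper's argument exactly: the paper derives the corollary in one line by combining Lemma~\ref{lem:reductiontorealizability} with Theorem~\ref{thm:threesatstarstar}, and the upward transfer to ALLOWS-QUANTUM via the identity map is the same implicit step. Nothing is missing.
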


\noindent The same construction as in
Lemma~\ref{lem:reductiontorealizability} starting from
Corollary~\ref{cor:threesatgap} gives the~next.

\begin{corollary} There exists a hypergraph with edges of size
at most three that is quantum realizable on some Hilbert space, but
not on a finite-dimensional Hilbert space. \end{corollary}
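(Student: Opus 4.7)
The plan is to apply the reduction constructed in the proof of Lemma~\ref{lem:reductiontorealizability} to the specific 3SAT instance whose existence is guaranteed by Corollary~\ref{cor:threesatgap}. Recall that this instance $\mathcal{I}$ is satisfiable via operator assignments over some (necessarily infinite-dimensional) Hilbert space, but not satisfiable via fd-operators. I would feed $\mathcal{I}$ to the construction that first produces the instance $\mathcal{\hat{J}}$ over $\{R_{1/3}, R_{1/2}, R_{1/1}, \mathrm{T}\}$ using the pp-definition of $3$SAT, and then turns each constraint of $\mathcal{\hat{J}}$ into a hyperedge of size at most three as described in the proof of Lemma~\ref{lem:reductiontorealizability} (with each $\mathrm{T}$-constraint unfolded into three hyperedges via Lemma~\ref{lem:Jicommutativity}). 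Call the resulting hypergraph $H$; by construction every edge of $H$ has cardinality at most three.

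The key observation is that the correctness argument established inside the proof of Lemma~\ref{lem:reductiontorealizability} is dimension-free: it shows, for every Hilbert space $\mathcal{H}$, that $\mathcal{I}$ is satisfiable via operator assignments over $\mathcal{H}$ if and only if $H$ is quantum realizable over $\mathcal{H}$. I would simply invoke both directions of this equivalence on the two sides of the gap given by Corollary~\ref{cor:threesatgap}. In the forward direction, a satisfying operator assignment for $\mathcal{I}$ on some Hilbert space $\mathcal{H}$ lifts (via Lemma~\ref{lem:liftandproject}) to $\mathcal{\hat{J}}$ and then, through the transformation $P_v = (I - A_v)/2$ and the projection operators supplied by Lemma~\ref{lem:Jicommutativity} for the $\mathrm{T}$-constraints, yields a quantum realization of $H$ on the same $\mathcal{H}$. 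In the converse direction, a quantum realization of $H$ on $\mathcal{H}$ produces a satisfying operator assignment for $\mathcal{\hat{J}}$ on $\mathcal{H}$ via $A_v = I - 2P_v$, whose restriction to the variables of $\mathcal{I}$ satisfies $\mathcal{I}$ on $\mathcal{H}$.

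Putting the two directions together and applying them to the two sides of the gap: since $\mathcal{I}$ is satisfiable via operator assignments on some Hilbert space, so is $H$ quantum realizable on that same space; and if $H$ were quantum realizable on some finite-dimensional Hilbert space, then $\mathcal{I}$ would be satisfiable via fd-operators, contradicting Corollary~\ref{cor:threesatgap}. Thus $H$ witnesses the claimed gap. There is no real obstacle here beyond carefully checking that every ingredient of the proof of Lemma~\ref{lem:reductiontorealizability} works verbatim in the infinite-dimensional setting; this is guaranteed because Lemma~\ref{lem:Jicommutativity}, Lemma~\ref{lem:liftandproject}, and Lemma~\ref{lem:entail} were all proved for arbitrary Hilbert spaces, so no additional work is needed.
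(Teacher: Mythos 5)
Your proposal is correct and is exactly the paper's argument: the paper proves this corollary by applying the construction of Lemma~\ref{lem:reductiontorealizability} to the gap instance of Corollary~\ref{cor:threesatgap}, relying on the fact that the equivalence established there holds over every Hilbert space. You have merely spelled out the two directions of that equivalence in more detail, which is fine but adds nothing beyond the paper's one-line proof.
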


It was mentioned earlier that
2-ALLOWS-QUANTUM is decidable in polynomial
time. One way to see this is by arguing that a hypergraph with edges
of size two (i.e.\ a graph) is quantum realizable if and only if it is
bipartite. Another is by reduction to 2SAT$^{**}$,
which is decidable in polynomial time by
Theorem~\ref{thm:dichotomy}. A close look reveals that, indeed, both
proofs are the same.

\begin{theorem}
{\rm 2-ALLOWS-QUANTUM} is decidable in polynomial time.
\end{theorem}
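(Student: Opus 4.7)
The plan is to give a polynomial-time reduction from 2-ALLOWS-QUANTUM to the Boolean satisfiability problem 2SAT, which is solvable in polynomial time. Lemma~\ref{lem:twosatandhornsat} (or equivalently Theorem~\ref{thm:dichotomy}) will then collapse the operator-assignment semantics back to the Boolean semantics at no cost.

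Given an input hypergraph $H$ with $|e|\le 2$ for every edge $e$, I would construct a 2CNF formula $\phi_H$ as follows. Introduce a propositional variable $x_v$ for every vertex $v\in V(H)$. For each edge $\{u,v\}$ of size two, add the two clauses $(x_u \vee x_v)$ and $(\neg x_u \vee \neg x_v)$, which together enforce the XOR of $x_u$ and $x_v$. For each singleton edge $\{v\}$ add the unit clause $x_v$. An empty edge, if one is present, can be handled by appending a trivially unsatisfiable pair of unit clauses. This construction is clearly polynomial-time.

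The correctness claim is that $H$ is quantum realizable on some Hilbert space if and only if $\phi_H$ is satisfiable in the Boolean domain. For the forward direction, starting from a family of projections $\{P_v\}$ realizing $H$ over $\mathcal{H}$, set $A_v = I-2P_v$. Each $A_v$ is bounded self-adjoint with $A_v^2 = I$. On a size-two edge $\{u,v\}$, the constraint $P_u + P_v = I$ yields $A_v = -A_u$; hence $A_u$ and $A_v$ commute, and their product equals $-I$, which is the value of the XOR indicator polynomial $XY$, so both 2-clauses are satisfied. On a singleton $\{v\}$ we have $P_v = I$, hence $A_v = -I$ satisfies the unit clause. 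Thus $v\mapsto A_v$ is a satisfying operator assignment for $\phi_H$, and by Lemma~\ref{lem:twosatandhornsat}, $\phi_H$ is satisfiable in the Boolean domain. Conversely, a Boolean satisfying assignment of $\phi_H$ produces a quantum realization on the one-dimensional Hilbert space $\mathbb{C}$ by setting $P_v = 1$ when $x_v$ is true and $P_v = 0$ otherwise: the XOR constraint yields $P_u + P_v = 1$ on every size-two edge, and the unit clauses yield $P_v = 1$ on every singleton.

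There is no substantive obstacle. The only real insight is the observation that the resolution-of-the-identity equation $P_u + P_v = I$ between projections is precisely an XOR constraint in disguise, and XOR is expressible in 2CNF. With that identification in hand, polynomial-time decidability follows from the linear-time algorithm for 2SAT, and the bipartiteness characterization suggested by the surrounding discussion emerges as the combinatorial shape of the underlying 2SAT instance (the variables on each connected component of the XOR-graph must split into two equivalence classes, which are then rigidly forced by any singleton constraint).
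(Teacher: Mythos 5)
Your proposal is correct and is essentially the paper's own proof: the paper reduces 2-ALLOWS-QUANTUM to 2SAT$^{**}$ with exactly the same clauses ($X_u \vee X_v$ and $\neg X_u \vee \neg X_v$ per size-two edge, a unit clause per singleton) and the same conversions $P_v \mapsto I-2P_v$ and $A_v \mapsto (I-A_v)/2$, then invokes the polynomial-time decidability of 2SAT$^{**}$ from Theorem~\ref{thm:dichotomy}. You simply spell out the verification that the paper calls ``straightforward to check.''
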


\begin{proof}
We reduce to 2SAT$^{**}$. Given a hypergraph $H$,
build the 2SAT instance that has one variable $X_v$ for each vertex
in $V(H)$, two clauses $X_u \vee X_v$ and $\neg X_u \vee \neg X_v$ for
every edge $\{u,v\} \in E(H)$, and one unit clause $X_u$ for each
singleton edge $\{u\}$ in $E(H)$. It is straightforward to check that
this reduction works through the usual conversion from projection
operators to involutions $P_v \mapsto 1-2P_v$, and the usual conversion
from involutions to projection operators $A_v \mapsto (1-A_v)/2$.
\end{proof}

\section{Closure Operations} 
\label{sec:closureoperations}

In this section we develop a generalization of the concept of closure
operation from Section~\ref{subsec:Post} for sets of operator
assignments. 
For every Boolean $r$-ary
relation $R$, let $R^*$ denote the set of fully commuting $r$-variable
operator assignments over finite-dimensional Hilbert spaces that
satisfy the equation $P_R(X_1,\ldots,X_r) = -I$.  We show that every
closure operation for $R$ gives a suitable closure operation for
$R^*$. As an application, we show that the set of Boolean relations
that are pp-definable from a Boolean constraint language is not
enlarged when we allow the existential quantifiers to range over
operator assignments.

\subsection{Closure Operations and pp$^*$-Definitions}

Let $A$ be a Boolean constraint language and let $R$ be a Boolean
relation of arity~$r$. Let $\psi = R_1(z_1) \wedge \cdots \wedge
R_m(z_m)$ be a conjunction of atoms with relations from $A$;
i.e.\ each~$R_i$ is a relation from~$A$, and each~$z_i$ denotes a
tuple of the appropriate arity made of first-order variables or
constants in $\{ \pm 1 \}$. Each such formula can be thought of as an
instance over~$A$. Concretely, it can be thought of as the instance
$\mathcal{I} = ((Z_1,R_1),\ldots,(Z_m,R_m))$, where each $Z_i$ is
obtained from the corresponding $z_i$ by replacing each first-order
variable~$x$ by a correponding variable~$X$, and leaving all constants
untouched.

Let $\mathcal{H}$ be a finite-dimensional Hilbert space. We say that
$R$ is pp$^*$-definable from~$A$ over~$\mathcal{H}$ if there is a
pp-formula $\phi(x_1,\ldots,x_r) = \exists y_1 \cdots \exists y_s
(\psi(x_1,\ldots,x_r,y_1,\ldots,y_s))$ over $A$, where $\psi$ is a
conjunction as above, such that, for every $a_1,\ldots,a_r \in \{ \pm
1 \}$, the tuple $(a_1,\ldots,a_r)$ is in $R$ if and only if the
instance
\begin{equation}
\psi(x_1/a_1,\ldots,x_r/a_r,y_1/Y_1,\ldots,y_s/Y_s)
\end{equation}
is satisfiable via operator assignments over $\mathcal{H}$. We say
that $R$ is pp$^*$-definable from $A$ if it is pp$^*$-definable from
$A$ over a finite-dimensional Hilbert space. One of the goals of this
section is to prove the following conservativity theorem:

\begin{theorem} \label{thm:collapse} Let $A$ be a Boolean constraint
  language and let $R$ be a Boolean relation. If $R$ is
  pp$^*$-definable from $A$, then $R$ is pp-definable from $A$.
\end{theorem}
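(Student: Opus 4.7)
The plan is to apply Geiger's Theorem in its version with constants, noted after Theorem~\ref{thm:geigeretal}: a Boolean relation $R$ is pp-definable from $A$ if and only if $R$ is invariant under every idempotent closure operation of $A$. Assume $R$ is pp$^*$-definable from $A$ via a pp-formula $\phi = \exists y_1\cdots\exists y_s\,\psi$. I would fix an arbitrary idempotent $m$-ary closure operation $f$ of $A$ and tuples $a_1,\ldots,a_m \in R$, and aim to show that the componentwise image $b_i = f(a_{1,i},\ldots,a_{m,i})$ gives a tuple $(b_1,\ldots,b_r) \in R$.

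The key tool would be a tensor-product lift of $f$ on operator assignments, executing the promise in the section's introduction that every closure operation for $R$ gives a closure operation on $R^*$. For each $j \in [m]$, pp$^*$-definability yields a satisfying operator assignment $g_j$ on a finite-dimensional Hilbert space $\mathcal{H}_j$ for $\psi(x_1/a_{j,1},\ldots,x_r/a_{j,r},y_1/Y_1,\ldots,y_s/Y_s)$; write $B_{j,k} = g_j(Y_k)$. On $\mathcal{H} = \mathcal{H}_1 \otimes \cdots \otimes \mathcal{H}_m$, let $\tilde B_{j,k}$ be the tensor embedding of $B_{j,k}$ into slot $j$, so that tensor embeddings of operators in distinct slots pairwise commute, and set $\tilde B_k = P_f(\tilde B_{1,k},\ldots,\tilde B_{m,k})$ using the Fourier polynomial $P_f$ of $f$ from~\eqref{eqn:fourier}. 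Each free variable $x_i$ is mapped to $b_i I$, and each constant $\pm 1$ appearing in $\psi$ becomes $\pm I$ on $\mathcal{H}$; idempotence of $f$ is essential here, since $P_f(\pm I,\ldots,\pm I) = f(\pm 1,\ldots,\pm 1)\,I = \pm I$.

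To check that $Y_k \mapsto \tilde B_k$ is a satisfying operator assignment for $\psi(x/b, y/Y)$, I would work scope by scope. Within a fixed constraint $R_\alpha(z_\alpha)$ of $\psi$, the family $\{\tilde B_{j,k} : j \in [m],\, Y_k \in z_\alpha\}$ pairwise commutes on~$\mathcal{H}$, using tensor-factor disjointness across $j$'s and the scope-wise commutativity of each $g_j$ within fixed $j$. Applying Theorem~\ref{thm:sstfd}, a simultaneous diagonalization in a tensor basis of $\mathcal{H}$ makes every $\tilde B_{j,k}$ act at basis vector $\ell = (\ell_1,\ldots,\ell_m)$ as a $\pm 1$ scalar $d_{j,k,\ell_j}$ depending only on the $j$-th coordinate, so $\tilde B_k$ acts as $f(d_{1,k,\ell_1},\ldots,d_{m,k,\ell_m}) \in \{\pm 1\}$; each $\tilde B_k$ is thus self-adjoint and squares to $I$. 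For each fixed $j$ and $\ell_j$, the tuple of entries $(d_{j,k,\ell_j})_{Y_k \in z_\alpha}$ together with the appropriate constants lies in $R_\alpha$, since $g_j$ satisfies the constraint and Theorem~\ref{thm:sstfd} transports the operator equation $P_{R_\alpha} = -I$ to its pointwise Boolean version. Since $f$ is a closure operation of $R_\alpha$, applying $f$ componentwise across the $m$ tuples keeps the result in $R_\alpha$ at every basis vector, and the SST read in reverse gives $P_{R_\alpha}(\tilde B_{k_1},\ldots) = -I$ on $\mathcal{H}$. So $\psi(x/b, y/Y)$ is satisfiable via operator assignments over $\mathcal{H}$, and the pp$^*$-definition of $R$ forces $(b_1,\ldots,b_r) \in R$, whence Geiger's Theorem yields pp-definability of $R$.

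The subtlest step, and the one I expect to need most care, is the interplay between the specific Hilbert space witnessing the pp$^*$-definition of $R$ and the tensor-product Hilbert space $\mathcal{H}$ on which the lift lives: the conclusion $b \in R$ truly requires the witnessing Hilbert space to accommodate the tensor factors, for instance by choosing it of the form $\mathbb{C}^d$ with $d$ large enough or by arguing that pp$^*$-definability of $R$ over one finite-dimensional Hilbert space transfers to tensor powers thereof. Once this point is handled, the tensor-product lift together with SST and the Boolean closure property of~$f$ takes care of the invariance check cleanly, while idempotence handles the constants permitted in pp-formulas.
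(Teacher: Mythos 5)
Your proposal is correct and follows essentially the same route as the paper: it reduces to Geiger's Theorem and shows invariance under (idempotent) Boolean closure operations by lifting each such operation $f$ to operator assignments via its Fourier polynomial applied to Kronecker-product slot embeddings, verifying the closure property with the Strong Spectral Theorem --- this lift is precisely the paper's Theorem~\ref{thm:closure}, and your use of idempotence to handle constants matches the paper's equivalent passage to $A^+ = A \cup \{\{+1\},\{-1\}\}$. The Hilbert-space bookkeeping you flag at the end is resolved by reading pp$^*$-definability as requiring satisfiability over \emph{some} finite-dimensional Hilbert space, which is exactly how the paper's own proof proceeds when it concludes membership in $R$ from satisfiability over the tensor product $\mathcal{H}_1 \otimes \cdots \otimes \mathcal{H}_m$ of the witnessing spaces.
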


In order to prove this we need to develop the concept of closure
operation for sets of operator assignments. Let $r$ be a positive
integer. A \emph{relation of operator assignments} of arity~$r$ is a
set of fully commuting operator assignments for a fixed set of $r$
variables $X_1,\ldots,X_r$. Note that we do not require that all
operator assignments come from the same Hilbert space. The relation is
called \emph{Boolean} if all assignments in it come from a Hilbert
space of dimension~$1$; i.e., from~$\mathbb{C}$.  If $\mathcal{H}$ is
a Hilbert space and $R \subseteq \{\pm 1\}^r$ is a Boolean relation of
arity $r$, we write $R^{\mathcal{H}}$ for the set of fully commuting
operator assignments for $X_1,\ldots,X_r$ over $\mathcal{H}$ that
satisfy the polynomial equation $P_R(X_1,\ldots,X_r) = -I$, where
$P_R$ is the characteristic polynomial of~$R$. We write $R^*$ for the
union of $R^{\mathcal{H}}$ over all finite-dimensional Hilbert
spaces. If $A$ is a set of Boolean relations, define $A^* = \{ R^* : R
\in A \}$.

Let $\mathcal{H}_1,\ldots,\mathcal{H}_m$ and $\mathcal{H}$ be Hilbert
spaces, and let $f$ be a function that takes as inputs~$m$ many linear
operators, one on each $\mathcal{H}_i$, and produces as output a
linear operator on $\mathcal{H}$. We say that $f$ is an
\emph{operation} if the following conditions are satisfied.
\begin{enumerate} \itemsep=0pt
\item[1.] If $A_1,\ldots,A_m$ are $1$-variable operator assignments
  over $\mathcal{H}_1,\ldots,\mathcal{H}_m$, then $f(A_1,\ldots,A_m)$
  is a one-variable operator assignment over $\mathcal{H}$.
\item[2.] If $(A_{1,1},A_{1,2}),\ldots,(A_{m,1},A_{m,2})$ are
  commuting $2$-variable operator assignments over
  $\mathcal{H}_1,\ldots,\mathcal{H}_m$, then
  $(f(A_{1,1},\ldots,A_{m,1}),f(A_{1,2},\ldots,A_{m,2}))$ is a
  commuting two-variable operator assignment over $\mathcal{H}$.
\end{enumerate}
Let $R$ be a relation of operator assignments of arity $r$ and let $F$
be a collection of operations as above. We say that $R$ is
\emph{invariant} under $F$ if for each $f \in F$ the following
additional condition is also satisfied.
\begin{enumerate} \itemsep=0pt
\item[3.] If
  $(A_{1,1},\ldots,A_{1,r}),\ldots,(A_{m,1},\ldots,A_{m,r})$ are fully
  commuting $r$-variable operator assignments over
  $\mathcal{H}_1,\ldots,\mathcal{H}_m$, respectively, and
  $(A_{i,1},\ldots,A_{i,r})$ belongs to $R$ for every $i \in [m]$,
  then $(f(A_{1,1},\ldots,A_{m,1}),\ldots,f(A_{1,r},\ldots,A_{m,r}))$
  is a fully commuting $r$-variable operator assignment over
  $\mathcal{H}$ and belongs to $R$.
\end{enumerate}
If $A$ is a set of relations of operator assignments, we say that $A$
is \emph{invariant} under $F$ if every relation in $A$ is invariant
under $F$. We also say that $F$ is a \emph{closure operation} of~$A$.
A \emph{Boolean closure operation} of $A$ is one in which the
dimensions of all Hilbert spaces involved are $1$; i.e., they are
$\mathbb{C}$. Before we prove the main technical result of this
section, we work out a motivating example.

\subsection{Example: LIN}

In this section we study whether $R^*$ for $R = {\rm LIN}$ has some
closure operation.  In the 0-1-representation of Boolean values, the
function $(X_1,X_2,X_3) \mapsto X_1 \oplus X_2 \oplus X_3$ is a
Boolean closure operation of LIN. In the $\pm 1$-representation of
Boolean values, this is $(X_1,X_2,X_3) \mapsto X_1X_2X_3$. It is
tempting to think that the map $(X_1,X_2,X_3) \mapsto X_1X_2X_3$
applied to linear operators on a Hilbert space could already be a
closure operation for ${\rm LIN}^*$. However, the solution to the
Mermin-Peres magic square equations~\eqref{eqn:Merminequations} is a
counterexample: each row equation is a parity equation with even
right-hand side that is satisfied, but the composition of columns by
the operation $X_1X_2X_3$ gives an operator assignment that satisfies
a parity equation with odd right-hand side.

It turns out that the correct way of generalizing the Boolean
closure operation is not by taking ordinary products, but
Kronecker products. Let $F$ be the function that takes any three
linear operators $X_1,X_2,X_3$ over the same finite-dimensional
Hilbert space and is defined by
\begin{equation}
F(X_1,X_2,X_3) = X_1 \otimes X_2 \otimes X_3.
\end{equation}
Now let $(A_1,\ldots,A_r)$, $(B_1,\ldots,B_r)$ and $(C_1,\ldots,C_r)$
be three fully commuting $r$-variable operator assignments over a
finite-dimensional Hilbert space, say $\mathbb{C}^d$. We think of all
operators as matrices. Take $D_i = F(A_i,B_i,C_i)$ for $i =
1,\ldots,r$. These are Hermitian matrices since the operations of
conjugate transposition and Kronecker product commute. Also
\begin{equation}
D_i D_j
= (A_iA_j) \otimes (B_iB_j) \otimes (C_iC_j)
= (A_jA_i) \otimes (B_jB_i) \otimes (C_jC_i)
= D_j D_i \label{eqn:commute}
\end{equation}
so $D_1,D_2,D_3$ pairwise commute. Equation~\eqref{eqn:commute} also
gives $D_i^2 = (A_i^2) \otimes (B_i^2) \otimes (C_i^2) = I \otimes I
\otimes I = I$, so $(D_1,\ldots,D_r)$ is a fully commuting
$r$-variable operator assignment. Next we consider a relation in LIN,
say $R = \{ (a_1,\ldots,a_r) \in \{ \pm 1 \}^r : a_1 \cdots a_r = b
\}$, with $b \in \{ \pm 1 \}$. Note that its characteristic polynomial
is $P_R(X_1,\ldots,X_r) = -b \cdot X_1 \cdots X_r$.  We show that if
$P_R(A_1,\ldots,A_r) = P_R(B_1,\ldots,B_r) = P_R(C_1,\ldots,C_r) =
-I$, then also $P_R(D_1,\ldots,D_r) = -I$. We have
\begin{equation}
\prod_{i=1}^r D_i = 
\left({\prod_{i=1}^r A_i}\right) \otimes 
\left({\prod_{i=1}^r B_i}\right) \otimes 
\left({\prod_{i=1}^r C_i}\right) =
(bI) \otimes (bI) \otimes (bI) = b^3 I = b I.
\end{equation}
Hence $P_R(D_1,\ldots,D_r) = -b^2 I = -I$.  This shows that $F$ is a
closure operation of ${\rm LIN}^*$.

One consequence of the existence of $F$ as a closure operation of
${\rm LIN}^*$ is that the binary OR relation $\mathrm{OR}_2 = \{ \pm 1
\}^2 \setminus \{ (+1,+1) \}$ is not pp$^*$-definable from LIN.

\begin{theorem} \label{thm:ornotdefinable}
$\mathrm{OR}_2$ is not pp$^*$-definable from $\mathrm{LIN}$.
\end{theorem}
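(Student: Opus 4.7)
The plan is to derive a contradiction from the assumption that $\mathrm{OR}_2$ has a pp$^*$-definition from LIN, using the Kronecker product operation $F$ that was just shown to be a closure operation of ${\rm LIN}^*$. The Boolean observation that drives the argument is that the three tuples $(+1,-1)$, $(-1,+1)$, $(-1,-1)$ all lie in $\mathrm{OR}_2$, while their componentwise product equals $(+1,+1)$, the unique tuple not in $\mathrm{OR}_2$. This is the classical witness that $\mathrm{OR}_2$ is not invariant under $(x_1,x_2,x_3)\mapsto x_1x_2x_3$ and hence is not affine; the goal is to upgrade this obstruction to operator assignments.

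Assume for contradiction a pp$^*$-definition
$$\mathrm{OR}_2(x_1,x_2) = \exists y_1 \cdots \exists y_s\,(\psi(x_1,x_2,y_1,\ldots,y_s))$$
over some finite-dimensional Hilbert space $\mathcal{H}$, where $\psi$ is a conjunction of LIN-atoms. The three membership facts above yield triples of operator assignments $(Y_i^{(A)})_{i\in[s]}$, $(Y_i^{(B)})_{i\in[s]}$, $(Y_i^{(C)})_{i\in[s]}$ on $\mathcal{H}$ that satisfy $\psi$ when $(x_1,x_2)$ is substituted by $(+1,-1)$, $(-1,+1)$, $(-1,-1)$, respectively. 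The key construction combines these three witnesses through $F$: working on $\mathcal{H}\otimes\mathcal{H}\otimes\mathcal{H}$, set
$$x_1 \mapsto (+1)(-1)(-1)I = +I, \qquad x_2 \mapsto (-1)(+1)(-1)I = +I, \qquad y_i \mapsto Y_i^{(A)}\otimes Y_i^{(B)}\otimes Y_i^{(C)}.$$
The Boolean constants $\pm 1$ appearing in $\psi$ are interpreted as $\pm I$ on the ambient space, and are correctly preserved since $(-I)^{\otimes 3} = -I$ and $I^{\otimes 3} = I$.

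It then remains to argue that this combined assignment really satisfies $\psi$ with $(x_1,x_2)$ substituted by the forbidden tuple $(+1,+1)$. Each $Y_i^{(A)}\otimes Y_i^{(B)}\otimes Y_i^{(C)}$ is self-adjoint and squares to $I$, and Kronecker products of operators that commute on every tensor factor commute on the product, so the commutativity requirements for operator assignments carry over. To verify the LIN-atoms, I would first reduce to the case of single-parity atoms: for any affine relation $R = R_1 \cap \cdots \cap R_k$ written as an intersection of single-parity relations, a double application of Lemma~\ref{lem:entail} to the polynomials $P_R + 1$ and $P_{R_j} + 1$ gives $R^* = R_1^* \cap \cdots \cap R_k^*$. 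The computation immediately preceding the theorem then shows that $F$ preserves every single-parity relation of ${\rm LIN}^*$, so every atom of $\psi$ is still satisfied by the combined assignment. By the pp$^*$-definition, this forces $(+1,+1)\in\mathrm{OR}_2$, contradicting our choice of tuples.

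The main obstacle I anticipate is the coordinated bookkeeping between Boolean constants, existentially quantified operator variables, and the commutativity conditions inside each atom. Once the reduction to single-parity atoms is in place and the tensor-compatibility identity $(X_1\otimes X_2\otimes X_3)(Y_1\otimes Y_2\otimes Y_3) = (X_1Y_1)\otimes(X_2Y_2)\otimes(X_3Y_3)$ is used as in the LIN example, all remaining verifications are routine.
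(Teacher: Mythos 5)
Your proof is correct, but it takes a different route from the paper's. The paper does not prove Theorem~\ref{thm:ornotdefinable} directly at all: it observes that the statement follows from the general conservativity result (Theorem~\ref{thm:collapse}, itself resting on Theorem~\ref{thm:closure}) together with the classical fact, via Geiger's Theorem~\ref{thm:geigeretal}, that $\mathrm{OR}_2$ is not invariant under the ternary product and hence not pp-definable from $\mathrm{LIN}$. You instead give a self-contained argument that inlines and specializes the mechanism of Theorem~\ref{thm:collapse} to this one case: you take the three operator witnesses supplied by the assumed pp$^*$-definition for $(+1,-1)$, $(-1,+1)$, $(-1,-1)$, combine them on $\mathcal{H}\otimes\mathcal{H}\otimes\mathcal{H}$ with the concrete closure operation $F(X_1,X_2,X_3)=X_1\otimes X_2\otimes X_3$ of $\mathrm{LIN}^*$ already verified in the example preceding the theorem, and conclude that the forbidden tuple $(+1,+1)$ would be defined. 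The two nontrivial points — that commutativity, self-adjointness and $X^2=I$ pass through Kronecker products, and that a general affine atom can be split into single-parity atoms whose operator solution sets intersect correctly — are both handled: the first by the identity $(X_1\otimes X_2\otimes X_3)(Y_1\otimes Y_2\otimes Y_3)=(X_1Y_1)\otimes(X_2Y_2)\otimes(X_3Y_3)$, the second by the two-sided application of Lemma~\ref{lem:entail}, which indeed yields $R^{*}=R_1^{*}\cap\cdots\cap R_k^{*}$ for fully commuting assignments on the atom's scope. What the paper's route buys is brevity given machinery it needs anyway (and a statement valid for every constraint language, not just $\mathrm{LIN}$); what your route buys is independence from the Fourier-analytic construction of Theorem~\ref{thm:closure} and from Geiger's theorem, at the cost of being tailored to $\mathrm{OR}_2$ and $\mathrm{LIN}$. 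One shared caveat, inherited from the paper's own proof of Theorem~\ref{thm:collapse}: the combined assignment lives on $\mathcal{H}^{\otimes 3}$ rather than on the Hilbert space $\mathcal{H}$ fixed by the pp$^*$-definition, so the final contradiction requires reading the definition as equating membership with satisfiability over \emph{some} finite-dimensional Hilbert space; since the paper's argument makes exactly the same move, this is not a defect of your proof.
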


Note that this follows from the more general statement in
Theorem~\ref{thm:collapse} since it is known that the Boolean relation
OR$_2$ is not pp-definable from LIN. Indeed, OR$_2$ is not closed
under the (idempotent) Boolean closure operation $(X_1,X_2,X_3)
\mapsto X_1X_2X_3$ of LIN, since $(-1,-1)$, $(+1,-1)$ and $(-1,+1)$
are all three in the relation OR$_2$, but $(+1,+1)$ is not in
OR$_2$. The undefinability of OR$_2$ from LIN by a pp-formula (with or
without constants) follows from the easy direction in Geiger's
Theorem~\ref{thm:geigeretal}. Since we prove
Theorem~\ref{thm:collapse} below, we omit a proof of
Theorem~\ref{thm:ornotdefinable} at this point.

\subsection{Generalization}

We show that every Boolean closure operation gives a closure operation
for relations of operator assignments over finite-dimensional Hilbert
spaces. In the following, if $X_i$ is a linear operator on a Hilbert
space, $X_i^0$ and $X_i^1$ are to be interpreted as the identity
operator and $X_i$ itself, respectively.  If $S$ is a set, we write
$S(i)$ for the 0-1-indicator of the fact that $i$ is in $S$; i.e.\
$S(i) = 1$ if $i$ is in $S$, and $S(i) = 0$ if $i$ is not in $S$.

\begin{theorem} \label{thm:closure}
  Let $A$ be a Boolean constraint language and let $f : \{\pm 1\}^m
  \rightarrow \{\pm 1\}$ be a Boolean closure operation of $A$. Then
  the function on linear operators on finite-dimensional Hilbert
  spaces defined by
\begin{equation}
F(X_1,\ldots,X_m) = \sum_{S \subseteq [m]} \hat{f}(S) \bigotimes_{i \in [m]}
  X_i^{S(i)}
\end{equation}
is a closure operation of $A^*$.  Moreover, $F(a_1 I,\ldots,a_m I) =
f(a_1,\ldots,a_m) I$ holds for every $(a_1,\ldots,a_m) \in \{ \pm 1
\}^m$.
\end{theorem}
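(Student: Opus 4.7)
The plan is to verify the ``moreover'' claim first and then establish the three defining conditions of a closure operation via a uniform reduction to the diagonal case.

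The moreover claim is immediate from the Fourier representation: substituting $X_i = a_i I$ and using $(a_i I)^{S(i)} \in \{I, a_i I\}$, one gets $\bigotimes_{i \in [m]} (a_i I)^{S(i)} = \big(\prod_{i \in S} a_i\big) I_{\mathcal{H}_1 \otimes \cdots \otimes \mathcal{H}_m}$, so
\[
F(a_1 I, \ldots, a_m I) = \Big(\sum_{S \subseteq [m]} \hat f(S) \prod_{i \in S} a_i\Big) I = f(a_1,\ldots,a_m) I
\]
by equation~\eqref{eqn:fourier}. For the three conditions, my plan is to reduce them all to the single case where every input is diagonal with $\pm 1$ entries. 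Condition~3 is the most general; the cases $r = 1$ and $r = 2$ (conditions~1 and~2) are minor specializations of the same argument.

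For condition~3, fix a relation $R \in A$ of arity $r$ and fully commuting $r$-variable operator assignments $(A_{i,1}, \ldots, A_{i,r}) \in R^*$ on $\mathcal{H}_i$ for $i \in [m]$. For each $i$, the Strong Spectral Theorem (Theorem~\ref{thm:sstfd}) provides a unitary $U_i$ and diagonal matrices $E_{i,1}, \ldots, E_{i,r}$ with $A_{i,j} = U_i^{-1} E_{i,j} U_i$. From $A_{i,j}^2 = I$ the diagonal entries $e_{i,j}(k)$ lie in $\{\pm 1\}$, and from $P_R(A_{i,1}, \ldots, A_{i,r}) = -I$ together with Lemma~\ref{lem:simultaneouslysimilar} we get $P_R(E_{i,1}, \ldots, E_{i,r}) = -I$; evaluating on the diagonal shows the tuple $(e_{i,1}(k), \ldots, e_{i,r}(k))$ lies in $R$ for every $k$. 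Setting $U = U_1 \otimes \cdots \otimes U_m$, which is unitary on $\mathcal{H}_1 \otimes \cdots \otimes \mathcal{H}_m$, the multiplicativity $(\bigotimes_i X_i)(\bigotimes_i Y_i) = \bigotimes_i X_i Y_i$ and linearity give the compatibility identity
\[
F(A_{1,j}, \ldots, A_{m,j}) \;=\; U^{-1}\, F(E_{1,j}, \ldots, E_{m,j})\, U
\]
for every $j \in [r]$, so the tuples $(D_1, \ldots, D_r)$ and $(\tilde D_1, \ldots, \tilde D_r) := (F(E_{1,j}, \ldots, E_{m,j}))_{j=1}^r$ are simultaneously similar via $U$. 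Each $\tilde D_j$ is diagonal, and its entry at index $(k_1, \ldots, k_m)$ is exactly $f(e_{1,j}(k_1), \ldots, e_{m,j}(k_m))$, which lies in $\{\pm 1\}$; hence the $\tilde D_j$ are self-adjoint, square to $I$, and pairwise commute. Applying invariance of $R$ under the Boolean operation $f$ to the $m$ tuples $(e_{i,1}(k_i), \ldots, e_{i,r}(k_i)) \in R$ shows that the diagonal tuple of $(\tilde D_1, \ldots, \tilde D_r)$ at $(k_1, \ldots, k_m)$ lies in $R$, yielding $P_R(\tilde D_1, \ldots, \tilde D_r) = -I$. A final appeal to Lemma~\ref{lem:simultaneouslysimilar} transfers all these properties back to $(D_1, \ldots, D_r)$, which proves condition~3. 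Conditions~1 and~2 follow from the same reduction, using the spectral theorem separately on each $\mathcal{H}_i$ and noting that diagonal matrices on different factors automatically commute after tensoring.

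The step that truly requires care, and is the crux of the proof, is the compatibility identity $F(U_1^{-1} E_1 U_1, \ldots, U_m^{-1} E_m U_m) = U^{-1} F(E_1, \ldots, E_m) U$ with $U = \bigotimes_i U_i$. This works precisely because the arguments of $F$ live on \emph{different} tensor factors, so each conjugation $U_i^{-1}(\cdot) U_i$ happens independently before the tensor is assembled. This is exactly the feature that fails for the naive ordinary-product analogue $\sum_S \hat f(S) \prod_{i \in S} X_i$ flagged by the Mermin--Peres discussion preceding the theorem, and it is the structural reason why Kronecker products, rather than ordinary products, give the correct operator-level generalization of a Boolean closure operation.
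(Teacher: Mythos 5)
Your proof is correct, and its core mechanism is the same as the paper's: simultaneous diagonalization via the Strong Spectral Theorem, the observation that the diagonal entries of $F(E_{1,j},\ldots,E_{m,j})$ at a multi-index $(k_1,\ldots,k_m)$ are exactly $f$ evaluated on the corresponding diagonal entries, and the invariance of $R$ under $f$ applied componentwise. The organization differs in two ways that are worth noting. First, for conditions~1 and~2 the paper argues directly without diagonalizing: self-adjointness comes from distributing the adjoint over the tensor product, $F(X_1,\ldots,X_m)^2 = I$ comes from the Convolution Formula together with the fact that $f^2 \equiv 1$ forces $\widehat{f^2}(U) = [U = \emptyset]$, and commutation comes from tensor multiplicativity; you instead fold these into the same diagonal reduction, which is shorter but leans on the spectral theorem where the paper's argument is purely algebraic. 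Second, for condition~3 you isolate the equivariance identity $F(U_1^{-1}E_1U_1,\ldots,U_m^{-1}E_mU_m) = U^{-1}F(E_1,\ldots,E_m)U$ with $U = \bigotimes_i U_i$ once and for all, and then transfer $P_R$ across the similarity in a single appeal to Lemma~\ref{lem:simultaneouslysimilar}; the paper instead expands $P_R\bigl(F(A_{1,1},\ldots,A_{m,1}),\ldots\bigr)$ monomial by monomial in $\hat{R}(T)$ and pushes the conjugation through each product $A_T$ separately, arriving at the same diagonal computation. Your route is cleaner and makes the structural point (conjugations act independently on distinct tensor factors, which is exactly what fails for ordinary products) more transparent; the paper's route for conditions~1 and~2 has the minor advantage of exhibiting $F^2 = I$ as a formal consequence of the Fourier-analytic identity rather than of diagonalizability. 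One small point you should make explicit: self-adjointness of the $D_j$ transfers back from the $\tilde{D}_j$ only because $U$ is unitary, not merely invertible; you state that $U$ is unitary, so this is implicit, but the final sentence invoking Lemma~\ref{lem:simultaneouslysimilar} should flag that similarity alone does not preserve self-adjointness.
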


\begin{proof}
  First we show that $F$ is an operation; i.e., it satisfies
  conditions 1 and 2 in the definiton of operation. Let
  $X_1,\ldots,X_m$ be $1$-variable operator assignments over
  $\mathcal{H}_1,\ldots,\mathcal{H}_m$. In particular,
  $X_1,\ldots,X_m$ are all self-adjoint linear operators.  Thus, for
  $S \subseteq [m]$ we have
  \begin{equation}
  \Biggl({\bigotimes_{i
      \in [m]} (X_i)^{S(i)}}\Biggr)^* = \bigotimes_{i \in
    [m]} (X_i^*)^{S(i)} = \bigotimes_{i \in [m]}
  (X_i)^{S(i)}.
  \end{equation}
  From this it follows that $F(X_1,\ldots,X_m)$ is self-adjoint since
  each $\hat{f}(S)$ is a real number.  Next we want to show that
  $F(X_1,\ldots,X_m)^2 = I$. First note that for $S,T \subseteq [m]$,
  their symmetric difference $U = S \Delta T$ and their intersection
  $V = S \cap T$, we have
  \begin{equation}
  \Biggl({\bigotimes_{i \in [m]}
  X_i^{S(i)}}\Biggr)
  \Biggl({\bigotimes_{i \in [m]}
  X_i^{T(i)}}\Biggr) =
  \Biggl({\bigotimes_{i \in [m]}
  X_i^{U(i)}}\Biggr)
  \Biggl({\bigotimes_{i \in [m]}
  (X_i^2)^{V(i)}}\Biggr) =
  \Biggl({\bigotimes_{i \in [m]}
  X_i^{U(i)}}\Biggr),
  \end{equation}
  where the last equality follows from the fact that $X_i^2 = I$ for
  all $i \in [m]$.
  Now we can expand $F(X_1,\dots,X_m)^2$ as follows
  \begin{align}
  F(X_1,\ldots,X_m)^2 & = \sum_{S \subseteq [m]} \sum_{T \subseteq [m]} \hat{f}(S) \hat{f}(T)
  \Biggl({\bigotimes_{i \in [m]}
  X_i^{S(i)}}\Biggr)
  \Biggl({\bigotimes_{i \in [m]}
  X_i^{T(i)}}\Biggr) \\
  & =
  \sum_{S \subseteq [m]} \sum_{U \subseteq [m]} 
  \hat{f}(S) \hat{f}(S \Delta U)
  \Biggl({\bigotimes_{i \in [m]}
  X_i^{U(i)}}\Biggr) \\
  & = 
  \sum_{U \subseteq [m]} \sum_{S \subseteq [m]} 
  \hat{f}(S) \hat{f}(S \Delta U)
  \Biggl({\bigotimes_{i \in [m]}
  X_i^{U(i)}}\Biggr) \\
  & = 
  \sum_{U \subseteq [m]} \Biggl({\bigotimes_{i \in [m]}
  X_i^{U(i)}}\Biggr) \sum_{S \subseteq [m]} 
  \hat{f}(S) \hat{f}(S \Delta U). \label{eqn:lastinseq}
  \end{align}
  By the Convolution Formula~\eqref{eqn:convolution} we have
  \begin{equation}
  \sum_{S \subseteq [m]} \hat{f}(S) \hat{f}(S \Delta U) =
  \widehat{f^2}(U).
  \end{equation}
  Since the range of $f$ is $\{ \pm 1 \}$, the function
  $f^2$ is identically $1$, from which it follows that
  \begin{equation}
  \widehat{f^2}(U) = \left\{ { \begin{array}{lll} 1 & \;\; & \text{ if } U = \emptyset \\ 0 & & \text{ if } U \not= \emptyset\end{array} } \right.
  \end{equation}
  by the uniqueness of the Fourier transform.
  Back into~\eqref{eqn:lastinseq}, this gives
  \begin{equation}
  F(X_1,\ldots,X_m)^2 = \Biggl({\bigotimes_{i \in [m]}
  X_i^{\emptyset(i)}}\Biggr) = I
  \end{equation}
  as was to be proved.  Finally, if $S \subseteq [m]$ and
  $(X_1,Y_1),\ldots,(X_m,Y_m)$ are
  such that $X_i$ and $Y_i$ commute for every $i \in [m]$, then
  \begin{align}
  \Biggl({\bigotimes_{i \in [m]} X_i^{S(i)}}\Biggr)
  \Biggl({\bigotimes_{i \in [m]} Y_i^{S(i)}}\Biggr)
  & = \bigotimes_{i \in [m]} (X_iY_i)^{S(i)} = \\
  & = \bigotimes_{i \in [m]} (Y_iX_i)^{S(i)}
  =
  \Biggl({\bigotimes_{i \in [m]} Y_i^{S(i)}}\Biggr)
  \Biggl({\bigotimes_{i \in [m]} X_i^{S(i)}}\Biggr).
  \end{align}
  It follows that $F(X_1,\ldots,X_m)$ and $F(Y_1,\ldots,Y_m)$ commute.
  This completes the proof that $F$ is an operation.

  Next we show that for every relation $R$ in $A$, the operator
  assignment relation $R^*$ is invariant under $F$. Let $r$ be the
  arity of $R$ and let $P_R(X_1,\ldots,X_r)$ be the characteristic
  polynomial of $R$. Let
  $(A_{1,1},\ldots,A_{1,r}),\ldots,(A_{m,1},\ldots,A_{m,r})$ be
  $r$-variable operator assignments over finite-dimensional Hilbert
  spaces $\mathcal{H}_1,\ldots,\mathcal{H}_m$.  We may assume that
  $\mathcal{H}_i = \mathbb{C}^{d_i}$ where $d_i$ is the dimension of
  $\mathcal{H}_i$. From now on we switch to the language of matrices.

  Assume that all the assignments
  $(A_{1,1},\ldots,A_{1,r}),\ldots,(A_{m,1},\ldots,A_{m,r})$ are in
  $R^*$. In particular, each sequence $A_{i,1},\ldots,A_{i,r}$ is a
  fully commuting assignment of Hermitian matrices and
  $P_R(A_{i,1},\ldots,A_{i,r}) = -I$.  The Strong Spectral Theorem
  (i.e.~Theorem~\ref{thm:sstfd}) applies, so $A_{i,1},\ldots,A_{i,r}$
  simultaneously diagonalize. Let $U_i$ be a unitary matrix of
  $\mathcal{H}_i$ that achieves that, and let $D_{i,j} = U A_{i,j}
  U^*$ for $j \in [r]$ be the resulting diagonal matrices.  From
  $A_{i,j}^2 = I$ and $U^* U = U U^* = I$ we conclude that $D_{i,j}^2
  = I$ and hence each entry in the diagonal of $D_{i,j}$ is $+1$ or
  $-1$. For $c \in [d_i]$, let $D_{i,j}(c)$ denote the entry in
  position $c$ of the diagonal of $D_{i,j}$. The hypotheses of
  Lemma~\ref{lem:simultaneouslysimilar} apply to the pairs
  $(A_{i,1},D_{i,1}),\ldots,(A_{i,r},D_{i,r})$, so
  $P_R(A_{i,1},\ldots,A_{i,r})$ and $P_R(D_{i,j},\ldots,D_{i,r})$ are
  similar matrices. As $P_R(A_{i,1},\ldots,A_{i,r}) = -I$, and the
  only matrix that is similar to $-I$ is $-I$ itself, we get
  $P_R(D_{i,1},\ldots,D_{i,r}) = -I$.  In particular
  \begin{equation}
  P_R(D_{i,1}(c),\ldots,D_{i,r}(c)) = -1 \label{eqn:tuplex}
  \end{equation}
  for every $c \in [d_i]$. This will be of use later.

  Our next goal is to show that
  $P_R(F(A_{1,1},\ldots,A_{m,1}),\ldots,F(A_{1,r},\ldots,A_{m,r})) = -I$
  and we do so by showing that
  \begin{equation}
  \sum_{T \subseteq [r]} \hat{R}(T) \prod_{j \in T}
  F(A_{1,j},\ldots,A_{m,j}) = -I. \label{eqn:goalhere}
  \end{equation}
  For fixed $T \subseteq [r]$, let $A_T = \prod_{j \in T}
  F(A_{1,j},\ldots,A_{m,j})$ be the matrix product appearing in the
  left-hand side of~\eqref{eqn:goalhere}.  Let $t = |T|$. By first
  expanding on the definition of $F$ and then distributing the product
  over the sum we get
  \begin{equation}
  A_T = \prod_{j \in T} \sum_{S \subseteq [m]} \hat{f}(S) \bigotimes_{i
    \in [m]} (A_{i,j})^{S(i)}
  =
  \sum_{S : T \rightarrow 2^{[m]}} \prod_{j \in T} \hat{f}(S(t))
  \prod_{j \in T} \bigotimes_{i \in [m]} (A_{i,j})^{S(t)(i)}.
  \label{eqn:sllsl}
  \end{equation}
  For fixed $T \subseteq [r]$ and $S : T \rightarrow 2^{[m]}$, let
  $B_{T,S} = \prod_j \bigotimes_i (A_{i,j})^{S(t)(i)}$ be the
  matrix product appearing in the right-hand side
  of~\eqref{eqn:sllsl}.  By distributing $\prod$ over $\bigotimes$ and
  applying $A_{i,j} = U_i^* D_{i,j} U_i$ in~\eqref{eqn:sllsl} we get
  \begin{align}
    B_{T,S} & = \bigotimes_{i \in [m]} \Biggl({\prod_{j \in T} (U_i^*
      D_{i,j} U_i)}\Biggr)^{S(t)(i)} = \bigotimes_{i \in [m]}
    \Biggl({U_i^* \Biggl({\prod_{j \in T}
        D_{i,j}}\Biggr)^{S(t)(i)} U_i}\Biggr).
  \label{eqn:sdfss}
  \end{align}
  Hence
  \begin{equation}
  A_T = U^* \Biggl({\sum_{S : T \rightarrow 2^{[m]}} \prod_{j \in T}
    \hat{f}(S(j)) \bigotimes_{i \in [m]} \Biggl({ \prod_{j \in T}
      D_{i,j} }\Biggr)^{S(t)(i)}}\Biggr) U, \label{eqn:last}
  \end{equation}
  for $U = \bigotimes_{i \in [m]} U_i$.  Let $M$ denote the matrix
  sitting within $U^*$ and $U$ in line~\eqref{eqn:last}.  As each
  $D_{i,j}$ is a $d_i \times d_i$ diagonal matrix, $M$ is a $d \times
  d$ diagonal matrix with $d = \prod_{i \in [m]} d_i$.  We think of
  the entries in the diagonal of $M$ as indexed by tuples $c =
  (c_1,\ldots,c_m)$ from $[d_1] \times \cdots \times [d_m]$.  Let
  $M(c)$ denote the entry in position $c$ of the diagonal of $M$. Then
\begin{equation}
  M(c) = \sum_{S : T \rightarrow 2^{[m]}} \prod_{j \in T}
  \hat{f}(S(t)) \prod_{i \in [m]} \prod_{j \in T} \left({D_{i,j}(c_i)}\right)^{S(t)(i)}. \label{eqn:wc}
\end{equation}
Factoring back the product over $j \in T$, the right-hand side
in~\eqref{eqn:wc} reads
\begin{equation}
\prod_{j \in T} \sum_{S \subseteq [m]} \hat{f}(S)
\prod_{i \in [m]} (D_{i,j}(c_i))^{S(i)}
=
\prod_{j \in T} f(D_{1,j}(c_1),\ldots,D_{m,j}(c_m)). \label{eqn:wc2}
\end{equation}
For fixed $j \in [r]$ and $c \in [d_1] \times \cdots \times
[d_m]$, let $X_{j,c} = f(D_{1,j}(c_1),\ldots,D_{m,j}(c_m))$
so that equations~\eqref{eqn:wc} and~\eqref{eqn:wc2} give
$M(c) = \prod_{j \in T} X_{j,c}$.
From~\eqref{eqn:tuplex} and the fact that $f$ is a Boolean
closure operator of $R$, the tuple $(X_{1,c},\ldots,X_{r,c})$
belongs to the relation $R$. Thus
\begin{equation}
\sum_{T \subseteq [r]} \hat{R}(T) M(c) =
\sum_{T \subseteq [r]} \hat{R}(T) \prod_{j \in T} X_{j,c}
= P_R(X_{1,c},\ldots,X_{r,c}) = -1.
\end{equation}
Since this holds for every diagonal entry of $M$,
we get $\sum_{T \subseteq [r]} \hat{R}(T) M = -I$.
Putting it all together, the left-hand side of our
goal~\eqref{eqn:goalhere} evaluates to
\begin{equation}
\sum_{T \subseteq [r]} \hat{R}(T) U^* M U
= U^* \Biggl({ \sum_{T \subseteq [r]} \hat{R}(T) M }\Biggr) U
= U^* (-I) U = -I.
\end{equation}
This gives~\eqref{eqn:goalhere} as desired.

In order to prove the `moreover' clause of the theorem, observe that if
$S \subseteq [m]$ and $(a_1,\ldots,a_m) \in \{ \pm 1 \}^m$, then
\begin{equation}
\bigotimes_{i \in [m]} (a_iI)^{S(i)} = \Biggl({\prod_{i \in S}
  a_i}\Biggr) I,
\end{equation}
where in the left hand side the identity matrices have dimensions
$d_1,\ldots,d_m$, respectively, and in the right-hand side the
identity matrix has dimension $d \times d$ for $d = \prod_{i \in [m]}
d_i$.  It follows that
\begin{equation}
F(a_1 I,\ldots,a_m I) = \Biggl({\sum_{S \subseteq [m]} \hat{f}(S)
\prod_{i \in S} a_i}\Biggr) I = f(a_1,\ldots,a_m) I.
\end{equation}
This completes the proof of the theorem.
\end{proof}

\subsection{Finale}

%

\noindent Before we prove Theorem~\ref{thm:collapse}, we need the
following straightforward fact about the role of constants in
pp-definitions.

\begin{lemma} \label{lem:constants}
  Let $A$ be a Boolean constraint language, let $R$ be Boolean a
  relation, and let $A^+ = A \cup \{\{+1\},\{-1\}\}$. The following
  two statements hold.
\begin{enumerate} \itemsep=0pt
\item $R$ is pp-definable from $A$ if and only if it is pp-definable
without constants from $A^+$.
\item $R$ is pp$^*$-definable from $A$ if and only if it is
  pp$^*$-definable without constants from $A^+$.
\end{enumerate}
\end{lemma}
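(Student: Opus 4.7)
The plan is to prove both statements by a routine syntactic manipulation that converts back and forth between a pp-formula over $A$ with constants and a pp-formula over $A^+$ without constants, and then to verify that the conversion preserves satisfiability in each of the two relevant semantics.

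For the \emph{forward direction} of both statements, suppose $R$ is pp-definable from $A$ by a pp-formula $\phi(x_1,\ldots,x_r) = \exists y_1 \cdots \exists y_s \, \psi(x_1,\ldots,x_r,y_1,\ldots,y_s)$ whose atoms may have the constants $+1$ and $-1$ among their arguments. First I would introduce two fresh existentially quantified variables $y_+$ and $y_-$, add the two atoms $\{+1\}(y_+)$ and $\{-1\}(y_-)$ to the conjunction $\psi$, and replace every occurrence of the constant $+1$ (resp., $-1$) in any atom of $\psi$ by $y_+$ (resp., $y_-$). The resulting pp-formula $\phi'$ is over $A^+$ and uses no constants. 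For the \emph{backward direction}, suppose $R$ is pp-definable without constants from $A^+$ by a formula $\phi'$. Going through each atom of $\phi'$ of the form $\{+1\}(y)$, I would eliminate such an atom by substituting the constant $+1$ for the variable $y$ in every atom of $\phi'$ where $y$ occurs, and dropping the existential quantifier $\exists y$; proceed analogously for atoms of the form $\{-1\}(y)$. The result is a pp-formula $\phi$ over $A$ possibly using constants.

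To finish statement~1, I would verify that both conversions preserve classical satisfaction, and hence preserve the relation being defined. This is immediate because the Boolean singleton relation $\{+1\}$ (resp., $\{-1\}$) contains only the value $+1$ (resp., $-1$), so any Boolean assignment making the conjuncts $\{+1\}(y_+)$ and $\{-1\}(y_-)$ true must map $y_+$ to $+1$ and $y_-$ to $-1$, which is precisely the role the constants play in the original formula.

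To finish statement~2, the same two conversions must be shown to preserve satisfiability via operator assignments on some finite-dimensional Hilbert space. The key observation is that the characteristic polynomials of the singleton relations are $P_{\{+1\}}(X) = -X$ and $P_{\{-1\}}(X) = X$, so in an instance an $\{+1\}$-constraint $((Y),\{+1\})$ forces the equation $-Y = -I$, i.e.\ $Y = I$, and an $\{-1\}$-constraint forces $Y = -I$. In the operator setting the constants $+1$ and $-1$ are by definition mapped to $I$ and $-I$ respectively. Hence, in the forward conversion, extending any operator assignment by $y_+ \mapsto I$ and $y_- \mapsto -I$ (which trivially commute with everything and square to $I$) yields a satisfying operator assignment for the new instance if and only if the original one was satisfying; in the backward conversion, any satisfying operator assignment must map $y$ to $I$ whenever the atom $\{+1\}(y)$ is present, and to $-I$ when $\{-1\}(y)$ is present, so substituting the constants for these variables preserves satisfiability via operators. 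There is no real obstacle here beyond keeping the bookkeeping straight; the only subtlety is making sure the introduced singleton constraints impose the intended equations on operator assignments, which follows directly from the computation of the characteristic polynomials above.
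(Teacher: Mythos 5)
Your proposal is correct and follows essentially the same route as the paper: replace each constant by a fresh existentially quantified variable pinned down by the corresponding singleton relation (and conversely substitute constants back in), then observe that the singleton constraints force the operators $I$ and $-I$, which commute with everything. Your explicit computation of the characteristic polynomials $P_{\{+1\}}(X) = -X$ and $P_{\{-1\}}(X) = X$ just spells out a detail the paper leaves implicit.
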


\begin{proof}
  In both cases, for the `only if' part it suffices to replace each
  occurrence of a constant in the quantifier-free part of the
  pp-formula by a new existentially quantified variable $Z$, and force
  it to belong to the corresponding new unary relation in $A^+$ by an
  additional conjunct: if $Z$ replaces the constant $-1$, we force $Z$
  to belong $\{-1\}$ by a new conjunct, and it $Z$ replaces the
  constant $+1$, we force it to belong $\{+1\}$ by a new conjunct. In
  both cases too, the `if' part follows from the reverse construction:
  replace each occurrence of a variable that appears within the scope
  of one of the new unary relations in $A^+$ by the corresponding
  constant, and remove the conjuncts that involve the new unary
  relations. That these transformations are correct follows directly
  from the definitions and the fact that both $I$ and $-I$ commute
  with any operator.
\end{proof}

We are ready to prove Theorem~\ref{thm:collapse}. 

\begin{proof}[Proof of Theorem~\ref{thm:collapse}]
  Assume $R$ is pp$^*$-definable from $A$. By
  Lemma~\ref{lem:constants}, the relation $R$ is also pp$^*$-definable
  without constants from $A^+ = A \cup \{\{+1\},\{-1\}\}$.  Let $r$ be
  the arity of~$R$ and let $\phi(x_1,\ldots,x_r)$ be the pp-formula
  without constants that pp$^*$-defines $R$ from~$A^+$. By Geiger's
  Theorem~\ref{thm:geigeretal} and Lemma~\ref{lem:constants} it
  suffices to show that $R$ is invariant under all Boolean closure
  operations of $A^+$.  

  Let $f : \{ \pm 1 \}^m \rightarrow \{ \pm 1 \}$ be a Boolean closure
  operation of $A^+$. By Theorem~\ref{thm:closure}, the function $F$
  is a closure operation of ${A^+}^*$. Let
  $(a_{1,1},\ldots,a_{1,r}),\ldots,(a_{m,1},\ldots,a_{m,r})$ be tuples
  in $R$ and let $a_j = f(a_{1,j},\ldots,a_{m,j})$ for every $j \in
  [m]$. We need to show that $(a_1,\ldots,a_r)$ is also in $R$.  Let
  $\psi(x_1,\ldots,x_r,y_1,\ldots,y_s)$ be the quantifier-free part of
  $\phi$ and consider the instance over $A^+$ that is given by
\begin{equation}
\psi(x_1/a_{i,1},\ldots,x_r/a_{i,r},y_1/Y_1,\ldots,y_s/Y_s) \label{eqn:inshere}
\end{equation}
as described in the begining of this section.  Since the tuple
$(a_{i,1},\ldots,a_{i,r})$ is in $R$ and $\phi$ pp$^*$-defines $R$,
the instance in~\eqref{eqn:inshere} is satisfiable via operator
assignments over a finite-dimensional Hilbert space for every $i \in
[m]$. Let $B_{i,1},\ldots,B_{i,s}$ be such a satisfying operator
assignment for every $i \in [m]$. Since $I$ and $-I$ commute with any
operator, this means that
$a_{i,1}I,\ldots,a_{i,r}I,B_{i,1},\ldots,B_{i,s}$ is a satisfying
operator assignment of
\begin{equation}
\psi(x_1/X_1,\ldots,x_r/X_r,y_1/Y_1,\ldots,y_s/Y_s) \label{eqn:insthere}
\end{equation}
for every $i \in [m]$. Let $A_j = F(a_{1,j} I,\ldots,a_{m,j} I)$ and
$B_j = F(B_{1,j},\ldots,B_{m,j})$. As $F$ is a closure operation of
${A^+}^*$, the tuple $A_1,\ldots,A_r,B_1,\ldots,B_s$ is a satisfying
operator assignment for~\eqref{eqn:insthere}. Moreover, from the
`moreover' clause in Theorem~\ref{thm:closure} we know that $A_j =
f(a_{1,j},\ldots,a_{m,j}) I = a_j I$ for every $j \in [m]$. Thus, the
instance
\begin{equation}
\psi(x_1/a_1,\ldots,x_r/a_r,y_1/Y_1,\ldots,y_s/Y_s)
\end{equation}
is satisfiable via operator assignments over a finite-dimensional
Hilbert space; the finite-dimensional operator assignment
$B_1,\ldots,B_s$ satisfies it.  As $\phi$ pp$^*$-defines $R$, it
follows that $(a_1,\ldots,a_r)$ is in $R$, as was to be shown.
\end{proof}

\bigskip
\bigskip
\noindent\textbf{Acknowledgments.} We are grateful to Heribert Vollmer
for sharing with us Steffen Reith's diagram of Post's lattice
(Figure~\ref{fig:postslattice}). This work was initiated and part of the research was carried out while all three 
authors were in residence at the Simons Institute for the Theory of
Computing during the fall of 2016, where they  participated in the program on  Logical Structures in Computation. The research of Albert Atserias was partially funded by the European Research Council (ERC)
under the European Union's Horizon 2020 research and innovation
programme, grant agreement ERC-2014-CoG 648276 (AUTAR), and by MINECO
through TIN2013-48031-C4-1-P (TASSAT2); the research of Simone Severini was partially funded by The Royal Society,
Engineering and Physical Sciences Research Council (EPSRC), and the National
Natural Science Foundation of China (NSFC).

\bibliographystyle{plain}
\bibliography{qsat-long}

\end{document}